\newtheorem{Theorem}{Theorem}
\newtheorem{lem}{Lemma}
\newtheorem{cor}{Corollary}
\newtheorem{bem}{Remark}
\newtheorem{bsp}{Example}
\newcommand{\N}{\mathbb{N}}
\newcommand{\I}{\mathbb{I}}
\newcommand{\R}{\mathbb{R}}
\newcommand{\E}{\mathbb{E}}
\newcommand{\Prob}{\mathbb{P}}
\newcommand{\Var}{\textnormal{Var}}
\newcommand{\ignore}[1]{}
\def\multiset#1#2{\ensuremath{\left(\kern-.3em\left(\genfrac{}{}{0pt}{}{#1}{#2}\right)\kern-.3em\right)}}
\title{A two-sample test based on averaged Wilcoxon rank sums over interpoint distances}
\author{Annika BETKEN$^*$}
\author{Aljosa MARJANOVIC$^*$}
\author{Katharina PROKSCH$^*$}
\thanks{$^*$University of Twente, Drienerlolaan 5, 7522 NB Enschede, Netherlands}
\begin{document}

\pagestyle{plain}
	\begin{abstract}
		\noindent
		An important class of two-sample multivariate homogeneity tests is based on identifying   differences between the distributions of interpoint distances. While generating distances from point clouds offers a straightforward and intuitive way for dimensionality reduction, it also introduces dependencies  to the resulting distance samples. We propose a simple test based on Wilcoxon's rank sum statistic for which we prove asymptotic normality under the null hypothesis and fixed alternatives under mild conditions on the underlying distributions of the point clouds. Furthermore, we show  consistency of the test and derive a variance approximation that allows to construct a computationally feasible, distribution-free test with good finite sample performance. The power and robustness of the test 
		for high-dimensional data and low sample sizes
		is demonstrated by numerical simulations. Finally, we apply the proposed test to case-control testing on microarray data in genetic studies, which is considered   a  notorious case  for a high number of variables and low sample sizes. %Finally, we acknowledge the unresolved question regarding the dependency of variance estimation on dimensionality, warranting further investigation.
	\end{abstract}
	\maketitle	
\paragraph{Keywords}
		Distance distribution, Interpoint distances,   Microarray data, Two-sample homogeneity test, Wilcoxon rank sum test.

	\section{Introduction}
	Two-sample homogeneity testing, i.e.\ inferring whether two sets of observations stem from the same underlying distribution or not, is a standard statistical problem. Commonly used tests for  univariate observations are the parametric t-test, which assumes normally distributed data, and non-parametric alternatives based on the comparison of empirical distribution functions such as the Kolmogorov-Smirnov and Cram\'er-von Mises tests. Another robust, non-parametric test is given by the Wilcoxon rank sum test, which is based on the sum of the ranks of one set of observations within the pooled set of observations. %The latter is effectively related to the estimate of $\Prob (X < Y)$, where $X$ and $Y$ are independent instances of the groups.
	Major strengths of Wilcoxon's test are that it does not pose strong assumptions on the underlying distributions of samples. It may therefore be valid when the assumptions of other tests are not met, while, at the same time,  being asymptotically distribution free. Since the test exclusively uses information on the ranks of the data, it provides robustness against outliers, making it suited to deal with data from  heavy-tailed distributions. In the context of time series analysis, two-sample homogeneity tests are often interpreted as solutions to  change-point problems. Against this background, the Wilcoxon test has been well-studied for independent (\cite{gombay1998rank}), short-range dependent(\cite{gerstenberger2021robust}), as well as long-range dependent (\cite{dehling2013non}, \cite{dehling2017},  \cite{betken2016testing}) time series. 
	\\
	%%%
	While parametric tests are readily extendable to multivariate models (for example, MANOVA procedures or Hotelling's $T^2$ test), these extensions are sensitive to high-dimensionality of observations and  may not keep their nominal level or may even dramatically lose power if the assumptions posed on the data-generating distributions are not met. Since there is no natural order on $\R^D$ for $D>1$, there is no unique, straightforward way to 
	define ranks for multivariate observations and, as a consequence thereof, to
	extend Wilcoxon's test in a way that it can be used in a multivariate setting. 
	%A possible route is the use of a multivariate version of data ranks or data depth, however
	A multitude of different approaches to define multivariate ranks on the basis of statistical  depth  exist in the literature.
	These approaches are  as numerous as the  underlying concepts of data depth such as, e.g., Mahalanobis depth  (see  \cite{Mahalanobis}), halfspace depth \cite{tukey1975mathematics}, Oja's depth \cite{oja1983descriptive}, or simplicial depth \cite{LiuSimplicial}, and  have  been used in the context of statistical hypothesis testing
	(see, e.g., \cite{LiuSingh93}, \cite{Small12}, \cite{Shi23}, \cite{Oja94}, or \cite{Oja99}).
	Recent approaches to multivariate ranks and data depth as well as their application in two sample tests use the concept of optimal transport by defining the ranks of a sample relative to a reference distribution from the uniform family \cite{chernozhukov2017monge,hallin2021distribution,deb2023multivariate}.
	All these approaches pose assumptions on the shapes of the underlying distributions or their supports and may become  computationally too intense in high dimensions or for too large data clouds. Moreover, an immediate or unique generalization to more complex models, e.g. for functional data analysis 
	%(see \cite{Gnettner23} for a possible generalization of the Liu-Singh two-sample test and for further discussion),
	does not exist. 
	In such contexts, a variety of depth concepts has been proposed in the literature (see \cite{gijbels2017general} for a discussion of different functional depths and their properties, or \cite{geenens2023statistical} for a general metric approach to depth).
	For a detailed discussion see also  \cite{Gnettner23}.
	
	On the other hand, transforming data in a multivariate point cloud or in complex spaces to its one-dimensional depth information suggests to take the general route to use some form of dimensionality reduction as a pre-processing step before applying a well-established one-dimensional method to the transformed data. A widely applicable approach is the use of  interpoint distances --or dissimilarities-- as meaningful notions of (dis)similarity exist in very general contexts.\\
	In two sample homogeneity testing, interpoint distances  within and between samples can be used as the distinguishing quantity to base a test on. The conditions for the equivalence of testing for equal distributions in the multivariate case and univariate testing for the equality of distance (or, more generally, point disimilarity functions) distributions are provided by the theorem of Maa et al. \cite{Maa96} and its adjustment by Montero-Manso and Villar  \cite{montero19}. There have been many approaches to comparing distance distributions in the literature. The class of interpoint distance based tests include energy statistics test by \cite{szekely04,baringhaus2004new}, the maximum mean discrepancy (MMD) based test by Gretton et al. \cite{gretton06}, k-nearest neighbors based tests of Henze \cite{henze88} and rank based generalization of Wilcoxon's test proposed by \cite{Liu22}. The latter, and similarly the energy distance and MMD tests, correspond to a degenerate $U$-statistics whose null distribution is therefore a quadratic form of Gaussian random variables, which necessitates the estimation of p-values by bootstrapping procedures. \\
	Distance-based approaches allow for analysis of data in general metric spaces, even if the data are modelled as metric measure spaces themselves, which is a commonly used model for general object comparisons as in \cite{Brecheteau,gellert2019substrate,Weitkamp}. 
	A general overview and further related interpoint distance based tests are given in \cite{montero19}. \\
	
	\noindent
	In this paper, we derive a two-sample homogeneity test based on interpoint distances.
	Our proposed test statistic is effectively a $U$-statistic-like estimate of $\Prob (d(X_1,X_2) < d(X_3,Y))$, where $d$ is a distance and $X_1,X_2,X_3\sim F_X, Y\sim F_Y$ are independent random variables on a common metric space $(\mathcal{X},d)$.%,  which equals $\frac{1}{2}$ under the null hypothesis that $F_X=F_Y$.   
	The  distance samples used by our test are a substantial subset of all pair-wise distances with a particular dependence structure. We show the connection of the test statistic to the averaged Wilcoxon statistics over independent subsampling of the interpoint distances, which allows to derive our asymptotic theory. The averaged  Wilcoxon rank sum statistics test was introduced by Datta and Satten \cite{Datta05} for paired data under a simpler dependence structure, namely subsampling over pairwise independent dependency clusters, based on the resampling idea of Hoffman et al. \cite{Hoffman01}. We show the asymptotic normality of the test statistic under the null hypothesis and fixed alternatives under mild conditions on primary point-cloud distributions by applying a central limit theorem for locally dependent random variables, described by a dependency graph \cite{Austern22}. While the asymptotic distribution of the test statistic  could as well be approached  by asymptotic theory for $U$-statistics, the direct application of this theorem, rather than its corollary regarding the convergence of the related $U$-statistic or equivalent statements, allows us to consider specific dependence structure of occuring interpoint distance comparisons, which remains invariant between the null hypothesis and alternatives. In contrast to the Wilcoxon test based on independent data the variance in the limit distribution of our test statistic depends on the distribution of the point clouds and is hence not distribution-free. However, the variation in this variance is only very mild (see Section \ref{sec:numerical P1}). 
	In order to construct a test which is readily applicable, we 
	further derive an asymptotic estimate of the variance of the test statistic under the null hypothesis  and provide a distribution free  upper bound on this estimate that holds without posing any extra assumptions. We show that the proposed test is consistent. \\ 
	Test power and size are examined in Monte Carlo simulations in diverse location and scale problems under low sample size and compared to other parametric and interpoint distance based tests. In particular, its robustness to dimensionality has been evidenced by simulations making our test particularly suitable for high-dimensional settings compared to its competitors. As a proof of concept, we illustrate the feasibility of our test on a real data example.
	To this end, we show that the  results of our averaged Wilcoxon test are consistent with the results of a differential expression study of high-dimensional genomic microarray measurements. \\
	The outline of this article is as follows. In Section \ref{Sec:Motivation} we introduce and derive the necessary quantities for the implementation of the test, show the tests connection to the Wilcoxon rank sum test and motivate the variance approximation. The theoretical results, the estimate of the variance of the test statistics and its upper bound and asymptotic normality and are presented in Section \ref{Sec:Res}. The soundness of the assumptions of the theorems on the underlying distributions are illustrated by examples. The results of Monte Carlo simulations are given in Section \ref{Sec:Monte Carlo} and microarray data analyzed in Section \ref{Sec:RealData}. The caveats and open problems are then discussed in Section \ref{Sec:Conclusion}.\\
	
	\noindent
	\textit{Notation.} Metric (or a distance function) is denoted by $d$ while $D \in \N$ denotes the dimension of the underlying space. Multiindices are denoted with $\mathbf{i}$. If not otherwise specified, it is $\mathbf{i} \in \N^4$, i.e. $\mathbf{i}= (i_1, i_2, i_3, i_4)$. Given a multiindex $\mathbf{i}$, the quantity $\I_{\mathbf{i}}$ abbreviates $\I_{\{ d(X_{i_1}, X_{i_2}) \leq d(X_{i_3}, Y_{i_4}) \} }$.  Given $f,g \underset{x \rightarrow \infty}{ \rightarrow} \infty $, we write $f \sim g$ iff $ \frac{f}{g} \underset{x \rightarrow \infty}{ \rightarrow} C < \infty $ with $C>0$.
	
	\section{Mathematical Preliminaries}\label{Sec:Motivation}

	We are interested in the general multivariate two-sample homogeneity problem: 
	Given i.i.d.  observations $X_1, \ldots, X_n$ in $\mathbb{R}^D$ from a population with distribution function $F_X$  and i.i.d. observations $Y_{n+1}, \ldots, Y_{n+m}$ in $\mathbb{R}^D$  from a population with distribution function $F_Y$, our goal is to test
	\begin{equation} \label{eqn::testprob0}
		\mathcal{H}_0: F_X = F_Y \textnormal{    vs.   } \mathcal{H}_1: F_X\neq  F_Y, 
	\end{equation}
	%having samples of $X$ and $Y$ in $\R^d$
	%we wish to test 
	%\begin{equation*}
	%   \mathcal{H}_0: X \sim Y \textnormal{    vs.   } \mathcal{H}_1: X\not \sim Y
	%\end{equation*}}
	\noindent i.e. we wish to infer whether the two samples are generated by the same probability distribution. For this purpose,  a distinguishing quantity is needed to base a test on. As motivated in the introduction, we will use  interpoint distances to transform  potentially high-dimensional i.i.d.\ data to a univariate sample of dependent interpoint distances.
	For a given distance function 
	$d: \R^D \times \R^D  \rightarrow \left[ 0, \infty \right) $ we distinguish within-sample interpoint distances, i.e. distances  $d(X_i, X_j)$, $1\leq i, j\leq n$,  or  $d(Y_i, Y_j)$, $n+1\leq i, j\leq n+m$, and between-samples interpoint distances, i.e.\  
	distances  $d(X_i, Y_j)$, $1\leq i\leq n$,
	$n+1\leq j\leq n+m$.
	\noindent
	Given two samples drawn from distinct distributions $F_X$ and $F_Y$, we expect the within-sample interpoint distances with respect to $F_X$, the within-sample interpoint distances with respect to $F_Y$, and the between-samples interpoint distances to follow distinct
	distributions. 
	%\AB{Habe hier nochmal spezifiziert, da es ja durchaus sein kann, dass "within sample interpoint distances with respect to $F_X$" und "between samples interpoint distances" derselben Verteilung folgen, aber $H_0$ nicht gilt (siehe Maa's Satz.)}
	In fact, Theorem 2 in \cite{Maa96} provides mild sufficient conditions on data distributions and distance functions for this to hold true. 
	More precisely, the theorem states the following:
	Let $X_1, X_2, X_3$  be i.i.d. random vectors with   Lebesgue probability density $f_X$, let $Y_1, Y_2, Y_3$ be i.i.d. random vectors with   Lebesgue probability density $f_Y$ and let $X_1, X_2,X_3$, $Y_1, Y_2,Y_3$ be independent. Moreover, assume that
	\begin{enumerate}[start=1,label={(\bfseries R\arabic*):}]
		\item $\int_{\mathbb{R}^D}f^2_X(x) dx, \int_{\mathbb{R}^D}f^2_Y(y)dy<\infty$,
		\item The zero vector is a Lebesgue point of the function $u(y)= \int f_Y(x+y) f_X(x) dx$, i.e. it holds that $\frac{1}{\lambda(B_r(0))} \int_{B_r(0)} |u(y) - u(0)| d y \underset{r \rightarrow 0}{\rightarrow} 0$, where $B_r(x)$ denotes the ball in $\mathbb{R}^d$ with radius $r$ around $x$ and $\lambda$ the Lebesgue measure on $\mathbb{R}^D$.
	\end{enumerate}
	Then, if
	\begin{enumerate}[start=1,label={(\bfseries D\arabic*):}]
		\item $d(x,y)= 0$ if, and only if,  $x=y$,
		\item for all $a \in \R$ and $x,y,b \in \R^D$ $d(ax+b, ay+b) = |a|d(x,y)$,
	\end{enumerate}
	it holds that
	\begin{equation*}
		f_X = f_Y \textit{\quad if, and only if,  \quad } d(X_1, X_2) \overset{\mathcal{D}}{=} d(Y_1, Y_2) \overset{\mathcal{D}}{=}d(X_3, Y_3).
	\end{equation*}
	%where the r.vs. on the r.h.s. of the equivalence are induced by distributions $X$, $Y$ and distance $d$, i.e. $d_{XY} := d(X,Y)$, $d_X := d(X,X')$ and $d_Y := d(Y, Y')$, where $X'$ and $Y'$ are independent instances of $X$ and $Y$.
	According to this result, it suffices to identify a change in distribution from  within-sample interpoint distances to between-samples interpoint distances in order to reject the hypothesis $\mathcal{H}_0: F_X=F_Y$. In order to derive a test for \eqref{eqn::testprob0},
	we may therefore as well consider the test problem
	\begin{equation} \label{eqn::testprob}
		\mathcal{H}_0: d_X \overset{\mathcal{D}}{=}d_{XY}  \textnormal{    vs.   } \mathcal{H}_1: d_X \overset{\mathcal{D}}{\neq}d_{XY}, 
	\end{equation}
	where $d_X:=d(X_1, X_2)$ and $d_{XY}:=d(X_3, Y_1)$
	for i.i.d. random vectors $X_1, X_2, X_3$ with Lebesgue density $f_X$ and $Y_1$ a random vector with Lebesgue density $f_Y$.
	%Instead of comparing $X$ and $Y$ directly, we would perform two comparisons of distances to check if $d_{XY} \sim d_X$ and $d_{XY} \sim d_Y$, for example. We are therefore interested and further consider the test problem 
	%\begin{equation} \label{eqn::testprob}
	%    \mathcal{H}_0: d_X \sim d_{XY} \textnormal{    vs.   } \mathcal{H}_1: d_X\not \sim d_{XY}.
	%\end{equation}
	%Since distances between multivariate observations correspond to $\mathbb{R}$-valued  quantities, 

	In order to obtain a robust test under minimal conditions, we aim to base a test decision  for the above testing problem on a version of  the two-sample Wilcoxon rank sum test, applied to  the   within-sample interpoint distances and the between-samples interpoint distances. To this end, we denote the given point clouds in $\R^D$ by $\mathcal{X}=(X_i)_{i = 1}^n$ and $\mathcal{Y}=(Y_j)_{j= n+1}^{n+m}$. Moreover, we introduce the index set
	\begin{align*}
		I_{\mathcal{X}, \mathcal{X}}:=\{(i_1, i_2) | 1\leq i_1 < i_2 \leq n\}
	\end{align*}
	to label all distances within $\mathcal{X}$ and the index set 
	\begin{equation*}
		\ I_{\mathcal{X}, \mathcal{Y}}:=\{(i_1, i_2) | i_1 \in \left[ n \right],  i_2 \in \left[ n + m \right] \setminus \left[ n \right] \}
	\end{equation*}
	to label all distances between points in $\mathcal{X}$ and points in $\mathcal{Y}$. For $\mathbf{i}=(i_1, i_2) \in  I_{\mathcal{X}, \mathcal{X}}$ and $ \mathbf{j}=(j_1, j_2) \in  I_{\mathcal{X}, \mathcal{Y}}$ we write $d_{\mathbf{i}} = d(X_{i_1}, X_{i_2}) $ and $d_{\mathbf{j}} = d(X_{j_1}, Y_{j_2}) $. 
	%Based on this notation the Wilcoxon two-sample statistics is defined by 
	%%\begin{equation}
	%   W =\sum\limits_{\mathbf{j}\in  I_{\mathcal{X}, \mathcal{Y}}}R_{\mathbf{j}}
	%\end{equation}
	%where $R_{\mathbf{j}}$ denotes the rank of $d_{\mathbf{j}}$ among all interpoint distances, i.e.
	%\begin{equation}
	%    \ R_{\mathbf{j}}= \sum_{\mathbf{i} \in    I}\I_{ \{d_{\mathbf{i}} \leq d_{\mathbf{j}} \} }.
	%\end{equation}
	
	\subsection{The test statistic}
	
	Computing the Wilcoxon two-sample test statistic with respect to all interpoint distances 
	induces dependencies which make the  theoretical analysis of a corresponding hypothesis test particularly hard.
	Another approach to the testing problem   could instead be based on independent interpoint distances only. In this particular case, the original Wilcoxon test could be used without any adjustments. However, in terms of power, a test based on all distances will in general clearly outperform a test based on independent distances.
	To formalize our approach, which considers most but not all interpoint distances, we define 
	the set of all index sets corresponding to a set of  $r$ independent within-sample and between-samples distances by
	\begin{equation*}
		\mathcal{I}_r^{(n,m)}:=\left\{I\subset  I_{\mathcal{X}, \mathcal{X}} \cup  I_{\mathcal{X}, \mathcal{Y}}|
		\# \left(I\cap I_{\mathcal{X}, \mathcal{X}}\right)= \# \left(I\cap I_{\mathcal{X}, \mathcal{Y}}\right)=r, \textnormal{ such that } \forall \ \mathbf{i}, \mathbf{j}\in I \ \mathbf{i}\cap 
		\mathbf{j}=\emptyset\right\},
	\end{equation*}
	where, with a slight abuse of notation, $\mathbf{i}\cap \mathbf{j}=\{i_1, i_2\} \cap  \{j_1, j_2\}$. 
	To include as much information on the distribution of interpoint distances as possible in the test decision, 
	we aim at choosing  $r$ as large as possible. 
	%Assuming $\lfloor\frac{n}{3}  \rfloor \leq m \leq n$ 
	A maximal choice of $r$ corresponds to $r = \lfloor\frac{n}{3}  \rfloor$. Each set $I \in \mathcal{I}_r^{(n,m)}$ corresponds to a set of distances 
	\begin{equation*}
		D(I):=\left\{d_{\mathbf{i}}|\mathbf{i}\in I\cap I_{\mathcal{X}, \mathcal{X}}\right\}\cup \left\{d_{\mathbf{j}}| \mathbf{j}\in I\cap I_{\mathcal{X}, \mathcal{Y}}\right\}.
	\end{equation*}
	Accordingly,  a choice of $I$ 
	corresponds to a set $D(I)$ of independent instances of $d_X$ and $d_{XY}$.
	The corresponding two-sample Wilcoxon statistic restricted to the set $I\in\mathcal{I}_r^{(n,m)}$ is given by
	\begin{equation}
		W^I =\sum\limits_{\mathbf{j}\in I\cap I_{\mathcal{X}, \mathcal{Y}}}R_{\mathbf{j}}^I
	\end{equation}
	where $R_{\mathbf{j}}^I$ is the rank of $d_{\mathbf{j}}$ in $D(I)$,i.e.
	\begin{equation}
		\ R_{\mathbf{j}}^I= \sum_{\mathbf{i} \in  I}\I_{ \{d_{\mathbf{i}} \leq d_{\mathbf{j}} \} }.
	\end{equation}

	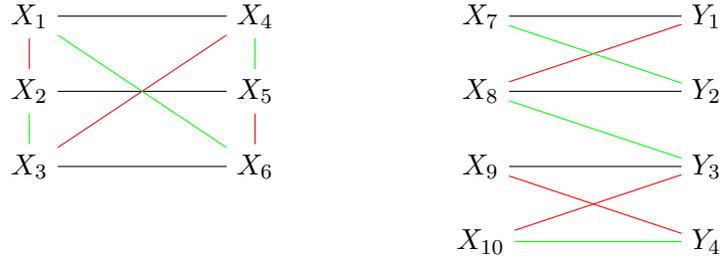
\begin{figure}
		\centering
		\begin{tikzpicture}
			
			% Nodes X_1 to X_9 (Column 1)
			\foreach \x [count=\i] in {1,2,3}
			\node (X\x) at (0, -\i) {$X_{\x}$};
			
			% Nodes X_4 to X_6 (Column 2)
			\foreach \x [count=\i] in {4,5,6}
			\node (X\x) at (3, -\i) {$X_{\x}$};
			
			% Nodes X_7 to X_9 (Column 3)
			\foreach \x [count=\i] in {7,8,9, 10}
			\node (X\x) at (6, -\i) {$X_{\x}$};
			
			% Nodes Y_1 to Y_4 (Column 4)
			\foreach \y [count=\i] in {1, ..., 4}
			\node (Y\y) at (9, -\i) {$Y_{\y}$};
			
			\usetikzlibrary {graphs}
			\graph {
				(X1) -- (X4), (X2) -- (X5), (X3) -- (X6), (X7) -- (Y1), (X8) -- (Y2), (X9) -- (Y3),
				(X1) --[color=red] (X2), (X5) --[color=red] (X6), (X3) --[color=red] (X4), (X9) --[color=red] (Y4),  (X8) --[color=red] (Y1),  (X10) --[color=red] (Y3),
				(X1) --[color=green] (X6), (X4) --[color=green] (X5), (X2) --[color=green] (X3), (X7) --[color=green] (Y2), (X10) --[color=green] (Y4), (X8) --[color=green] (Y3)
			};

		\end{tikzpicture}

		\caption{An example of three different independent subsamples $D_1, D_2, D_3\in \mathcal{D}$ for $n=10$ and $m=4$ depicted in red, green and black.}
		\label{fig:Subsampling}
	\end{figure}
	\noindent
	Separating randomness from the sampling of distances and the randomness of observations,  where we assume that all samplings described above are equally probable, we finally define the test statistic 
	\begin{equation} \label{eqn::original_T}
		\mathcal{T}^{(n,m)}:= \E \left[ W^I  | I \in \mathcal{I}_r^{(n,m)} \right].
	\end{equation}
	Since for any  $\mathbf{j} \in \left[ n \right] \times ( \left[ n+m \right] \setminus \left[ n \right])$
	\begin{equation*}
		\E \left[ R_{\mathbf{j}}^I| I \in \mathcal{I}_r^{(n, m)}  \right] = \frac{1}{\#\mathcal{I}_r^{(n, m)}}{\underset{I \in \mathcal{I}_r^{(n, m) } \textnormal{ s.t. } \mathbf{j} \in I}{\sum}}   R_{\mathbf{j}}^{I}, 
	\end{equation*}
	we have 
	\begin{align*}
		\mathcal{T}^{(n,m)} &= \frac{1}{\#\mathcal{I}_r^{(n, m)}} \sum_{I \in \mathcal{I}_r^{(n, m)}} \sum_{ \mathbf{j} \in I \cap I_{\mathcal{X}, \mathcal{Y}}} \sum_{\mathbf{i} \in I}\I_{ \{d_{\mathbf{i}} \leq d_{\mathbf{j}} \} } \\
		&=\lfloor\frac{n}{3}  \rfloor + \frac{1}{\#\mathcal{I}_r^{(n, m)}} \sum_{I \in \mathcal{I}_r^{(n, m)}} \sum_{ \mathbf{j} \in I \cap I_{\mathcal{X}, \mathcal{Y}}} \underset{\mathbf{j} \neq \mathbf{i}}{\sum_{\mathbf{i} \in I}}\I_{ \{d_{\mathbf{i}} \leq d_{\mathbf{j}} \} }.
	\end{align*}
	
	\noindent
	For $\mathbf{i}, \mathbf{j}  \in \left[ n\right] \times \left[ n + m\right]$, we define the set 
	\begin{equation}\label{def::set_d}
		\mathfrak{d}^{(n,m)}(\mathbf{i}, \mathbf{j}) :=  \{ I \in  \mathcal{I}_r^{(n, m)} |  \mathbf{i}, \mathbf{j} \in  I, \mathbf{j} \in  I_{\mathcal{X}, \mathcal{Y}}\}
	\end{equation}
	which collects all the distance samplings $I$ which  entail the comparison $d_{\mathbf{i}} \leq d_{\mathbf{j}}$. There are only three possibilities for the number of elements in $\mathfrak{d}^{(n,m)}$. The first is,  $d_{\mathbf{i}}$ and $d_{\mathbf{j}}$ are never compared, which is the case if the tupels $\mathbf{i}$ and $\mathbf{j}$ share an index. Then the set is empty and thus $\# \mathfrak{d}^{(n,m)}(\mathbf{i}, \mathbf{j}) = 0$.  Otherwise, if the set (\ref{def::set_d}) is non-empty, due to combinatorial symmetry, i.e. each comparison occurring equally often, its cardinality depends only on the type of comparison. We therefore define $\# \mathfrak{d}_1^{(n,m)} := \# \mathfrak{d}^{(n,m)}(\mathbf{i}, \mathbf{j})$ for $\mathbf{i}\in I_{\mathcal{X}, \mathcal{X}}$, $ \mathbf{j} \in I_{\mathcal{X}, \mathcal{Y}}$  and  $\# \mathfrak{d}_2^{(n,m)} := \# \mathfrak{d}^{(n,m)}(\mathbf{i}, \mathbf{j})$ for $\mathbf{i}, \mathbf{j} \in I_{\mathcal{X}, \mathcal{Y}}$. 
	%We simplify the notation by introducing 4-multiindices for the sum $T^{(n,m)}$. Let $\mathbf{i}$ further denote a 4-mutiindex. 
	\\
	\noindent
	To simplify notation, we define index sets $I_1^{(n,m)}$ and $I_2^{(n,m)}$ corresponding to all admissible within-$\mathcal{X}$  distance comparisons for our test as 
	\begin{equation}\label{eq:I1}
		I_1^{(n,m)} := \{ \mathbf{i}=(i_1, i_2, i_3, i_4) \in \N^4 | i_1, i_2, i_3 \in \left[n\right], i_4 \in \left[ n+m\right] \setminus \left[ n \right], \textnormal{ where } i_1 < i_2, i_1 \neq i_3, i_2 \neq i_3 \}
	\end{equation}
	and between-$\mathcal{X}$-and-$\mathcal{Y}$
	\begin{equation}
		I_2^{(n,m)} := \{ \mathbf{i}=(i_1, i_2, i_3, i_4)  \in \N^4 | i_1, i_3 \in \left[n\right], i_2, i_4 \in \left[ n+m\right] \setminus \left[ n \right], \textnormal{ where } i_1 \neq i_3, i_2 \neq i_4\}.
	\end{equation}
	We further consider $4$-multiindices, which we denote with $\mathbf{i}$. Moreover, in the following, we write $\I_{\mathbf{i}}:=  \I_{\{d(X_{i_1}, X_{i_2}) \leq d(X_{i_3}, Y_{i_4})\}}$ for $\mathbf{i} \in I_1^{(n,m)}$ and 
	$\I_{\mathbf{j}}:=  \I_{\{d(X_{j_1}, Y_{j_2}) \leq d(X_{j_3}, Y_{j_4})\}}$ for $\mathbf{j} \in I_2^{(n,m)}$. Accordingly, since we consider balanced index sets, we can rewrite $\mathcal{T}^{(n,m)}$ as the following rank sum statistic
	\begin{equation}\label{eqn::TT}
		\mathcal{T}^{(n,m)} = \lfloor\frac{n}{3}  \rfloor + C_1^{(n,m)}  \sum_{\mathbf{i} \in I_1^{(n,m)}} \I_{\mathbf{i}} +  C_2^{(n,m)}  \sum_{\mathbf{j} \in I_2^{(n,m)}} \I_{\mathbf{j}},
	\end{equation}
	where
	\begin{equation} \label{eqn::C1C2}
		C_1^{(n,m)} := \frac{\# \mathfrak{d}_1^{(n,m)} }{\# \mathcal{I}_r^{(n, m)}}  \quad \text{and} \quad  C_2^{(n,m)} := \frac{\# \mathfrak{d}_2^{(n,m)} }{\# \mathcal{I}_r^{(n, m)}}.
	\end{equation}
	The sizes of $ I_1^{(n,m)}$ and  $I_2^{(n,m)}$ and the corresponding constants  $C_1^{(n,m)}$ and $C_2^{(n,m)}$ are calculated in Section \ref{subsec::combinatorial_quantities}. \\
	
	\noindent
	\textit{Properties of the test statistic.} We further provide some computations that prepare our asymptotic considerations. We start by rewriting the test statistic $\mathcal{T}^{(n,m)}$ in terms of a centered sequence and introduce a graph encoding the  dependencies occurring when interpoint distances are compared. Based on this, we will subsequently compute the variance of the test statistic to prepare the derivation of  a distribution-free bound later on. 
	To this end, let $I^{(n,m)} := I_1^{(n,m)} \cup I_2^{(n,m)}$ and define
	$\mathbf{E}_1:= \Prob \left(d(X_1, X_2) \leq d(X_3, Y_1)\right)$ and $\mathbf{E}_2:= \Prob \left(d(X_1, Y_1) \leq d(X_2, Y_2)\right)$, where $X_1, X_2, X_3, Y_1, Y_2$ are independent random vectors,  $X_1, X_2, X_3$  with distribution function $F_X$ and  $Y_1, Y_2$  with distribution function $F_Y$. We define
	the triangular array $\mathcal{Z}=\{\mathcal{Z}_{\mathbf{i}}^{(n,m)}\,|\,\mathbf{i}\in I^{(n,m)}\}$, where the $\mathcal{Z}_{\mathbf{i}}^{(n,m)}$ are given by 
	\begin{equation} \label{eqn::triangle}
		\mathcal{Z}_{\mathbf{i}}^{(n,m)} := \begin{cases}
			C_1^{(n,m)}  ( \I_{\mathbf{i}} - \mathbf{E}_1)  &  \text{ for } \mathbf{i} \in I_1^{(n,m)} \\
			C_2^{(n,m)} (  \I_{\mathbf{i}} - \mathbf{E}_2 ) & \text{ for } \mathbf{i} \in I_2^{(n,m)} 
		\end{cases}
	\end{equation}
	such that 
	\begin{equation*}
		\mathcal{T}^{(n,m)} - \E  \mathcal{T}^{(n,m)} = \sum_{\mathbf{i} \in I^{(n,m)}} \mathcal{Z}_{\mathbf{i}}^{(n,m)}.
	\end{equation*}
	The random variables in  $\mathcal{Z}$ are  dependent. This is because the sets $I_1^{(n,m)}$ and $I_2^{(n,m)}$ contain almost all possible interpoint distance comparisons. In the following, we will describe these dependencies by means of a \textit{graph}. \\
	\noindent
	The relation in our case is the pairwise dependence of random vectors. We define the graph of dependencies within the array $(\mathcal{Z}_{\mathbf{i}}^{(n,m)})_{\mathbf{i} \in I^{(n,m)}}$ as $G^{(n,m)} = (I^{(n,m)}, E^{(n,m)})$, where the relation $E^{(n,m)}$ is defined by %$(\mathbf{i}, \mathbf{j}) \in E_{n,m}$ if $\mathbf{i}\neq\mathbf{j}$ and $\mathcal{Z}_{\mathbf{i}}^{(n,m)}$ and $\mathcal{Z}_{\mathbf{j}}^{(n,m)}$ are dependent, i.e 
	\begin{equation*}
		(\mathbf{i}, \mathbf{j}) \in E^{(n,m)} :\iff \left(\mathbf{i} \cap \mathbf{j} \neq \emptyset \ \wedge \ \mathbf{i} \neq \mathbf{j}\right)
	\end{equation*}
	where, again with a slight abuse of notation, $\mathbf{i}\cap \mathbf{j}=\{i_1, i_2, i_3, i_4\} \cap  \{j_1, j_2, j_3, j_4\}$. This graph, which is also referred to as \textit{dependency graph} in the literature \cite{Ross2011, Austern22}, is  undirected since $ (\mathbf{i}, \mathbf{j}) \in E^{(n,m)} $ is equivalent to   $(\mathbf{j}, \mathbf{i}) \in E^{(n,m)}$. Moreover, it does not allow for self-loops, i.e. $(\mathbf{i},\mathbf{i}) \notin E^{(n,m)}$. 
	The variance of $\mathcal{T}^{(n,m)}$ can then be expressed as
	\begin{align*}
		\Var \left(\mathcal{T}^{(n,m)}\right) &= \sum_{\mathbf{i}, \mathbf{j} \in I^{(n,m)}} \E \left[ \mathcal{Z}_{\mathbf{i}}^{(n,m)} \mathcal{Z}_{\mathbf{j}}^{(n,m)} \right] \\
		&= \sum_{\mathbf{i} \in I^{(n,m)}} \E \left[ (\mathcal{Z}_{\mathbf{i}}^{(n,m)})^2 \right] + \sum_{(\mathbf{i}, \mathbf{j}) \in E^{(n,m)}} \E \left[ \mathcal{Z}_{\mathbf{i}}^{(n,m)} \mathcal{Z}_{\mathbf{j}}^{(n,m)} \right].
	\end{align*}
	The previous preliminary computations now allow us to derive a more explicit  representation for the variance of our test statistic under the null hypothesis.
	\begin{lem}
		Under the null hypothesis, we have 
		\begin{align*}
			\Var\left(\mathcal{T}^{(n,m)}\right) =& (\mathbf{E}_1-\mathbf{E}_1^2)\left[ \# I_1^{(n,m)} (C_1^{(n,m)})^2 +  \# I_2^{(n,m)} (C_2^{(n,m)})^2 \right] \\
			&+ \left[(1-\mathbf{E}_1^2) \cdot \Prob(\I_{\mathbf{i}} = \I_{\mathbf{j}}=1)+ 
			\mathbf{E}_1^2\cdot\Prob(\I_{\mathbf{i}} = \I_{\mathbf{j}}=0)
			-\mathbf{E}_1(1-\mathbf{E}_1)\cdot\Prob( \I_{\mathbf{i}} \neq \I_{\mathbf{j}}) \right] \\
			&\qquad\qquad\times\begin{cases}
				(C_1^{(n,m)})^2 & \text{ for } \ \mathbf{i}, \mathbf{j}  \in I_1^{(n,m)} \\
				(C_2^{(n,m)})^2 & \text{ for } \ \mathbf{i}, \mathbf{j}  \in I_2^{(n,m)} \\
				C_1^{(n,m)}C_2^{(n,m)} & \text{ for } \ \mathbf{i}  \in I_1^{(n,m)},  \ \mathbf{j} \in  I_2^{(n,m)}.
			\end{cases}
		\end{align*}
	\end{lem}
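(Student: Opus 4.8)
The plan is to evaluate the two sums in the variance decomposition recorded just above the statement,
\begin{equation*}
	\Var\left(\mathcal{T}^{(n,m)}\right) = \sum_{\mathbf{i} \in I^{(n,m)}} \E\left[\left(\mathcal{Z}_{\mathbf{i}}^{(n,m)}\right)^2\right] + \sum_{(\mathbf{i}, \mathbf{j}) \in E^{(n,m)}} \E\left[\mathcal{Z}_{\mathbf{i}}^{(n,m)} \mathcal{Z}_{\mathbf{j}}^{(n,m)}\right],
\end{equation*}
separately, using $\mathcal{H}_0$ only through the single structural fact that it equalises all marginal comparison probabilities. Indeed, under $F_X = F_Y$ the vectors entering any indicator are i.i.d., so the two distances being compared are independent and identically distributed; since the underlying distributions possess Lebesgue densities the distances are continuous, ties occur with probability zero, and hence $\mathbf{E}_1 = \mathbf{E}_2 = \tfrac12$. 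In particular every $\I_{\mathbf{i}}$, $\mathbf{i}\in I^{(n,m)}$, is Bernoulli with the common parameter $\mathbf{E}_1$.

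For the diagonal sum I would factor the constant out of \eqref{eqn::triangle}: for $\mathbf{i} \in I_1^{(n,m)}$ one gets $\E[(\mathcal{Z}_{\mathbf{i}}^{(n,m)})^2] = (C_1^{(n,m)})^2 \Var(\I_{\mathbf{i}}) = (C_1^{(n,m)})^2(\mathbf{E}_1 - \mathbf{E}_1^2)$, and analogously $(C_2^{(n,m)})^2(\mathbf{E}_2 - \mathbf{E}_2^2) = (C_2^{(n,m)})^2(\mathbf{E}_1-\mathbf{E}_1^2)$ for $\mathbf{i} \in I_2^{(n,m)}$, where the last equality is exactly the point at which $\mathbf{E}_2 = \mathbf{E}_1$ is used. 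Summing over the $\# I_1^{(n,m)}$ and $\# I_2^{(n,m)}$ indices of each type produces the first line of the claim.

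For the edge sum I would fix a single edge $(\mathbf{i}, \mathbf{j}) \in E^{(n,m)}$ and pull out the constants, so that $\E[\mathcal{Z}_{\mathbf{i}}^{(n,m)} \mathcal{Z}_{\mathbf{j}}^{(n,m)}]$ is the product of constants --- equal to $(C_1^{(n,m)})^2$, $(C_2^{(n,m)})^2$ or $C_1^{(n,m)} C_2^{(n,m)}$ according to whether both endpoints lie in $I_1^{(n,m)}$, both in $I_2^{(n,m)}$, or one in each --- times the centred covariance $\E[(\I_{\mathbf{i}} - \mathbf{E}_1)(\I_{\mathbf{j}} - \mathbf{E}_1)]$. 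This covariance I would evaluate by conditioning on the four possible joint values of $(\I_{\mathbf{i}}, \I_{\mathbf{j}})$: the product $(\I_{\mathbf{i}} - \mathbf{E}_1)(\I_{\mathbf{j}} - \mathbf{E}_1)$ equals $(1-\mathbf{E}_1)^2$ on $\{\I_{\mathbf{i}} = \I_{\mathbf{j}} = 1\}$, equals $\mathbf{E}_1^2$ on $\{\I_{\mathbf{i}} = \I_{\mathbf{j}} = 0\}$, and equals $-\mathbf{E}_1(1-\mathbf{E}_1)$ on $\{\I_{\mathbf{i}} \neq \I_{\mathbf{j}}\}$, which reproduces the bracket in the statement. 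Summing over $E^{(n,m)}$, with the joint probabilities read off edge by edge (they encode precisely how $\mathbf{i}$ and $\mathbf{j}$ overlap), yields the second term.

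The two moment expansions are routine; the step that deserves genuine care is the reduction furnished by $\mathcal{H}_0$. The main obstacle I anticipate is justifying that the marginal means truly coincide for every index type --- which is where exchangeability of the pooled sample, together with the absence of ties guaranteed by the density hypothesis, is indispensable --- and, relatedly, bearing in mind that the bracket is not a single constant but an edge-dependent quantity, so that the compact display is to be understood as a sum over $E^{(n,m)}$ with the appropriate product of constants attached to each edge.
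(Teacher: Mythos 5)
Your proof is correct and matches the paper's own argument essentially step for step: the same diagonal-plus-edge decomposition of $\Var\left(\mathcal{T}^{(n,m)}\right)$, the same reduction via $\mathbf{E}_1=\mathbf{E}_2$ under $\mathcal{H}_0$, and the same three-case evaluation of $(\I_{\mathbf{i}}-\mathbf{E}_1)(\I_{\mathbf{j}}-\mathbf{E}_1)$, giving the coefficients $(1-\mathbf{E}_1)^2$, $\mathbf{E}_1^2$ and $-\mathbf{E}_1(1-\mathbf{E}_1)$ (note this first coefficient is $(1-\mathbf{E}_1)^2$, not the $(1-\mathbf{E}_1^2)$ printed in the lemma, which is evidently a typo in the statement that the paper's own proof likewise corrects implicitly). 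The only blemish is your parenthetical claim that $\mathbf{E}_1=\mathbf{E}_2=\tfrac{1}{2}$: the lemma assumes neither Lebesgue densities nor absence of ties, so that value is unwarranted at this level of generality, but since your computation only ever uses the equality $\mathbf{E}_1=\mathbf{E}_2$ and keeps $\mathbf{E}_1$ general throughout, nothing is lost.
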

	\begin{proof}
		Under $\mathcal{H}_0$ we have  $\mathbf{E}_1=\mathbf{E}_2$.
		It therefore holds that
		\begin{equation*}
			\E \left[ (\mathcal{Z}_{\mathbf{i}}^{(n,m)})^2 \right] = \Var\left(\I_{\mathbf{i}}\right) \begin{cases}
				(C_1^{(n,m)})^2 & \text{ for } \mathbf{i} \in I_1^{(n,m)} \\
				(C_2^{(n,m)})^2 & \text{ for } \mathbf{i} \in I_2^{(n,m)} 
			\end{cases}
			= \mathbf{E}_1(1-\mathbf{E}_1) \begin{cases}
				(C_1^{(n,m)})^2 & \text{ for } \mathbf{i} \in I_1^{(n,m)} \\
				(C_2^{(n,m)})^2 & \text{ for } \mathbf{i} \in I_2^{(n,m)} 
			\end{cases}
		\end{equation*}
		and
		\begin{equation*}
			\E \left[ \mathcal{Z}_{\mathbf{i}}^{(n,m)} \mathcal{Z}_{\mathbf{j}}^{(n,m)} \right] = \E \left[ \I_{\mathbf{i}} \I_{\mathbf{j}} -\mathbf{E}_1(\I_{\mathbf{i}} +\I_{\mathbf{j}}) + \mathbf{E}_1^2 \right]  \begin{cases}
				(C_1^{(n,m)})^2 & \text{ for } \mathbf{i}, \mathbf{j}  \in I_1^{(n,m)} \\
				(C_2^{(n,m)})^2 & \text{ for } \mathbf{i}, \mathbf{j}  \in I_2^{(n,m)} \\
				C_1^{(n,m)}C_2^{(n,m)} & \text{ for } \mathbf{i}  \in I_1^{(n,m)},  \mathbf{j} \in  I_2^{(n,m)}.
			\end{cases}
		\end{equation*}
		From this, it follows that 
		\begin{equation}\label{eq:var1}
			\I_{\mathbf{i}} \I_{\mathbf{j}} -\mathbf{E}_1(\I_{\mathbf{i}} +\I_{\mathbf{j}}) + \mathbf{E}_1^2 =  \begin{cases}
				(1-\mathbf{E}_1)^2 & \text{ for } \I_{\mathbf{i}} = \I_{\mathbf{j}}=1 \\
				\mathbf{E}_1^2& \text{ for } \I_{\mathbf{i}} = \I_{\mathbf{j}}=0 \\
				-\mathbf{E}_1(1-\mathbf{E}_1) & \text{ for } \I_{\mathbf{i}} \neq \I_{\mathbf{j}}, 
			\end{cases}
		\end{equation}
		which implies the claim and hence finishes this proof.
	\end{proof}
	
	\noindent
	\textit{$\Var(\mathcal{T}^{(n,m)})$ depends on the underlying distributions.} Note that, in general, the quantity $\Prob (\I_{\mathbf{i}} = \I_{\mathbf{j}})$ is not  distribution-free. We will consider this probability in the following in a particular simple case, namely under the null hypothesis, under the assumption that the  underlying distribution  does not produce ties, and  given that $\mathbf{i}$ and $\mathbf{j}$ share exactly one index, e.g. $\mathbf{i}=(1, 2, 3, 4)$ and $\mathbf{j}=(1,5, 6, 7)$ such that   $\Prob (\I_{\mathbf{i}} = \I_{\mathbf{j}})=\Prob (\I_{\{d(X_1,X_2)\leq d(X_3, X_4)\}} = \I_{\{d(X_1,X_5)\leq d(X_6, X_7)\}})$. Under the null hypothesis, this probability depends on the distribution of $X$, the  dimension $D$ and the choice of distance $d$ as evidenced by numerical simulations; see Tables \ref{tab::Normal}-\ref{tab::Pareto}. We further compute $\Prob (\I_{\mathbf{i}} = \I_{\mathbf{j}})$ for specific examples. To this end, we use the following result.
	\begin{lem} \label{lemma::example1}
		Let  $X, X_1, \ldots, X_7 \in \R^D$ be i.i.d  random vectors, such that 
		\begin{equation*}
			\Prob (d(X_1,X_2) = d(X_3, X_4)) = 0,
		\end{equation*}
		i.e. ties do not occur a.s. Let
		\begin{equation} \label{eqn::P1}
			P_1 := \Prob (\I_{\{d(X,X_2)\leq d(X_3, X_4)\}} = \I_{\{d(X,X_5)\leq d(X_6, X_7)\}}). 
		\end{equation}
		Assume further that the metric $d$ is induced by a norm $\|.\|$ on $\R^D$. Then, it holds that
		\begin{equation} 
			P_1 = \E_X \left[ 1- 2p(X) + 2p^2(X)\right]
		\end{equation}
		with $p(x)= \Prob( X_2 \in \mathcal{B}_{\|X_4 - X_3\|}(x))$, where $\mathcal{B}_{\|X_4 - X_3\|}(x)$ denotes the  ball with center $x$ and  (random) radius $\|X_4 - X_3\|$.
	\end{lem}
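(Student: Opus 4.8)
The plan is to condition on the shared random vector $X$ and to exploit the product structure of the two events. Writing $A := \{d(X,X_2)\leq d(X_3,X_4)\}$ and $B := \{d(X,X_5)\leq d(X_6,X_7)\}$, the first step is the elementary decomposition
\[
P_1 = \Prob(\I_A = \I_B) = \Prob(A\cap B) + \Prob(A^c\cap B^c),
\]
which holds because $\{\I_A=\I_B\}$ is the disjoint union of $\{A\cap B\}$ and $\{A^c\cap B^c\}$. This reduces the task to evaluating the two joint probabilities on the right.

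The key observation is that, once $X$ is fixed, $A$ is a function of the block $(X_2,X_3,X_4)$ while $B$ is a function of the block $(X_5,X_6,X_7)$, and these two blocks are independent of each other and of $X$. Hence $A$ and $B$ are conditionally independent given $X$. I would then identify the conditional probabilities: since $d$ is induced by the norm, $d(x,X_2)=\|X_2-x\|$ and $d(X_3,X_4)=\|X_4-X_3\|$, so
\[
\Prob(A\mid X=x) = \Prob\!\left(\|X_2-x\|\leq\|X_4-X_3\|\right) = \Prob\!\left(X_2\in\mathcal{B}_{\|X_4-X_3\|}(x)\right) = p(x),
\]
and because $(X_5,X_6,X_7)$ has the same joint law as $(X_2,X_3,X_4)$, also $\Prob(B\mid X=x)=p(x)$. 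The hypothesis that ties occur with probability zero guarantees that replacing the inequality by a strict one does not change this value, so the open/closed convention for $\mathcal{B}$ is immaterial and $p(x)$ is unambiguous.

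Combining conditional independence with these identities gives
\[
\Prob(\I_A=\I_B\mid X=x) = p(x)^2 + \bigl(1-p(x)\bigr)^2 = 1 - 2p(x) + 2p(x)^2,
\]
and integrating against the law of $X$ yields $P_1 = \E_X\!\left[1-2p(X)+2p(X)^2\right]$, as claimed. I do not anticipate a genuine obstacle: the only point demanding care is the justification of the conditional independence of $A$ and $B$ given $X$, i.e. the verification that their randomness stems from disjoint independent collections of variables once $X$ is held fixed, together with the measurability needed to condition on $X$ and integrate the conditional probability back.
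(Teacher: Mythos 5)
Your proposal is correct and follows essentially the same route as the paper: the paper also conditions on $X$, observes that the two indicators become independent Bernoulli variables $Z_1(x)$, $Z_2(x)$ with common success probability $p(x)$, computes $\Prob(Z_1(x)=Z_2(x)) = 1-2p(x)+2p^2(x)$ (your $p(x)^2+(1-p(x))^2$), and integrates against the law of $X$. Your explicit remark that the no-ties assumption makes the open/closed ball convention immaterial is a small clarification the paper leaves implicit, but the argument is the same.
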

	\begin{proof}
		Let
		\begin{equation*}
			f(X, Y) :=  \I_{\{\I_{\{d(X,X_2)\leq d(X_3, X_4)\}} = \I_{\{d(X,X_5)\leq d(X_6, X_7)\}} \}},
		\end{equation*}
		where $Y := (X_2,...,X_7)$. It then follows that 
		\begin{equation*}
			P_1 = \E \left[ f(X,Y) \right] = \E \left[ \E_Y\left[ f(X,Y) | X \right] \right] = \E_X \left[ h(X) \right],
		\end{equation*}
		where $h(x) := \E_Y \left[ f(x,Y) \right] $. On the other hand, we have
		\begin{equation*}
			h(x)= \Prob(Z_1(x) = Z_2(x)),
		\end{equation*}
		where $Z_1(x):= \I_{\{d(x,X_2)\leq d(X_3, X_4)\}} $ and $Z_2(x):= \I_{\{d(x,X_5)\leq d(X_6, X_7)\}}$. Note that  $Z_1$ and $Z_2$ are  two independent Bernoulli distributed random variables and hence
		\begin{equation*}
			h(x)= 1-2p(x)+2p^2(x)
		\end{equation*}
		with $p(x):= \Prob(Z_1(x) = 1) = \E Z_1(x)$ i.e.  
		\begin{equation} \label{eqn::P1_p}
			P_1 = \E_X \left[ 1- 2p(X) + 2p^2(X)\right].
		\end{equation}
		Under the assumption that the distance function $d$ is induced by a norm $\| \cdot\|$, it holds that 
		\begin{equation*}
			p(x)= \E \left(\I_{ \{ \|X_2-x\| \leq \|X_4 - X_3\| \} }\right)=\Prob\left( X_2 \in \mathcal{B}_{\|X_4 - X_3\|}(x)\right).
		\end{equation*}
	\end{proof}
	
	\noindent
	We consider specific examples for which the dimension $D$ equals one (i.e. $X,  X_1, \ldots, X_4$ take values in $\R$) and $\|\cdot\| = |\cdot|$.\\
	\begin{bsp} \label{bsp::e1}
		Suppose that $X$ has a density  $f$. Then, in order to compute $P_1$, note that  $p(x)= \E_{X_3}\left(\E_{X_2, X_4} \left(\I_{ \{ \|X_2-x\| \leq \|X_4 - X_3\| \} }\right) \right)$ with
		\begin{equation*}
			\E_{X_2, X_4} \left(\I_{ \{ \|X_2-x\| \leq \|X_4 - X_3\| \} }\right)= \int\int \I_{ \{ \|x_2\| \leq \|x_4\|\} } f(x_2 + x) f(x_4 + X_3) dx_2 dx_4.
		\end{equation*}
		\begin{enumerate}
			\item Let $X, X_1, \ldots ,X_4$ be uniformly distributed on $[0, 1]$, i.e. $f(x)=\I_{\left[0,1 \right]}(x)$. In this case,  $\E\left( X^n \right)= \frac{1}{n+1}$. Straightforward calculations then yield
			\begin{equation*}
				p(x)= \left(\frac{1}{3} +x-x^2\right) \I_{\left[0,1\right]}(x)
			\end{equation*}
			and therefore
			\begin{equation*}
				P_1 = \E\left[ \frac{5}{9} - \frac{2}{3}x + \frac{8}{3}x^2-4x^3 +2x^4 \right] \approx 0.51.
			\end{equation*}
			\item Let $X, X_1, \ldots , X_4$ be exponentially distributed with parameter $\lambda$, i.e.   $f(x) = \lambda e^{-\lambda x} \I_{(0, \infty)} (x)$. `It then follows that
			\begin{equation*}
				p(x)= \lambda x e^{-\lambda x} +\frac{1}{2} e^{-3\lambda x} + \frac{1}{4} e^{-4\lambda x} -\frac{1}{4} e^{-3\lambda x}
			\end{equation*}
			and straightforward calculations yield $P_1 \approx 0.55$.
		\end{enumerate}
	\end{bsp}
	\noindent
	The explicit calculations of $P_1$ in the above examples are consistent with the numerically estimated values; see Tables \ref{tab:: Uniform} and \ref{tab::Gamma}. It will be established in the proof of Theorem \ref{thm::Var} that  the case where $\mathbf{i}$ and $\mathbf{j}$ share exactly one index asymptotically contributes most to the variance of $\mathcal{T}^{(n,m)}$. Therefore, its estimation plays a major role in the design of our testing procedure  and  the computation of critical values.\\

	\noindent
	\textit{Variance of $\mathcal{T}^{(n,m)}$ if ties are excluded. } Hereafter we limit ourselves to such distributions and metrics, which do not produce ties. In this case, the variance of $\mathcal{T}^{(n,m)}$ has a simpler form.
	\begin{lem}\label{lem:var}
		Under $\mathcal{H}_0$, if ties in distance comparisons do not occur almost surely, i.e.
		\begin{equation*}
			\Prob (d(X_1,X_2) = d(X_3, X_4)) = 0,
		\end{equation*}
		it holds that 
		\begin{align*}
			\Var\left(\mathcal{T}^{(n,m)}\right) &=  \frac{1}{4} (C_1^{(n,m)})^2 \sum_{(\mathbf{i}, \mathbf{j}) \in E_1^{(n,m)}} (2\Prob (\I_{\mathbf{i}} = \I_{\mathbf{j}}) - 1) +\frac{1}{4} (C_1^{(n,m)})^2 \# I_1^{(n,m)},
		\end{align*}
		where $E_1^{(n,m)}$ is the set of edges between elements of $I_1^{(n,m)}$.
	\end{lem}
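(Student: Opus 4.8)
\emph{Proof proposal.} The plan is to collapse the general variance formula established above onto the single block indexed by $I_1^{(n,m)}$, by showing that the whole between-samples block indexed by $I_2^{(n,m)}$ is almost surely deterministic and therefore contributes neither variance nor covariance. The decisive structural observation is an involution on $I_2^{(n,m)}$: for $\mathbf{j}=(j_1,j_2,j_3,j_4)\in I_2^{(n,m)}$ the swapped multiindex $\mathbf{j}'=(j_3,j_4,j_1,j_2)$ again satisfies the defining constraints $j_3\neq j_1$, $j_4\neq j_2$ and hence lies in $I_2^{(n,m)}$, is distinct from $\mathbf{j}$ (as $j_1\neq j_3$), and encodes the reversed comparison of the same two between-samples distances $d(X_{j_1},Y_{j_2})$ and $d(X_{j_3},Y_{j_4})$. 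These two distances are built from four pairwise distinct observations, so under $\mathcal{H}_0$ the no-ties hypothesis $\Prob(d(X_1,X_2)=d(X_3,X_4))=0$ gives $\I_{\mathbf{j}}+\I_{\mathbf{j}'}=1$ almost surely. Since $\mathbf{j}\mapsto\mathbf{j}'$ is a fixed-point-free involution, $I_2^{(n,m)}$ partitions into such pairs and I obtain $\sum_{\mathbf{j}\in I_2^{(n,m)}}\I_{\mathbf{j}}=\tfrac12\#I_2^{(n,m)}$ almost surely.

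From this I would deduce that the $I_2^{(n,m)}$-part of the centered array \eqref{eqn::triangle} is identically zero. Indeed, equality of distributions under $\mathcal{H}_0$ together with the absence of ties gives $\mathbf{E}_1=\mathbf{E}_2=\tfrac12$, whence $\sum_{\mathbf{j}\in I_2^{(n,m)}}\mathcal{Z}_{\mathbf{j}}^{(n,m)}=C_2^{(n,m)}\bigl(\sum_{\mathbf{j}\in I_2^{(n,m)}}\I_{\mathbf{j}}-\#I_2^{(n,m)}\mathbf{E}_2\bigr)=C_2^{(n,m)}\bigl(\tfrac12\#I_2^{(n,m)}-\tfrac12\#I_2^{(n,m)}\bigr)=0$ almost surely. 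Consequently $\mathcal{T}^{(n,m)}-\E\mathcal{T}^{(n,m)}=\sum_{\mathbf{i}\in I_1^{(n,m)}}\mathcal{Z}_{\mathbf{i}}^{(n,m)}$, so that every term of the general formula carrying a factor $(C_2^{(n,m)})^2$ or $C_1^{(n,m)}C_2^{(n,m)}$ --- the diagonal $\#I_2^{(n,m)}$-contribution, the $I_2$-internal edges and the cross edges --- drops out at once, and $\Var(\mathcal{T}^{(n,m)})=\Var\bigl(\sum_{\mathbf{i}\in I_1^{(n,m)}}\mathcal{Z}_{\mathbf{i}}^{(n,m)}\bigr)$.

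It then remains to expand this restricted variance along the dependency graph restricted to $I_1^{(n,m)}$, namely $\Var(\sum_{\mathbf{i}\in I_1^{(n,m)}}\mathcal{Z}_{\mathbf{i}}^{(n,m)})=\sum_{\mathbf{i}\in I_1^{(n,m)}}\E[(\mathcal{Z}_{\mathbf{i}}^{(n,m)})^2]+\sum_{(\mathbf{i},\mathbf{j})\in E_1^{(n,m)}}\E[\mathcal{Z}_{\mathbf{i}}^{(n,m)}\mathcal{Z}_{\mathbf{j}}^{(n,m)}]$, the off-edge terms vanishing by independence of disjoint index tuples. Using $\mathbf{E}_1=\tfrac12$, the diagonal entries are $\E[(\mathcal{Z}_{\mathbf{i}}^{(n,m)})^2]=(C_1^{(n,m)})^2\,\mathbf{E}_1(1-\mathbf{E}_1)=\tfrac14(C_1^{(n,m)})^2$, producing the term $\tfrac14(C_1^{(n,m)})^2\#I_1^{(n,m)}$. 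For an edge I would substitute $\mathbf{E}_1=\tfrac12$ into the case distinction \eqref{eq:var1}, whose three values collapse to $\tfrac14,\tfrac14,-\tfrac14$; taking expectations this yields $\E[\mathcal{Z}_{\mathbf{i}}^{(n,m)}\mathcal{Z}_{\mathbf{j}}^{(n,m)}]=\tfrac14(C_1^{(n,m)})^2\bigl(\Prob(\I_{\mathbf{i}}=\I_{\mathbf{j}})-\Prob(\I_{\mathbf{i}}\neq\I_{\mathbf{j}})\bigr)=\tfrac14(C_1^{(n,m)})^2(2\Prob(\I_{\mathbf{i}}=\I_{\mathbf{j}})-1)$, and summation over $E_1^{(n,m)}$ gives the first term.

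The crux is the first paragraph: recognizing that the between-samples comparisons pair up under the swap involution and, thanks to the absence of ties, sum to a constant, so that the entire $I_2^{(n,m)}$-block is variance-free. Everything after that is a mechanical substitution of $\mathbf{E}_1=\mathbf{E}_2=\tfrac12$ into the already-derived covariance structure. One should only take care that the off-diagonal evaluation \eqref{eq:var1} is applied with the correct edge convention and that $\mathbf{j}\mapsto\mathbf{j}'$ is verified to be genuinely fixed-point-free, so that the pairing of $I_2^{(n,m)}$ is exhaustive.
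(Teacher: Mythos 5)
Your proposal is correct and follows essentially the same route as the paper's proof: the key step in both is the fixed-point-free involution $\mathbf{j}\mapsto(j_3,j_4,j_1,j_2)$ on $I_2^{(n,m)}$, which together with the no-ties assumption makes the second sum in \eqref{eqn::TT} almost surely constant, followed by substituting $\mathbf{E}_1=\mathbf{E}_2=\tfrac12$ into \eqref{eq:var1} to evaluate the diagonal and edge terms over $I_1^{(n,m)}$. Your write-up merely spells out details the paper leaves implicit (fixed-point-freeness, the vanishing of the centered $I_2$-block and of the cross terms).
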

	
	\begin{proof} Under the null hypothesis, when ties  in distance comparisons do not occur almost surely, 
		we have $\mathbf{E}_1=\mathbf{E}_2=\frac{1}{2}$ and equation \eqref{eq:var1} yields
		\begin{equation*}
			\I_{\mathbf{i}}\I_{\mathbf{j}} - \frac{1}{2}(\I_{\mathbf{i}} +\I_{\mathbf{j}}) + \frac{1}{4} =  \begin{cases}
				\frac{1}{4} & \text{ for } \I_{\mathbf{i}} = \I_{\mathbf{j}} \\
				-\frac{1}{4} & \text{ for } \I_{\mathbf{i}} \neq \I_{\mathbf{j}}.
			\end{cases}
		\end{equation*}
		As a result, we get
		\begin{align*}
			\E \left[ \I_{\mathbf{i}}\I_{\mathbf{j}} - \frac{1}{2}(\I_{\mathbf{i}} +\I_{\mathbf{j}}) + \frac{1}{4} \right] &= \frac{1}{4}(\Prob (\I_{\mathbf{i}} = \I_{\mathbf{j}}) - \Prob (\I_{\mathbf{i}} \neq \I_{\mathbf{j}})) \\
			&= \frac{1}{4}(2\Prob (\I_{\mathbf{i}} = \I_{\mathbf{j}}) - 1).
		\end{align*}
		Since for  $\mathbf{j}=(j_1, j_2, j_3, j_4)\in I_2^{(n, m)}$ and $\mathbf{j}'=(j_3, j_4, j_1, j_2)$, it holds that $\mathbf{j}'\in I_2^{(n, m)}$ and $\I_{\mathbf{j}}+\I_{\mathbf{j'}}=1$, the second sum in (\ref{eqn::TT}) is almost surely constant and the formula for     $\Var\left(\mathcal{T}^{(n,m)}\right)$ follows. 
		%Furthermore, since for each $\mathbf{i}$ the complementary comparison occurs, the second sum in (\ref{eqn::TT}) is almost surely constant and we conclude that
		%\begin{align*}
		%    \Var\left(\mathcal{T}^{(n,m)}\right) &=  \frac{1}{4} (C_1^{(n,m)})^2 \sum_{(\mathbf{i}, \mathbf{j}) \in E_1^{(n,m)}} (2\Prob (\I_{\mathbf{i}} = \I_{\mathbf{j}}) - 1) +\frac{1}{4}  (C_1^{(n,m)})^2 \# I_1^{(n,m)}.
		%\end{cases}
		%\end{align*}
	\end{proof}
	
	\noindent
	Most notably, Lemma  \ref{lem:var} establishes an expression for the variance of the statistic $\mathcal{T}^{(n,m)}$ that, in contrast to the expression for the variances in \eqref{eqn::TT}, does not depend on $C_2^{(n,m)}$ or $ I_2^{(n,m)}$.
	As a result, these quantities do not play a role in any further analysis of $\mathcal{T}^{(n,m)}$.
	In the following, we therefore  set $I^{(n,m)} := I_1^{(n,m)}$, $E^{(n,m)} := E_1^{(n,m)}$ on $I_1^{(n,m)}$ and we define the \textit{tie-free version} $T^{(n,m)}$ of the test statistic $\mathcal{T}^{(n,m)}$ as
	\begin{equation}\label{eqn::Tfinal}
		T^{(n,m)} :=  \sum_{\mathbf{i} \in I^{(n,m)}} \I_{\{d(X_{i_1}, X_{i_2}) \leq d(X_{i_3}, Y_{i_4})\}}
	\end{equation}
	for which 
	\begin{equation}
		\E  T^{(n,m)} =  \mathbf{E}_1 \# I^{(n,m)}.
	\end{equation}
	Following the previous lemma, under the null hypothesis we have
	\begin{equation} \label{eqn::VarT(n,m)}
		\Var (T^{(n,m)}) =  \frac{1}{4}  \sum_{(\mathbf{i}, \mathbf{j}) \in E^{(n,m)}} (2\Prob (\I_{\mathbf{i}} = \I_{\mathbf{j}}) - 1) +\frac{1}{4}  \# I^{(n,m)}.  \\
	\end{equation}

	\section{Main results}\label{Sec:Res}
	We consider test statistic (\ref{eqn::Tfinal}) for the test problem 
	\begin{equation*}
		\mathcal{H}_0: d_X \overset{\mathcal{D}}{=} d_{XY} \textnormal{ vs. }    \mathcal{H}_1: d_X \overset{\mathcal{D}}{\neq} d_{XY}. 
	\end{equation*}
	Firstly, we have almost sure convergence of the $ (\# I^{(n,m)})^{-1}T^{(n,m)}$ to its expected value.  In particular, under the null hypothesis, this expected value is $\frac{1}{2}$.
	\begin{Theorem} \label{thm::Tconv} 
		%\AM{The theorem holds for both ties and no-ties... On the other hand, no-ties is implicitly meant by $T^{(n,m)}$ as defined in \eqref{eqn::Tfinal}... But then, we have assumed in every subsequent theorem "no-ties". Should I add something?}
		Let there be a constant $\lambda \in (0,1)$ such that $\frac{n}{n+m} \rightarrow \lambda$ as $n,m \rightarrow \infty$. Then, it holds that
		\begin{equation*}
			(\# I^{(n,m)})^{-1}T^{(n,m)} -\mathbf{E}_1 \overset{ a.s. }{\underset{n,m \rightarrow \infty}{ \longrightarrow}} 0
		\end{equation*}
		with $\mathbf{E}_1 = \Prob (d(X_1,X_2) \leq d(X_3, Y))$, where $X_1$, $X_2$ and $X_3$ are independent copies  of  $X$ and $X$ is a random vector with distribution function $F_X$.
	\end{Theorem}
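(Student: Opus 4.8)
The plan is to recognize the normalized statistic $(\# I^{(n,m)})^{-1}T^{(n,m)}$ as a two-sample $U$-statistic with a bounded kernel and then invoke the strong law of large numbers for such statistics, whose standard route is a reverse martingale argument. First I would symmetrize the summation. Writing $Y_1,Y_2,\dots$ for the (relabelled) i.i.d.\ $F_Y$-sample, each term of $T^{(n,m)}$ is indexed by a pair $\{i_1,i_2\}\subset[n]$, a distinguished third index $i_3\in[n]\setminus\{i_1,i_2\}$, and an index $i_4$ into the $Y$-sample. Grouping the three $X$-indices into an unordered triple $\{a,b,c\}$ and summing over the three choices of which element plays the role of $i_3$ produces the kernel
\begin{equation*}
	H(x_1,x_2,x_3;y):=\I_{\{d(x_1,x_2)\le d(x_3,y)\}}+\I_{\{d(x_1,x_3)\le d(x_2,y)\}}+\I_{\{d(x_2,x_3)\le d(x_1,y)\}},
\end{equation*}
which is symmetric in its first three arguments and takes values in $\{0,1,2,3\}$, hence is bounded (in particular integrable). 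Since $\#I^{(n,m)}=3\binom{n}{3}m$, this yields the exact identity $(\# I^{(n,m)})^{-1}T^{(n,m)}=\tfrac13 U_{n,m}$, where
\begin{equation*}
	U_{n,m}:=\frac{1}{\binom{n}{3}m}\sum_{\{a,b,c\}\subset[n]}\ \sum_{i_4=1}^{m}H(X_a,X_b,X_c;Y_{i_4})
\end{equation*}
is a two-sample $U$-statistic of degree $(3,1)$. As each of the three indicators comprising $H$ has expectation $\mathbf{E}_1$, we have $\E H=3\mathbf{E}_1$, so it suffices to show $U_{n,m}\to 3\mathbf{E}_1$ almost surely.

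Next I would set up the reverse martingale structure. Let $\mathcal{F}_{n,m}$ be the $\sigma$-field of events invariant under permutations of $(X_1,\dots,X_n)$ and, independently, under permutations of $(Y_1,\dots,Y_m)$, within the full sequence $\sigma$-field. This family is decreasing in both $n$ and $m$, the statistic $U_{n,m}$ is $\mathcal{F}_{n,m}$-measurable, and, because the $X_i$ and the $Y_j$ are each i.i.d.\ and $H$ is integrable, one checks $\E[U_{n_0,m_0}\mid\mathcal{F}_{n,m}]=U_{n,m}$ for all $n\ge n_0$ and $m\ge m_0$. Thus $(U_{n,m},\mathcal{F}_{n,m})$ is a reverse martingale indexed by the directed set $\N^2$, and boundedness of $H$ supplies uniform integrability for free. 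The reverse martingale convergence theorem then gives $U_{n,m}\to\E[U_{n_0,m_0}\mid\mathcal{F}_\infty]$ almost surely as $n,m\to\infty$, with $\mathcal{F}_\infty=\bigcap_{n,m}\mathcal{F}_{n,m}$. Finally, the Hewitt--Savage zero--one law applied to each i.i.d.\ sample shows $\mathcal{F}_\infty$ is $\Prob$-trivial, so the limit is the constant $\E U_{n_0,m_0}=3\mathbf{E}_1$; dividing by $3$ yields the claim. The hypothesis $n/(n+m)\to\lambda\in(0,1)$ enters only to guarantee that both $n\to\infty$ and $m\to\infty$, which is all the convergence regime needs.

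The step I expect to be the main obstacle is the \emph{two-parameter} reverse martingale convergence: verifying the martingale identity jointly in $(n,m)$ and invoking an almost sure convergence theorem for reverse martingales indexed by the partially ordered set $\N^2$ rather than by $\N$. Should the directed-index theorem be considered too heavy, a fallback is to iterate the one-parameter result, first applying the reverse martingale and Hewitt--Savage argument in the $Y$-sample conditionally on the whole $X$-sequence to reduce $U_{n,m}$ to the one-sample $U$-statistic $\binom{n}{3}^{-1}\sum_{\{a,b,c\}}\E_Y[H(X_a,X_b,X_c;Y)]$, then applying the one-sample $U$-statistic SLLN in the $X$-sample. The delicate point in that variant is making the first limit uniform in $n$; here the bounded kernel, which forces $\Var(U_{n,m})=O(1/n)+O(1/m)$ via the Hoeffding decomposition, together with the maximal inequality furnished by the reverse martingale, is precisely what licenses interchanging the two limits.
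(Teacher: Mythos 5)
Your proposal is correct and takes essentially the same route as the paper: the paper's entire proof consists of recognizing $(\# I^{(n,m)})^{-1}T^{(n,m)}$ (up to the constant you make explicit) as a two-sample $U$-statistic of degree $(3,1)$ with bounded kernel and citing the strong law of large numbers for multisample $U$-statistics (Theorem 3.2.1 of Korolyuk and Borovskich). The reverse-martingale and Hewitt--Savage machinery you develop is precisely the standard proof of that cited theorem, so you have simply unpacked the reference rather than diverged from it.
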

	\noindent
	For the implementation of the test, preferably a non-data dependent and distribution-free estimate of the variance is needed. The following theorem characterizes the asymptotic behavior of the variance of the test statistic $T^{(n,m)}$.
	\begin{Theorem} \label{thm::Var}
		Let $\frac{n}{n+m} \underset{n,m \rightarrow \infty}{ \rightarrow} \lambda \in (0,1)$ and let $$P_1 := \Prob (\I_{\{d(X_1,X_2)\leq d(X_3, X_4)\}} = \I_{\{d(X_1,X_5)\leq d(X_6, X_7)\}}),$$ where  $(X_i)_{i \in  \left[ n \right] }$ and $(Y_i)_{i \in  \left[ m \right] }$ are two sequences of i.i.d. random vectors. Assume that the choice of metric $d$ is such that there are no ties in distance comparisons, i.e.  
		\begin{equation}\label{cnd::noties}
			\Prob(d(X_1, X_2) = d(X_3, Y_1)) = 0.
		\end{equation}
		Assume that $P_1>\frac{1}{2}$ and let $I^{(n,m)}$ be defined as in \eqref{eq:I1}.
		Under the null hypothesis, we then have 
		\begin{align}\label{eqn::VarTest}
			\Var_H (T^{(n,m)}):= \Var (T^{(n,m)}) = \frac{3}{16}\frac{n+m-6}{m-1} \# C_{i_1 = j_1} (2P_1-1) + O(n^6),
		\end{align}
		where $\# C_{i_1 = j_1} = \frac{1}{3} nm(m-1)(n-1)(n-2)(n-3)(n-4) $. \\
		\noindent
		In particular, for $n=m$ and $n,m \rightarrow \infty$ it holds that $\Var_H (T^{(n,m)}) = \frac{3}{8} \# C_{i_1 = j_1} (2P_1-1) + O(n^6)$.
	\end{Theorem}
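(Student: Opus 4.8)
The plan is to start from the exact variance identity \eqref{eqn::VarT(n,m)}, namely $\Var(T^{(n,m)}) = \tfrac14\sum_{(\mathbf{i},\mathbf{j})\in E^{(n,m)}}(2\Prob(\I_{\mathbf{i}}=\I_{\mathbf{j}})-1) + \tfrac14\,\#I^{(n,m)}$, and to control the edge sum by grouping the pairs $(\mathbf{i},\mathbf{j})\in E^{(n,m)}$ according to the number of shared indices, i.e.\ the cardinality of $\{i_1,i_2,i_3,i_4\}\cap\{j_1,j_2,j_3,j_4\}$. Since $\#I^{(n,m)} = \tfrac12 mn(n-1)(n-2) = O(n^4)$ and a fixed $\mathbf{i}$ has only $O(n^2)$ neighbours sharing two or more of its indices (while $|2\Prob-1|\le 1$ always), the diagonal term together with all edges having at least two shared indices contribute only $O(n^6)$. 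Hence the leading behaviour is carried entirely by the pairs sharing exactly one index, and the task reduces to evaluating $\tfrac14\sum_{\text{(one shared index)}}(2\Prob(\I_{\mathbf{i}}=\I_{\mathbf{j}})-1)$ exactly.

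For the probabilistic content I would condition on the single shared variable. When $\mathbf{i}$ and $\mathbf{j}$ share exactly one index, their remaining three indices each are distinct, so conditionally on the shared variable $\I_{\mathbf{i}}$ and $\I_{\mathbf{j}}$ are independent Bernoulli variables and $\Prob(\I_{\mathbf{i}}=\I_{\mathbf{j}}) = \E[q_{\mathbf{i}}q_{\mathbf{j}} + (1-q_{\mathbf{i}})(1-q_{\mathbf{j}})]$, with $q_{\mathbf{i}},q_{\mathbf{j}}$ the conditional success probabilities. Under $\mathcal{H}_0$ all seven variables in such a comparison are i.i.d., so, writing $q_L$ for the conditional probability when the shared point sits on the left-hand (within-sample) distance and $q_R$ when it sits on the right-hand (between-sample) distance, the no-ties assumption \eqref{cnd::noties} yields the key relation $q_R = 1-q_L$ together with $\E q_L = \mathbf{E}_1 = \tfrac12$. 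A short computation then gives, exactly, $\Prob(\I_{\mathbf{i}}=\I_{\mathbf{j}}) = 2\E[q_L^2] = P_1$ whenever the shared index plays the same role (left/left or right/right) in both tuples, and $\Prob(\I_{\mathbf{i}}=\I_{\mathbf{j}}) = 1-P_1$ whenever it plays opposite roles; the identification with $P_1$ is precisely Lemma \ref{lemma::example1}, the $Y$-coordinate being distributionally an $X$ under $\mathcal{H}_0$. Consequently every single-shared-index edge contributes exactly $+(2P_1-1)$ in the same-role case and $-(2P_1-1)$ in the opposite-role case, so this class contributes no error at all.

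It then remains to count the edges of each type. I would split the shared index into the five admissible position patterns: both endpoints of the within-distance (count $N_{LL}$), both the $X$-endpoint of the between-distance ($N_{R_3R_3}$), both the $Y$-endpoint of the between-distance ($N_{R_4R_4}$), and the two mixed within/$X$-between patterns ($N_{L,R_3},N_{R_3,L}$); no mixed pattern involving the $Y$-endpoint can occur, since a single shared index cannot be an $X$-coordinate in one tuple and a $Y$-coordinate in the other. The signed total is $S := N_{LL}+N_{R_3R_3}+N_{R_4R_4}-N_{L,R_3}-N_{R_3,L}$. Writing each count as a product of an $X$-part and a $Y$-part, the four $X$-shared patterns share the factor $m(m-1)$, and their $X$-parts, weighted by the signs $+,+,-,-$, collapse to $\tfrac14 n(n-1)(n-2)(n-3)(n-4)$ (the cancellation driven by $\binom{n-1}{2}\binom{n-3}{2}=\tfrac14(n-1)(n-2)(n-3)(n-4)$), so that $N_{LL}+N_{R_3R_3}-N_{L,R_3}-N_{R_3,L} = \tfrac34\,\#C_{i_1=j_1}$. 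The $Y$-shared pattern carries one extra free $X$-index and only the factor $m$, giving $N_{R_4R_4} = \tfrac{3(n-5)}{4(m-1)}\,\#C_{i_1=j_1}$. Adding the two parts yields $S = \tfrac34\,\tfrac{n+m-6}{m-1}\,\#C_{i_1=j_1}$, whence $\Var(T^{(n,m)}) = \tfrac14(2P_1-1)S + O(n^6)$ is exactly \eqref{eqn::VarTest}.

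The main obstacle is the combinatorics of the last paragraph: one must enumerate the five position patterns without double counting, treating each within-distance as an unordered pair in accordance with the constraint $i_1<i_2$ in \eqref{eq:I1}, and then verify the two nontrivial collapses -- the cancellation of the $X$-shared contributions down to the single term $\tfrac14 n(n-1)(n-2)(n-3)(n-4)$ (where the negatively signed mixed patterns are essential) and the bookkeeping turning $N_{R_4R_4}$ into the correction $\tfrac{n-5}{m-1}$ relative to $\#C_{i_1=j_1}$, which is what produces the finite-sample factor $\tfrac{n+m-6}{m-1}$ rather than its limit $\tfrac{1}{1-\lambda}$. Finally, the hypothesis $P_1>\tfrac12$ guarantees $2P_1-1>0$, so the leading $\Theta(n^7)$ term does not degenerate and genuinely dominates the $O(n^6)$ remainder; the stated special case $n=m$ follows from $\tfrac{n+m-6}{m-1}\to 2$.
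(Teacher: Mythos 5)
Your proposal is correct and takes essentially the same route as the paper's proof: the same starting identity \eqref{eqn::VarT(n,m)}, the same decomposition of the edge set by number and position of shared indices (with pairs sharing two or more indices absorbed into the $O(n^6)$ term), the same identification of the single-shared-index probabilities as $P_1$ or $1-P_1$ according to whether the shared point plays the same or opposite role, and the same counts — your signed total $S=\tfrac34\tfrac{n+m-6}{m-1}\,\# C_{i_1=j_1}$ is exactly the paper's combination $\left(2\cdot 1+2\cdot\tfrac12+\tfrac34+\tfrac34\tfrac{n-5}{m-1}-4\cdot\tfrac34\right)\# C_{i_1=j_1}$. The only cosmetic differences are that you justify the $P_1$/$(1-P_1)$ dichotomy by conditioning on the shared variable (the paper uses the equivalent complementation identity behind its table of configurations), and you organize the seven cases into five role patterns exploiting the exact cancellation of the within/within count against the mixed patterns, rather than summing each case separately.
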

	\noindent
	The order of the main term of the variance, i.e. the first summand on the right-hand side of \ref{eqn::VarTest}, 
	is determined by $\# C_{i_1 = j_1}$.
	Since by assumption $\frac{n}{n+m} \rightarrow \lambda \in (0,1)$ as $n, m\rightarrow \infty$, it follows that $n\sim m$ and, as a consequence thereof, that 
	$\# C_{i_1 = j_1}\sim n^7$.
	%can be easily  estimated by fixing a vertex $\mathbf{i}$ in the graph and considering the number of points $\mathbf{j}$ which share exactly the first index with $\mathbf{i}$ (cf. proof of Theorem \ref{thm::Var} for the exact count of index-sharing cases). For an approximation of the variance,  
	Asymptotically, the first summand dominates  the expression on the right-hand side of \ref{eqn::VarTest}, such that estimation of the variance by this summand can be considered consistent.
	Notably, simulations suggest that the remaining terms (denoted as $O(n^6)$ in \eqref{eqn::VarTest})
	is positive. Accordingly, and particularly for small sample sizes $n$ and $m$, the expression $\frac{3}{16}\frac{n+m-6}{m-1} \# C_{i_1 = j_1} (2P_1-1)$ tends to underestimate $\Var_H (T^{(n,m)})$.
	In any case, a corresponding variance estimation
	depends on the  unknown quantity 
	$P_1$.
	Choosing the trivial upper bound $1$ for  $P_1$, a corresponding test  has nontrivial power in some cases as  evidenced by  numerical simulations (cf. Section \ref{Sec:Monte Carlo}). 
	%Note that for the exact value of $P_1$, Equation (\ref{eqn::VarTest}) underestimates the variance. Consider lower order sums in $O(n^6)$: they consist of terms dependent on probabilities $P_2$, $P_3$ and $P_4$, which correspond to the probability $\Prob(\I_{\mathbf{i}}= \I_{\mathbf{j}})$ for the cases where $\mathbf{i}$ and $\mathbf{j}$ share $2$, $3$ and $4$ indices (diagonal elements, i.e. $P_4 = 1$). Note that there are several possibilities for $\mathbf{i}$ and $\mathbf{j}$ sharing $2$ or $3$ indices. 
	%It can be shown that $O(n^6)$ is positive. 
	Nonetheless, to improve the estimation of the variance we establish non-trival  upper and lower bounds for $P_1$ through the following theorem.\\
	\begin{Theorem}\label{thm::P2}
		Let $P_1 := \Prob (\I_{\{d(X_1,X_2)\leq d(X_3, X_4)\}} = \I_{\{d(X_1,X_5)\leq d(X_6, X_7)\}})$, where $X_1, ...,X_7$ are  i.i.d. continuous  random vectors.
		Let further $S_X \subset \R^D$ be the support of the distribution  of $X$ and $S_R$ the support of the distribution  $\mathbb{P}_R$ of the  random variable $R=d(X_1,X_2)$, such that either
		\begin{enumerate}
			\item $S_X$ is unbounded and the metric $d$ on $\R^D$ is induced by a norm, or
			\item the distribution of $X$ satisfies 
			\begin{align}  \label{eqn::conc}
				\exists m, x \in S_X \textnormal{ such that   } &p_r(m) \geq p_r(x)  \textnormal{  } \forall  r \in S_R,\quad\text{and}\notag\\
				& p_r(m) > p_r(x)  \textnormal{  } \forall  r \in S_{0,R}\subset S_R\quad \text{with}\quad \mathbb{P}_R(S_{0,R})>0,
			\end{align}
			where $p_r(x):= \Prob (X \in \mathcal{B}_r(x))$ and $\mathcal{B}_r(x):=\{y\in \mathbb{R}^D:d(x, y)<r\}$ denotes the ball in $\mathbb{R}^D$ with radius $r$ and center $x$ with respect to the metric $d$. % and $S_r$ the support of the distribution of $d(X_1,X_2)$, $X_1$ and $X_2$ being two independent instances of $X$.
		\end{enumerate}
		Then, it holds that
		\begin{equation}\label{eqn::P1bound}
			\frac{1}{2} < P_1   < \frac{2}{3}.
		\end{equation}
	\end{Theorem}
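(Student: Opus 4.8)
The plan is to express both inequalities through the single function $p(x):=\Prob(d(x,X_2)\le d(X_3,X_4))$. Conditioning on $X_1=x$, the two indicators in the definition of $P_1$ become independent Bernoulli variables with common parameter $p(x)$, exactly as in Lemma~\ref{lemma::example1}, so that $P_1=\E_X\!\left[p(X)^2+(1-p(X))^2\right]=\E_X\!\left[1-2p(X)+2p(X)^2\right]$. I would next record that $\E[p(X)]=\Prob(d(X_1,X_2)\le d(X_3,X_4))=\tfrac12$, which follows from exchangeability of the two i.i.d.\ distances together with the no-tie assumption \eqref{cnd::noties}. Substituting this value yields the two equivalent forms $P_1=\tfrac12+2\Var(p(X))$ and $P_1=2\,\E[p(X)^2]$, so that the lower bound is equivalent to $\Var(p(X))>0$ and the upper bound to $\E[p(X)^2]<\tfrac13$.

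For the upper bound I would write $p(x)=\E_{X_2}[G(d(x,X_2))]$ with $G(t):=\Prob(R>t)$ the survival function of the interpoint distance $R:=d(X_3,X_4)$. Conditional Jensen then gives $\E[p(X)^2]=\E_{X_1}\!\big[(\E_{X_2}G(d(X_1,X_2)))^2\big]\le \E_{X_1,X_2}[G(d(X_1,X_2))^2]=\E[G(R)^2]$. Since $R$ has no atoms, $G(R)$ is uniformly distributed on $[0,1]$, whence $\E[G(R)^2]=\int_0^1 u^2\,du=\tfrac13$ and therefore $P_1\le\tfrac23$. To upgrade this to a strict inequality I would analyse the equality case in Jensen: equality would require $G(d(X_1,X_2))$ to be $X_2$-almost surely constant for $F_X$-almost every $X_1$; as $G$ is strictly decreasing on $S_R$, this forces $d(X_1,X_2)$ to be essentially constant in $X_2$, i.e.\ $X$ concentrated on a sphere around $X_1$, which contradicts $R$ being non-degenerate. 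Hence $P_1<\tfrac23$, independently of which of the hypotheses (1), (2) is assumed.

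For the lower bound it suffices to show that $p$ is non-constant on $S_X$, since $p$ is continuous (by dominated convergence, using continuity of $G$) and $X$ charges every neighbourhood of a support point, so a single pair of support points with distinct $p$-values already yields $\Var(p(X))>0$. Under hypothesis (2) I would use $p(y)=\int_{S_R}p_r(y)\,d\mathbb{P}_R(r)$ together with $p_r(m)\ge p_r(x)$ for all $r\in S_R$ and $p_r(m)>p_r(x)$ on $S_{0,R}$ with $\mathbb{P}_R(S_{0,R})>0$ to conclude $p(m)>p(x)$ for the two support points $m,x$. Under hypothesis (1), for a norm-induced metric the ball $\mathcal{B}_r(x)$ recedes to infinity as $\|x\|\to\infty$, so $p_r(x)\to0$ for each $r$ and hence $p(x)\to0$; since $S_X$ is unbounded, $p$ takes values arbitrarily close to $0$ on a set of positive mass, while $\E[p(X)]=\tfrac12$, so $p$ cannot be constant. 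In either case $\Var(p(X))>0$, and combining this with the upper bound gives \eqref{eqn::P1bound}.

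The routine parts are the conditioning identity and the two moment computations; the delicate points, which I expect to be the main obstacle, are the two strictness arguments, namely ruling out the degenerate equidistant configuration in the equality case of Jensen for the upper bound, and making rigorous that under (1) the far tail of $X$ carries enough mass for $p$ to be non-constant. Both steps hinge on a careful use of the continuity (non-degeneracy) of $R$ and of the support assumptions.
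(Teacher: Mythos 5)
Your strategy is sound, and for the upper bound it is genuinely different from the paper's. The paper first shows $P_2 := \Prob\bigl(\I_{\{d(X_1,X_2)\le d(X_3,X_4)\}} = \I_{\{d(X_1,X_2)\le d(X_5,X_6)\}}\bigr) = \tfrac{2}{3}$ by pure symmetry (with three i.i.d.\ atomless distances each of the six orderings is equally likely), and then proves $P_1 < P_2$ by decomposing the events $\{A \le B \wedge C < D\}$ and $\{C < A \le B\}$, where $A=d(X_1,X_2)$, $B=d(X_3,X_4)$, $C=d(X_5,X_6)$, $D=d(X_1,X_7)$, into orderings and using exchangeability of $(A,D)$ and of $(B,C)$; the difference collapses to a multiple of $\Prob(A<B<C<D)$. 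Your route via $P_1 = 2\E[p(X)^2]$, conditional Jensen, and the probability integral transform $G(R)\sim U[0,1]$ reaches the same bound $2\E[G(R)^2]=\tfrac{2}{3}$ analytically; in fact it is the same comparison in disguise, since passing from conditioning on $X_1$ to conditioning on $d(X_1,X_2)$ is exactly your Jensen step, i.e.\ Jensen proves $P_1\le P_2$. For the lower bound your argument matches the paper's: under hypothesis (2) it is identical, and under hypothesis (1) the paper makes your ``$p$ is small far out'' idea quantitative (choosing $t$ with $\Prob(\|X\|>t)=\tfrac{1}{12}$ and using the triangle inequality to get $p(x)\le \tfrac14$ for $\|x\|>3t$), while you argue via $\Var(p(X))>0$; both are fine.

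The one step that does not hold as written is your strictness argument for the upper bound. First, $G$ need not be strictly decreasing on $S_R$: if $S_R$ has gaps, $G$ has flat stretches, so a.s.\ constancy of $G(d(x_1,X_2))$ only forces $d(x_1,X_2)$ into a level set of $G$, not to be constant. Second, even granting $d(x_1,X_2)=c(x_1)$ a.s.\ for a.e.\ $x_1$, this does not by itself contradict non-degeneracy of $R$: the radius $c(x_1)$ may vary with $x_1$, and the mixture $R$ can then still be atomless. Two additional observations close the gap: (i) the non-degenerate flat level sets of $G$ are at most countably many and each is $\Prob_R$-null (flatness plus atomlessness), so the conditional laws cannot be carried by them for a positive-mass set of $x_1$; and (ii) in the remaining case, symmetry of the metric gives $c(x_1)=d(x_1,x_2)=c(x_2)$ for a.e.\ pair, forcing $c$ to be a.e.\ constant, hence $R$ an atom, contradicting the no-ties condition \eqref{cnd::noties}. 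To be fair, the paper is silent on exactly this point as well --- it asserts $\Prob(A<B<C<D)>0$ without proof --- and that positivity is equivalent to the non-degeneracy your Jensen equality case requires, so your sketch, once repaired as above, would actually be more complete than the published argument.
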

	
	\begin{bem} \label{bem::thm2} Regarding condition \eqref{eqn::conc} we note the following.
		\begin{enumerate}
			\item 
			The function $p_r(\cdot)$ is closely related to  the pseudo-distance $$\delta_{\mathbb{P},\nu}:x\mapsto\inf\left\{r>0\,|\,\mathbb{P}\left(\bar{\mathcal{B}}_r(x)\right)>\nu\right\},$$
			defined in \cite{chazal2011geometric} in the context of geometric inference.
			\item Many standard bounded distributions fulfill  condition \eqref{eqn::conc}. 
				\end{enumerate}
		\begin{minipage}{0.38\textwidth}
			\hspace{0.4cm}
			\includegraphics[width=0.75\textwidth]{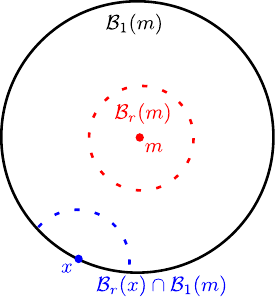}  
		\end{minipage}
			\begin{minipage}{0.6\textwidth}
				As a specific example consider  the uniform distribution on 
				$\mathcal{B}_1(0)$ and the Euclidean distance. Let $m = 0$ and let $x \in \mathcal{B}_1(0)$, $x\neq m$ . Then, it holds that
				\begin{align*}
					\begin{cases}
						&p_r(m)= p_r(x) \textnormal{  for  }  r \in \left[0, 1- \|m-x \| \right] \\
						&p_r(m)> p_r(x) \textnormal{  for  }  r \in \left(1- \|m-x\|,1 + \|m-x\| \right] \\
						&p_r(m)= p_r(x) \textnormal{  for  }  r > 1 + \|m-x\|,
					\end{cases}
				\end{align*}
				where $\|\cdot\|$ denotes the $l^2$-norm.
				In particular, for $x \in  \partial \mathcal{B}_1(0)$   condition \eqref{eqn::conc} holds with $S_{R,0}= S_R$ (cf.\ the sketch on the left hand side).
			\end{minipage}

	\end{bem}
	\noindent
	The conditions of Theorem \ref{thm::P2} are sufficient for the lower bound to hold. The upper bound holds for virtually any distribution and distance function when ties in distance comparisons do not occur (cf.\ the proof of Theorem \ref{thm::P2}). Moreover, note that the assumptions of Theorem \ref{thm::P2} guarantee that the first summand in \eqref{eqn::VarTest} does not vanish. As demonstrated by numerical experiments, for many distributions  the value of $P_1$ is much smaller than $\frac{2}{3}$ and, furthermore, depends on the dimension $D$ of the observations; see Tables \ref{tab::Normal}-\ref{tab::Cauchy}. A sharper, distribution-specific or general,  upper bound, if it exists, is therefore of interest. The natural question to examine is then if examples of distributions exist for which $P_1$ is close to $2/3$.\\
	\noindent
	For the proposed test statistic \eqref{eqn::Tfinal}, we establish asymptotic normality under  conditions on the distributions of $X$ and $Y$  which guarantee that the first summand in the variance of the statistic  $T^{(n,m)}$ as specified in \eqref{eqn::VarTest} does not vanish. In particular and notably, thereby the asymptotic distribution is not only established under the hypothesis that $X$ and $Y$ are equal in distribution, but also under more general assumptions.
	\begin{Theorem} \label{thm::T}
		Let $X:=(X_i)_{i\in [n]}$ and $Y:=(Y_i)_{i\in [m]}$ be sequences of independent, identically distributed random vectors.
		Assume that $m=m_n$ increases monotonically with $n$ such that
		$\frac{n}{n+m_n} \rightarrow \lambda \in (0,1)$ as $n\rightarrow \infty$.
		Moreover, assume that  the distance function $d$ is symmetric, that $r(Y)$, where  $r(y):=\Prob \left(d(X_1, X_2)\leq d(X_3, y)\right)$, is non-degenerate,  and that ties in distance comparisons do not occur, i.e. 
		$\Prob\left(d(X_1, X_2)= d(X_3, Y_1)\right)=0$.
		\ignore{the marginal distributions of $X$ and $Y$
			guarantee non-degeneracy of the random variable $p(X_1)$, where
			$p(x):= \Prob (X_1 \in \mathcal{B}_{d(X_2, Y_1)}(x))$, i.e. $p(X_1)$ is not almost surely constant.\\}
		\noindent
		Let 
		\begin{align} \label{eqn::Wnm}
			&W_{n,m} := \frac{T^{(n,m)} - \E T^{(n,m)}}{\sqrt{\Var (T^{(n,m)})}}, 
			\intertext{where}
			&T^{(n,m)} =  \sum_{\mathbf{i} \in I^{(n,m)}} \I_{\{d(X_{i_1}, X_{i_2}) \leq d(X_{i_3}, Y_{i_4})\}}, 
		\end{align}
		and $I^{(n,m)} := \{ \mathbf{i}=(i_1, i_2, i_3, i_4) \in \N^4 | i_1, i_2, i_3 \in \left[n\right], i_4 \in \left[ n+m\right] \setminus \left[ n \right],  i_1 < i_2, i_1 \neq i_3, i_2 \neq i_3 \}$.\\
		Then, it holds that
		\begin{equation} \label{eqn::asymp_norm}
			\mathcal{W}_1(\mathcal{L}(W_{n,m}), \mathcal{N}(0,1)) = \mathcal{O}(n^{-1/2}), 
		\end{equation}
		where $\mathcal{W}_1$ denotes the Wasserstein 1-distance.
	\end{Theorem}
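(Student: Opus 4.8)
The plan is to regard $W_{n,m}$ as a normalised sum of the bounded, locally dependent array $(\I_{\mathbf{i}})_{\mathbf{i}\in I^{(n,m)}}$ and to apply the Stein-type normal approximation bound for dependency graphs from \cite{Austern22} (in the form of Ross's local-dependence theorem \cite{Ross2011}). Writing $\xi_{\mathbf{i}}:=\I_{\mathbf{i}}-\mathbf{E}_1$, we have $T^{(n,m)}-\E T^{(n,m)}=\sum_{\mathbf{i}\in I^{(n,m)}}\xi_{\mathbf{i}}$ with each $\xi_{\mathbf{i}}$ centred and $|\xi_{\mathbf{i}}|\le 1$. The graph $G^{(n,m)}$ introduced earlier is a valid dependency graph: since $\xi_{\mathbf{i}}$ is a function of the observations indexed by $\mathbf{i}$, it is independent of any subfamily whose indices are disjoint from $\mathbf{i}$, i.e.\ of all non-neighbours of $\mathbf{i}$ in $G^{(n,m)}$. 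With $N:=\# I^{(n,m)}$, $B:=\max_{\mathbf{i}}\|\xi_{\mathbf{i}}\|_\infty\le 1$, $D:=\max_{\mathbf{i}}\#\{\mathbf{j}:(\mathbf{i},\mathbf{j})\in E^{(n,m)}\}$ and $\sigma^2:=\Var(T^{(n,m)})$, the cited bound reads
\begin{equation*}
\mathcal{W}_1\!\left(\mathcal{L}(W_{n,m}),\mathcal{N}(0,1)\right)\;\le\;\frac{D^2 N B^3}{\sigma^3}\;+\;c\,\frac{D^{3/2}B^2\sqrt{N}}{\sigma^2}
\end{equation*}
for an absolute constant $c$.

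Next I would pin down the three orders entering this bound. Counting $(i_1<i_2,i_3)\in[n]^3$ and $i_4\in[m]$ with $n\sim m$ gives $N\asymp n^4$. For the maximal degree, fixing $\mathbf{i}$ and counting the $\mathbf{j}$ sharing at least one of its four indices (dominated by sharing exactly one) yields $D\asymp n^3$. Finally $\sigma^2\asymp n^7$: under $\mathcal{H}_0$ this is precisely Theorem \ref{thm::Var}, since $\# C_{i_1=j_1}\asymp n^7$ and $P_1>\tfrac12$.

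The step I expect to be the main obstacle is securing the lower bound $\sigma^2\gtrsim n^7$ under the \emph{general} (not necessarily null) hypotheses, using only non-degeneracy of $r(Y)$. Here I would \emph{not} argue pairwise nonnegativity of covariances (cross-covariances between projections in different coordinate roles may be negative), but instead pass to the Hajek (first-order) projection $\hat T:=\E T^{(n,m)}+\sum_{a=1}^n(\E[T^{(n,m)}\mid X_a]-\E T^{(n,m)})+\sum_{d}(\E[T^{(n,m)}\mid Y_d]-\E T^{(n,m)})$. Since each kernel involves three $X$'s and one $Y$, each observation index appears in $\asymp n^3$ summands, so $\E[T^{(n,m)}\mid Y_d]-\E T^{(n,m)}\approx(\#\{\mathbf{i}:i_4=d\})\,(r(Y_d)-\mathbf{E}_1)$; the projections onto distinct, independent observations are mutually uncorrelated, whence $\Var(\hat T)$ is a sum of nonnegative per-observation variances and in particular $\Var(\hat T)\ge\sum_d\Var(\E[T^{(n,m)}\mid Y_d]-\E T^{(n,m)})\asymp n^7\,\Var(r(Y))$, which is bounded below by a positive multiple of $n^7$ exactly because $r(Y)$ is non-degenerate. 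The standard $O(n^6)$ bound on the higher-order Hoeffding remainder $\Var(T^{(n,m)}-\hat T)$ then gives $\sigma^2=\Var(\hat T)(1+o(1))\asymp n^7$, consistent with Theorem \ref{thm::Var}; the matching upper bound $\sigma^2=O(n^7)$ follows from the same counting together with $|\Cov|\le 1$.

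Finally, substituting $N\asymp n^4$, $D\asymp n^3$, $B\le 1$ and $\sigma\asymp n^{7/2}$ into the displayed bound makes both summands the same order: the first is $\asymp n^{10}/n^{21/2}=n^{-1/2}$ and the second is $\asymp n^{13/2}/n^{7}=n^{-1/2}$, yielding $\mathcal{W}_1(\mathcal{L}(W_{n,m}),\mathcal{N}(0,1))=\mathcal{O}(n^{-1/2})$. The remaining work is bookkeeping: the exact neighbourhood count for $D$, verification that the hypotheses of \cite{Austern22} (boundedness and a single dependency graph valid across both $X$- and $Y$-indices) hold, and confirmation that the $o(n^7)$ covariance contributions do not disturb the leading order.
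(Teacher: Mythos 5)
Your proposal is correct, and its skeleton coincides with the paper's: verify that $G^{(n,m)}$ is a dependency graph (both arguments rest on the fact that families of summands indexed by disjoint index sets are functions of disjoint subsets of the independent sample --- though you should phrase this for collections against collections, not just one variable against a collection, since that is what the local-dependence hypotheses require); establish $\# I^{(n,m)}\asymp n^4$, maximal degree $\asymp n^3$, and $\Var(T^{(n,m)})\gtrsim n^7$; then read off the rate from a Stein-type Wasserstein bound. You invoke Ross's local-dependence theorem \cite{Ross2011}, while the paper applies Theorem \ref{thm::CLT} of \cite{Austern22} with $p=\omega=1$, which additionally needs the bound $\# N_{n,m}(\mathbf{i},\mathbf{j})\le 2N_{n,m}$ on neighbourhoods of \emph{pairs} of vertices; both black boxes yield $\mathcal{O}(n^{-1/2})$ with the same bookkeeping.

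The genuine difference lies in the step you correctly single out as the crux: the lower bound $\Var(T^{(n,m)})\gtrsim n^7$ under a fixed alternative. The paper proves it by explicit computation: it evaluates $\E[\mathcal{Z}_{\mathbf{i}}\mathcal{Z}_{\mathbf{j}}]$ in all ten configurations of a single shared index, expresses each through the functions $p(x)$, $q(x)$, $r(y)$, weights them by the cardinalities of Table \ref{tbl::ijcard}, and after cancellations recognizes the total as a positive multiple of $(1-\lambda)\Var\left(2p(X)-q(X)\right)+\lambda\Var\left(r(Y)\right)$, which is positive precisely because $r(Y)$ is non-degenerate. Your Hajek-projection argument replaces this case analysis: orthogonality of the projection gives $\Var(T^{(n,m)})\ge\Var(\hat T)$ \emph{exactly}, so for the lower bound you do not even need the $O(n^6)$ control of the Hoeffding remainder; the per-observation projections are independent, so their variances add, and the $Y$-projections alone contribute $m\bigl(\binom{n}{2}(n-2)\bigr)^2\Var\left(r(Y)\right)\asymp n^7\,\Var\left(r(Y)\right)>0$. (Also note that $\E[T^{(n,m)}\mid Y_d]-\E T^{(n,m)}=\#\{\mathbf{i}:i_4=d\}\,(r(Y_d)-\mathbf{E}_1)$ holds exactly, not approximately.) Your route is shorter, makes positivity structural --- a sum of variances --- rather than the outcome of algebraic cancellation, and isolates exactly which non-degeneracy hypothesis is used. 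What the paper's longer computation buys in return is the explicit leading constant of the variance under alternatives, in terms of $\Var\left(2p(X)-q(X)\right)$ and $\Var\left(r(Y)\right)$, and reuse of the same case table that underlies Theorem \ref{thm::Var}.
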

	
	\begin{bem} \label{bem::thm3_bemerkung}
		\ignore{Note that the assumptions on $d$ and the marginal distributions of $X$ and $Y$ guaranteeing non-degeneracy of the random variable
			$p(X_1)$ are fulfilled for any semimetric $d$ and marginal distributions of $X$ and $Y$ with disjoint supports.
			Accordingly, the theorem holds under very general assumptions.
			Moreover, }Note that the assumption  that $r(Y)$ is non-degenerate is not only non-restrictive, but also necessary since otherwise $W_{n, m}$ is almost surely constant. 
		\ignore{$p(X)$ 
			almost surely constant implies $p(X) = \mathbf{E}_1$. Consider $\mathbf{E}_1 \in \{0,1\}$. In this case, $\E_X p(X) \in \{0,1\}$ and $p(x)\in\{0,1\}$ for $x \in supp f_X$, where $f_X$ is probability density of $X$. We conclude that $p(X)=\mathbf{E}_1$ is a.s. constant. In fact, in these cases it is $T^{(n,m)} \in \{ 0, \frac{1}{2} \# I^{(n,m)}  \}$ and $T^{(n,m)}$ a.s. constant. \\
			The condition, $p(X)$ a.s. not constant can for a given alternative be checked by examining the function $p_r(x) = \Prob( X \in \mathcal{B}_{r} (x))$.
			As an example, consider  $X_1 \sim U\left[0,1\right]$ and $Y_1 \sim U\left[a,a+1\right]$ where $a \in (0,1)$. In this case, we have 
			\begin{align*}
				p_r(a) & > p_r(1) \textnormal{ \quad for \quad} r \in [ 0,1-a) \\
				p_r(a) & \geq p_r(1)  \textnormal{ \quad for \quad} r \in [ 1-a, 1+a ] 
			\end{align*}
			and therefore 
			\begin{align*}
				p(a) &= \int_{X_1} \int_{|X-Y|} p_{|X-Y|}(a) f_{|X-Y|} f_{X_1} d |X-Y| d X_1  \\
				&  > \int_{X_1} \int_{|X-Y|} p_{|X-Y|}(1) f_{|X-Y|} f_{X_1} d |X-Y| d X_1 \\
				&= p(1)
			\end{align*}
			where $f_{|X-Y|}$ and $f_{X_1}$ are densities of $|X-Y|$ and $X$, therefore having $p(a)  > p(1)$ and $p(X)$ not a.s. constant as $p(x)$ is a continuous function. }
	\end{bem}
	\noindent
	According to Theorem \ref{thm::T} asymptotic normality of $W_{n, m}$ 
	holds for a suitable distance function $d$ and any two non-degenerate sequences $X:=(X_i)_{i\in [n]}$ and $Y:=(Y_i)_{i\in [m]}$ of independent, identically distributed random vectors
	with $P\left(d(X_1, X_2)\leq d(X_3, Y_1)\right)\in (0, 1)$.
	Consequently, under the null hypothesis, i.e. under the assumption that the marginal distributions of $X$ and $Y$ coincide, asymptotic  normality can be established under the general assumption of non-degenerate data-generating random vectors.  \\
	\begin{cor} \label{cor::folgerung1}
		Assume that $m=m_n$ increases monotonically with $n$ such that
		$\frac{n}{n+m_n} \rightarrow \lambda \in (0,1)$ as $n\rightarrow \infty$.
		Moreover, assume that  the distance function $d$ is symmetric, that ties in distance comparisons do not occur, i.e. 
		$\Prob\left(d(X_1, X_2)= d(X_3, X_4)\right)=0$ and that $r(X)$, where  $r(x):=\Prob \left(d(X_1, X_2)\leq d(X_3, x)\right)$, is a non-degenerate random variable.
		Then,   under $\mathcal{H}_0$, it holds that 
		\begin{equation*} 
			\mathcal{W}_1(\mathcal{L}(W_{n,m}), \mathcal{N}(0,1)) = \mathcal{O}(n^{-1/2}). 
		\end{equation*}
		
	\end{cor}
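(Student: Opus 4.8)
The plan is to obtain Corollary~\ref{cor::folgerung1} as an immediate specialization of Theorem~\ref{thm::T} to the null hypothesis. The only work required is to verify that, under $\mathcal{H}_0$, the hypotheses imposed in the corollary coincide with those demanded by the theorem; there is no new analytic content.

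First I would record that under $\mathcal{H}_0$ we have $F_X = F_Y$, so each observation $Y_j$ is an independent copy of $X$. In particular $Y_1 \overset{\mathcal{D}}{=} X$ and, jointly, $(X_1, X_2, X_3, Y_1) \overset{\mathcal{D}}{=} (X_1, X_2, X_3, X_4)$ for independent copies $X_1, \dots, X_4$ of $X$. This distributional identity is precisely the bridge between the two statements.

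Next I would translate the conditions of the corollary into those of the theorem. The growth condition $\tfrac{n}{n+m_n} \to \lambda \in (0,1)$ and the symmetry of $d$ are verbatim the same. For the tie condition, the identity $(X_1, X_2, X_3, Y_1) \overset{\mathcal{D}}{=} (X_1, X_2, X_3, X_4)$ yields
\begin{equation*}
\Prob\bigl(d(X_1, X_2) = d(X_3, Y_1)\bigr) = \Prob\bigl(d(X_1, X_2) = d(X_3, X_4)\bigr) = 0,
\end{equation*}
so the theorem's tie assumption holds as soon as the corollary's does. For the non-degeneracy requirement, the function $r(y) = \Prob(d(X_1, X_2) \leq d(X_3, y))$ is literally the same map appearing in the corollary; since $Y_1 \overset{\mathcal{D}}{=} X$, the random variables $r(Y_1)$ and $r(X)$ share the same law, and hence $r(Y)$ is non-degenerate exactly when $r(X)$ is. Thus all four hypotheses of Theorem~\ref{thm::T} are met under $\mathcal{H}_0$.

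With the assumptions verified, I would simply invoke Theorem~\ref{thm::T} to conclude that $\mathcal{W}_1(\mathcal{L}(W_{n,m}), \mathcal{N}(0,1)) = \mathcal{O}(n^{-1/2})$, which is the assertion of the corollary. Because the reduction is purely a matter of matching assumptions, I do not expect a genuine obstacle; the one point meriting care is to apply the distributional equality $Y_1 \overset{\mathcal{D}}{=} X$ at the level of the joint vector $(X_1, X_2, X_3, Y_1)$ rather than to the marginals alone, so that both the tie probability and the law of $r(\cdot)$ transfer correctly.
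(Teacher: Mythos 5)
Your proposal is correct and takes essentially the same route as the paper: the paper also treats Corollary~\ref{cor::folgerung1} as an immediate specialization of Theorem~\ref{thm::T}, noting that under $\mathcal{H}_0$ the identity $Y_1 \overset{\mathcal{D}}{=} X$ makes the tie condition and the non-degeneracy of $r(Y)$ equivalent to the corresponding assumptions on $X$ alone. Your added care in applying the distributional equality to the joint vector $(X_1,X_2,X_3,Y_1)$ is exactly the right justification, even though the paper leaves it implicit.
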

	\begin{bem} \label{bem::cor_bemerkung}
		Under the null hypothesis, since $Y$ has the same distribution as $X$, we have
		\begin{align*}
			\Var (r(X)) &= \frac{1}{2} \left(P_1 + 2 \E_X p(X) - 2(\E_X p(X))^2  -1\right) \\ 
			&=\frac{1}{2} \left(P_1 -\frac{1}{2}\right)
		\end{align*}
		due to Lemma \ref{lemma::example1}. 
		Under any of the conditions of Theorem \ref{thm::P2}, we have $P_1 > \frac{1}{2}$ and the conditions of Corollary \ref{cor::folgerung1} are fulfilled.
	\end{bem}
	\noindent
	Lastly, following  Theorems \ref{thm::Tconv}, \ref{thm::Var} and \ref{thm::T}, we conclude that the proposed test is consistent under a fixed alternative.\\
	
	\begin{Theorem} \label{thm::consistency}
		Let $n$ and $m$ be such that
		$\frac{n}{n+m} \rightarrow \lambda \in (0,1)$ as $n, m\rightarrow \infty$.
		Under any alternative for which $\mathbf{E}_1 \neq \frac{1}{2}$ and for which $r(X)$, where  $r(x):=\Prob \left(d(X_1, X_2)\leq d(X_3, x)\right)$, is a non-degenerate random variable, the test for the test problem \eqref{eqn::testprob} is consistent, i.e.
		given a predefined level $\alpha \in [0, 1)$ and the corresponding critical value $c_{\frac{\alpha}{2}} $, it holds that 
		\begin{align}\label{eq:consistency}
			\Prob\left(\left|\frac{T^{(n,m)} - \frac{1}{2} \# I^{(n,m)} }{\sqrt{\Var_{H}(T^{(n,m)} )}} \right| > c_{\frac{\alpha}{2}} \right) \rightarrow 1, \ \text{as $n \rightarrow \infty$,}
		\end{align}
		where $\Var_{H}(T^{(n,m)} )$ denotes the variance under the null hypothesis as defined in \eqref{eqn::VarTest}. 
	\end{Theorem}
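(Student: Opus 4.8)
The plan is to exhibit the standard ``signal dominates noise'' phenomenon: under the alternative the mean shift of the numerator grows strictly faster than the null standard deviation in the denominator, so the normalized statistic diverges in modulus and the rejection probability tends to one. First I would record the relevant growth orders. Counting directly from \eqref{eq:I1} gives $\# I^{(n,m)} = \binom{n}{2}(n-2)m$, and since $\frac{n}{n+m}\to\lambda\in(0,1)$ forces $n\sim m$, we obtain $\# I^{(n,m)} \sim n^{4}$. On the denominator side, Theorem \ref{thm::Var} states
\[
\Var_{H}(T^{(n,m)}) = \tfrac{3}{16}\tfrac{n+m-6}{m-1}\,\# C_{i_1 = j_1}\,(2P_1-1) + O(n^6),
\]
with $\# C_{i_1 = j_1}\sim n^{7}$. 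The assumption that $r(X)$ is non-degenerate yields $P_1>\tfrac12$, because Remark \ref{bem::cor_bemerkung} identifies $\Var(r(X))=\tfrac12(P_1-\tfrac12)$; hence the leading term does not collapse and $\sqrt{\Var_{H}(T^{(n,m)})}\sim n^{7/2}$. Finally, the critical value is a fixed finite constant: under $\mathcal H_0$ Corollary \ref{cor::folgerung1} gives asymptotic standard normality of the normalized statistic, so $c_{\alpha/2}=\Phi^{-1}(1-\tfrac\alpha2)<\infty$ for $\alpha>0$ and does not depend on $n$.

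The second step is to isolate the deterministic shift. I would write
\[
\frac{T^{(n,m)} - \tfrac12 \# I^{(n,m)}}{\sqrt{\Var_{H}(T^{(n,m)})}}
= \frac{\# I^{(n,m)}}{\sqrt{\Var_{H}(T^{(n,m)})}}\left( \frac{T^{(n,m)}}{\# I^{(n,m)}} - \frac12 \right).
\]
By Theorem \ref{thm::Tconv} the bracketed factor converges almost surely to $\mathbf{E}_1-\tfrac12$, which is nonzero under the stipulated alternative. By Step one the prefactor satisfies $\# I^{(n,m)}/\sqrt{\Var_{H}(T^{(n,m)})}\sim n^{4}/n^{7/2}=n^{1/2}\to\infty$. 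Consequently the product diverges almost surely to $+\infty$ or $-\infty$ according to the sign of $\mathbf{E}_1-\tfrac12$, and in particular its absolute value tends to $+\infty$ almost surely.

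The conclusion then follows by bounded convergence: on the almost sure event on which the normalized statistic diverges in modulus, the event $\{\,|\,\cdots\,|>c_{\alpha/2}\,\}$ holds for all large $n$ (the threshold $c_{\alpha/2}$ being a fixed finite number), so the probability in \eqref{eq:consistency} converges to one. I do not expect a genuine obstacle here; the only points requiring care are bookkeeping ones, namely verifying the polynomial orders $\# I^{(n,m)}\sim n^4$ and $\Var_{H}(T^{(n,m)})\sim n^7$, the latter hinging precisely on $P_1>\tfrac12$ (equivalently on the non-degeneracy of $r(X)$) to keep the leading variance term from vanishing, and confirming that $c_{\alpha/2}$ is genuinely constant in $n$, which is where the null central limit theorem enters. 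As a robustness check, one could replace the almost-sure argument by an in-probability one: the centered term $T^{(n,m)}-\E T^{(n,m)}$ can be controlled via Chebyshev's inequality using only the crude combinatorial bound $\Var(T^{(n,m)})=O(n^7)$, which holds under the alternative as well since there are $O(n^7)$ dependent index pairs each contributing a bounded covariance; this route invokes Theorem \ref{thm::T} solely to pin down $c_{\alpha/2}$ and not for the fluctuation control.
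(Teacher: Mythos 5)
Your proposal is correct and follows essentially the same route as the paper's proof: both factor the statistic as $\frac{\# I^{(n,m)}}{\sqrt{\Var_{H}(T^{(n,m)})}}\bigl(\tfrac{T^{(n,m)}}{\# I^{(n,m)}}-\tfrac12\bigr)$, invoke Theorem \ref{thm::Tconv} for the almost sure convergence of the bracket to $\mathbf{E}_1-\tfrac12\neq 0$, and use that $\Var_{H}(T^{(n,m)})=O(\# C_{i_1=j_1})\sim n^7$ while $\# I^{(n,m)}\sim n^4$, so the statistic diverges at rate $n^{1/2}$ past the fixed critical value. Your additional observations (pinning down $P_1>\tfrac12$ via Remark \ref{bem::cor_bemerkung}, and the Chebyshev fallback) are consistent with, but not needed beyond, the paper's argument.
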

	
	\begin{bem}
		According to Theorem \ref{thm::Var}, replacing $\Var_{H}(T^{(n,m)})$ in \eqref{eq:consistency}  by $\frac{3}{16}\frac{n+m-6}{m-1} \# C_{i_1 = j_1} (2P_1-1)$ for any $P_1\in (1/2, 2/3)$ preserves consistency of a corresponding test.
	\end{bem}

	\section{Finite sample performance} \label{Sec:Monte Carlo}
	
	\subsection{Outline}
	
	\begin{figure}[h]
		\centering
		\includegraphics[width=\textwidth]{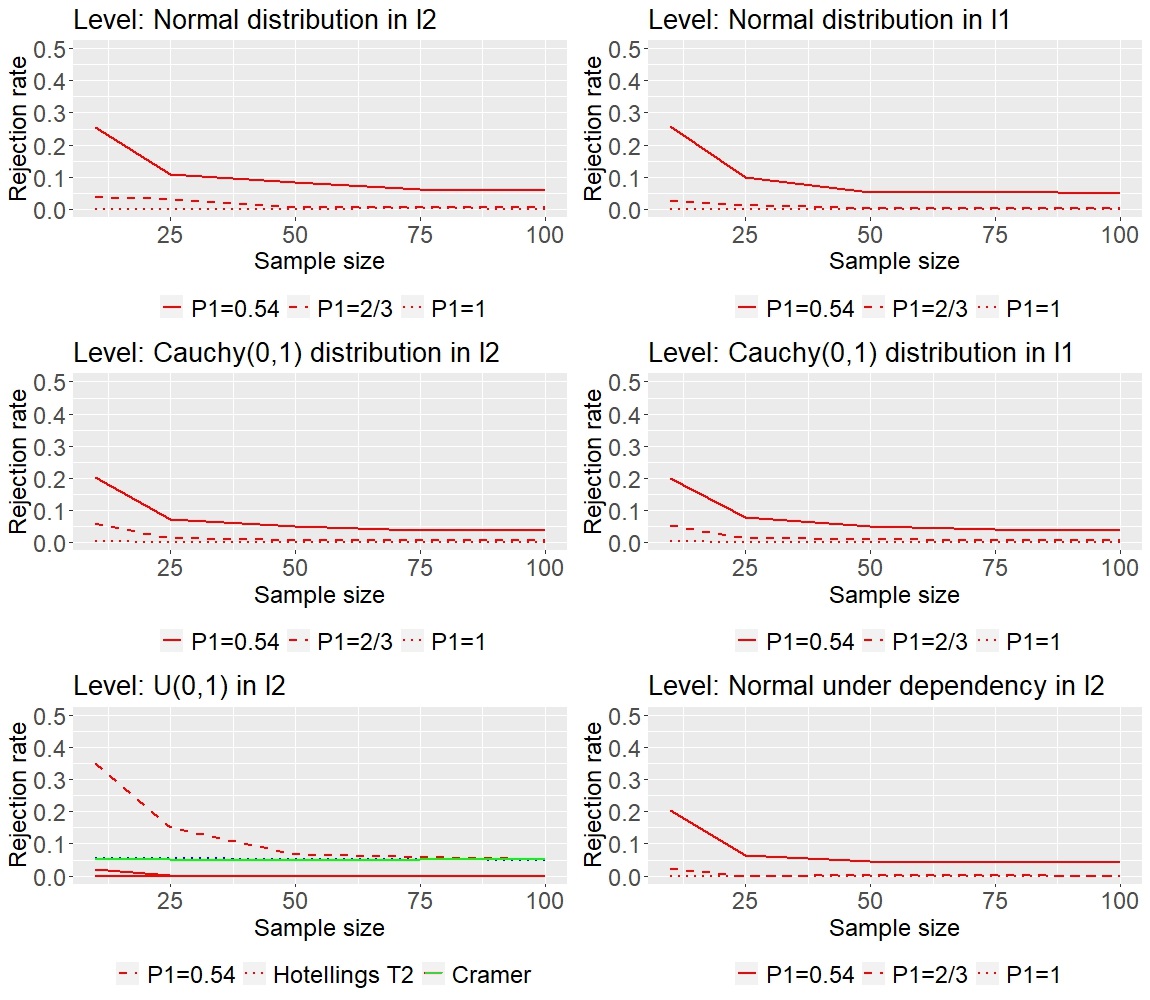}
	%\figuresize{.5}
	%	\figurebox{20pc}{25pc}{}[]
		\caption{Empirical level of the tests $\Phi^d_1$ (red dotted), $\Phi^d_{2/3}$ (red dashed), $\Phi^d_{\rm oracle}$ (red full line). $d$ is either Euclidean metric ($l^2$) or Manhattan ($l^1$) on $\R^{10}$. The underlining distributions of the point clouds are standard normal ($\hat{P}_{1, \rm oracle} = 0.539$ and $\hat{P}_{1, \rm oracle} = 0.535$ for $l^2$ and $l^1$ oracle tests, respectively) in the upper left and right graphs, Cauchy with location $0$ and scale $1$ for  $l^2$ (middle, left) and $l^1$ (middle, right), uniform distribution on $\R^2$ over $l^2$ (bottom, left), with addition of Cram\'{e}r (green)  and Hotelling's $T^2$ (black) levels and normal distribution under dependency over $l^2$ (bottom right). } \label{fig::Null}
	\end{figure}
	\noindent
	In this section, we assess the finite sample performance of the hypothesis test established in Section \ref{Sec:Res}.  For this, we compute  its empirical size and power in different scenarios and under  varying  distributions of the data. The general form of an averaged Wilcoxon $\alpha$-level test is given by
	\begin{equation} \label{eqn::our_test}
		\Phi_{\hat{P}_1 }^d = \begin{cases}
			1, \textnormal{ for } \left|\tfrac{T^{(n,m)} - \frac{1}{2}\# I^{(n,m)}}{ \sqrt{\frac{3}{16} 
					\frac{n + m -6}{m-1}\# C_{i_1 = j_1}(2\hat{P}_1-1)}}\right| > c_{\frac{\alpha}{2}} \\
			0, \textnormal{ otherwise }
		\end{cases},
	\end{equation}
	where the statistic $T^{(n,m)}$ is defined by $T^{(n,m)} =  \sum_{\mathbf{i} \in I^{(n,m)}} \I_{\{d(X_{i_1}, X_{i_2}) \leq d(X_{i_3}, Y_{i_4})\}}$, $\# C_{i_1 = j_1} = \frac{1}{3}nm(m-1)(n-1)(n-2)(n-3)(n-4)$, $c_{\frac{\alpha}{2}}$ is the $(1- \frac{\alpha}{2})$-quantile of the standard normal distribution, and $I^{(n,m)} = \{ \mathbf{i}\in \N^4 | i_1, i_2, i_3 \in \left[n\right], i_4 \in \left[ n+m\right] \setminus \left[ n \right], \, i_1 < i_2, i_1 \neq i_3, i_2 \neq i_3 \}$. %, and
	%\begin{equation*}
	%    W_{n,m} = \tfrac{\frac{1}{\# I^{(n,m)}} T^{(n,m)} - \frac{1}{2}}{ \sqrt{\frac{1}{8}(m-1)(n-3)(n-4)\# I^{(n,m)}(2\hat{P}_1-1)}}.
	%\end{equation*}
	%and 
	%\begin{equation*}
	%    \hat{V}_{n,m} = \frac{1}{8}(m-1)(n-3)(n-4)\# I^{(n,m)}(2\hat{P}_1-1).
	%\end{equation*}
	The superscript $d$ of the test $\Phi_{\hat{P}_1 }^d$ indicates the metric or distance function, which enters the test via the test statistic $T^{(n,m)}$. The quantity $\hat{P}_1$ denotes an estimator for the probability $P_1=\Prob (\I_{\{d(X,X_2)\leq d(X_3, X_4)\}} = \I_{\{d(X,X_5)\leq d(X_6, X_7)\}})$ defined in Theorem \ref{thm::Var}, which is used to estimate the variance of the test statistic. 
	We investigate the performance of the averaged Wilcoxon test \eqref{eqn::our_test}
	(i) for the trivial  bound $\hat{P}_1= 1$, (ii) for the universal bound established in Theorem \ref{thm::Var}$, \hat{P}_1 = 2/3$, and (iii) for the appropriate oracle values, i.e. given known $P_1$, listed in Tables \ref{tab::Normal} and \ref{tab::Cauchy} in Section \ref{sec:numerical P1}. In case the oracle value is used, the distance $d$ also enters the variance estimation as the values for $P_1$ vary (slightly) from distance to distance. The three specific averaged Wilcoxon tests considered here are denoted by $\Phi^d_1$, $\Phi^d_{2/3}$ and $\Phi^d_{\rm oracle}$, where the latter is to be understood in the context of a given test problem, i.e. with $\hat P_1=\hat P_1(d,D)$ varying between distances and dimensions. \\
	\noindent
	Size, power and robustness of the three versions of $\Phi_{\hat P_1}^d$ with regard to the dimension of the datasets have been examined in Monte Carlo simulations, on normally, uniform and Cauchy distributed data, in location and scale problems and with respect to the  Euclidean metric, the $l^1$ (``Manhattan'') metric, and the Canberra distance function. Our simulations show that the oracle choice for $P_1$ underestimates the variance such that the nominal level is not maintained, while the upper bounds overestimate the variance and cause a loss of power. The performance of the three tests is compared to three other multivariate two-sample tests. The first is an independent version of the Wilcoxon rank sum test over distances, where the generated sample of distances has been randomly and independently sampled, with critical values based on the normal approximation. The second is Hotelling's $T^2$, a multivariate $t$-test generalization based on Hotelling's distribution (see \cite{anderson1958introduction}, pp. 171-177) and lastly, Cram\'{e}r's test with Monte-Carlo-bootstrapped critical values and Cram\'{e}r kernel, based on Euclidean interpoint distances; see \cite{baringhaus2004new} and \cite{franz2019package}. These three tests are denoted by $\Phi_{\rm ind}^d$, $\Phi_H$ and $\Phi_C$, respectively. \\
	\noindent
	As expected, the averaged Wilcoxon test clearly outperforms $\Phi_{\rm ind}^d$ in all scenarios. Further, in contrast to $\Phi_H$ and $\Phi_C$, the averaged Wilcoxon tests are suitable for applications on high-dimensional data and show  greater robustness to the deterioration of power with growing dimension. If not otherwise specified, the dimension of the data in the test problems is set to $10$, samples are always balanced (i.e. $n=m$), the components of the data vectors are independent and the number of Monte Carlo iterations is $1000$. In the graphs in Figures \ref{fig::Null}-\ref{fig::Powerl1}, the full red line always represents the oracle test, the dashed red line represents $\Phi_{2/3}$, the dotted red line represents $\Phi_1$ and the dashed blue line represents the independent Wilcoxon test.\\
	\begin{figure}[h]
		\centering
		%\begin{minipage}{\textwidth}
		\centering
		%\figuresize{.5}
		%\figurebox{20pc}{25pc}{}[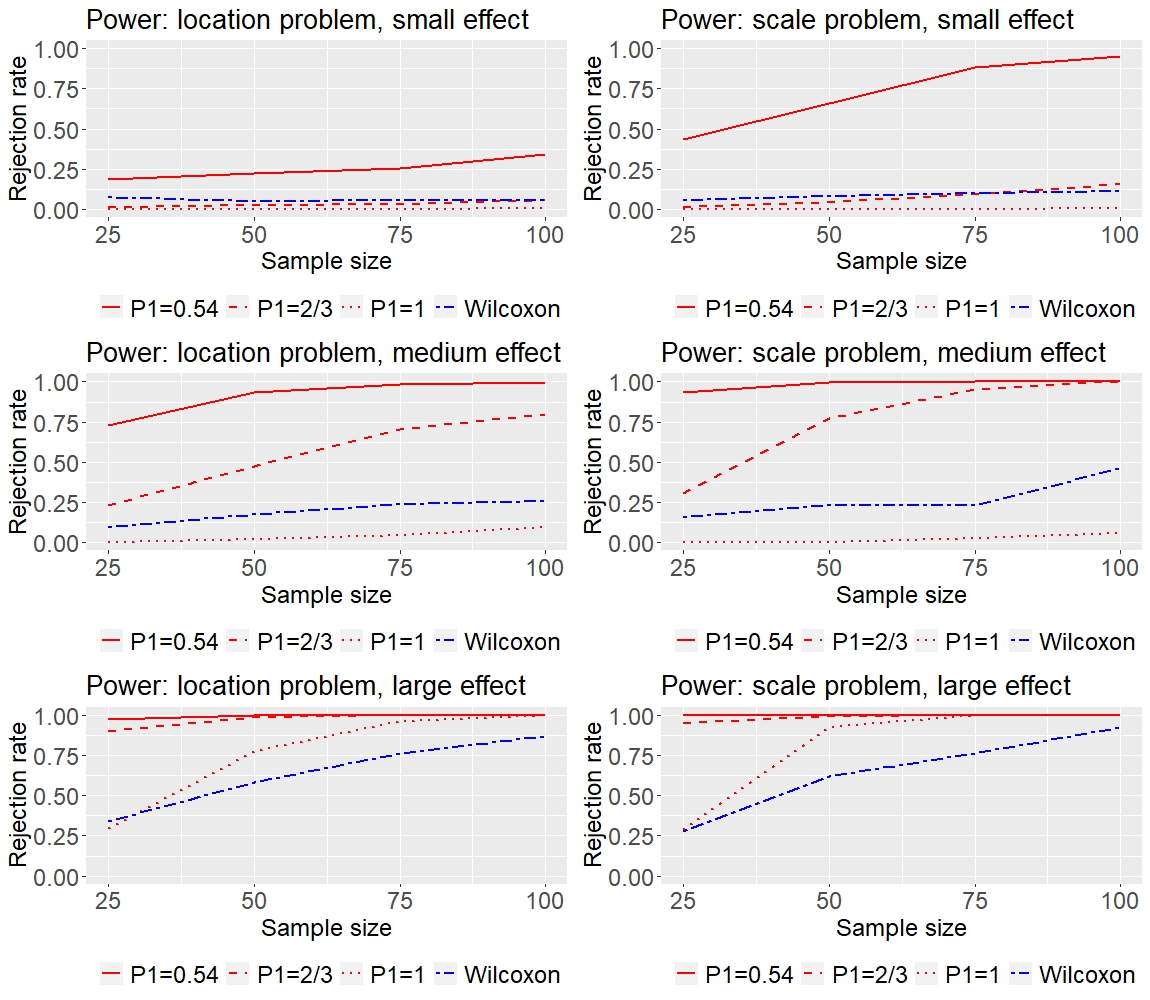]
		\includegraphics[width=\textwidth]{Figures/Grid2.jpeg}
		%\end{minipage}%
		\caption{Power of $\Phi^d_1$ (red dotted), $\Phi^d_{2/3}$ (red dashed), $\Phi^d_{\rm oracle}$ (red full line) and independent Wilcoxon $\Phi^d_{\rm ind}$ (blue dotted line), where $d$ is the Euclidean metric on $\R^{10}$,  in location (left) and scale problems (right) for the normal distribution. Effect sizes: small ($\Delta \mu = 0.3$ and $ s=0.9$), medium ($\Delta \mu = 0.6$ and $ s=0.8$) and large ($\Delta \mu = 0.9$ and $ s=0.6$), where $\sigma_Y^2 = s \sigma_X^2$.} \label{fig::Powerl2}
	\end{figure}

	\subsection{Empirical level}
	\noindent
	In this section we investigate if the tests $\Phi_1^d, \Phi_{2/3}^d$, and $\Phi_{\text{oracle}}^d$ hold their nominal level, %. For this purpose we consider different data distributions: uniform, Cauchy and normal with independent and dependent components of the data vectors.  The nominal level 
	which was set to $0.05$ for all simulations presented in  this section.\\
	
	\noindent
	\textit{Independent Components.} \label{sec:level_ind} To study the empirical level of the tests $\Phi_1^d, \Phi_{2/3}^d$, and $\Phi_{\text{oracle}}^d$ we simulate i.i.d.\ data of a fixed dimension $$X_1,\ldots,X_{n},Y_{n+1},\ldots,Y_{2n}\in\R^{10},$$
	where $X_i=(X_{i, 1}, \ldots, X_{i, 10})^{T}$ and  $Y_i=(Y_{i, 1}, \ldots, Y_{i, 10})^{T}$ and for two different scenarios:  $X_{1,j}, Y_{1,j}\sim\mathcal{N}(0,1)$, $j=1,\ldots,10$, for the first scenario,    $X_{1,j}, Y_{1,j}\sim\text{Cauchy}(0,1)$, $j=1,\ldots,10$, for the second scenario. In both cases, the components of the vectors are independent and $d$ is either the $l^1$ or the $l^2$ distance. The results are shown in Figure \ref{fig::Null} (upper and middle panels).
	%The comparison of the size of the three tests  $\Phi_1^d, \Phi_{\frac{2}{3}}^d$, and $\Phi_{\text{oracle}}^d$ for normally and Cauchy distributed data is given in Figure \ref{fig::Null}, for sample sizes $10$, $25$, $50$, $75$ and $100$. 
	%The severity of underestimation of the variance of $T^{(n,m)}$ for the choice of oracle $\hat{P}_1$ in dependence of the chosen metric and underlining distribution has been illustrated. 
	The level of the test $\Phi_{\text{oracle}}^d$ is generally not upheld for sample sizes $\leq 50$. 
	The effect of underestimation of the variance using the oracle value for $ \hat P_1$ is apparent, especially for very small sample sizes:
	The effect is worse in the case of a normal distribution  than in the Cauchy example. We note the difference in level in the normal example depending on the choice between $l^1$ and $l^2$. We observe a  drop between sample sizes $10$ and $25$,   a phenomenon which can be explained by a significant underestimation of variance using the asymptotic expansion \eqref{eqn::VarTest} for small sample sizes.  \\

	\begin{figure}[h]
		\centering
		\begin{minipage}{\textwidth}
			\centering
			\includegraphics[width=1\linewidth]{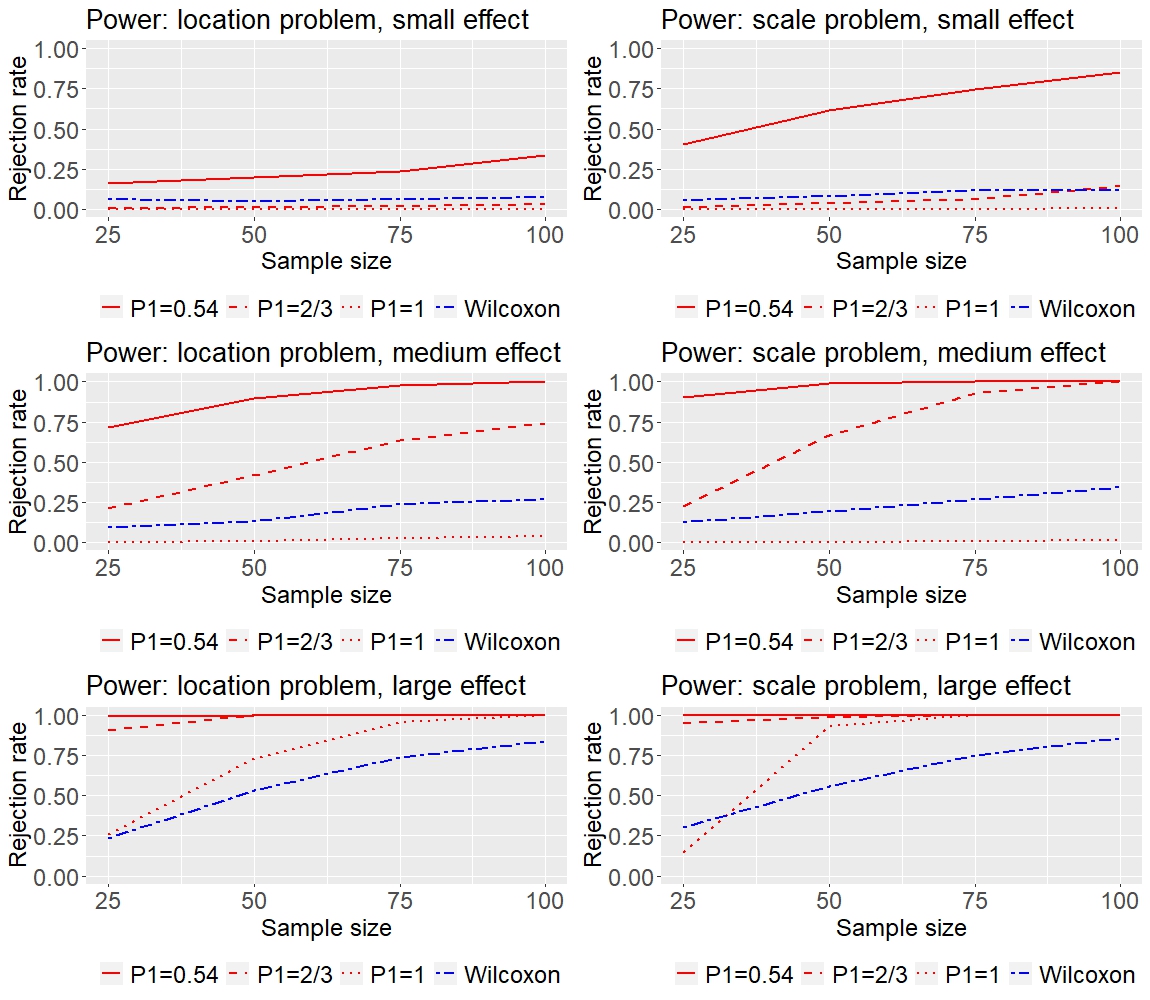}
			
		\end{minipage}%
		\caption{Power of the tests for the Manhattan ($l^1$) distance on $\R^{10}$. $\Phi^d_1$ (red dotted), $\Phi^d_{2/3}$ (red dashed), $\Phi^d_{\rm oracle}$ (red full line) and independent Wilcoxon $\Phi^d_{\rm ind}$ (blue dotted line),  in location (left) and scale problems (right) for the normal distribution. The data follows the same distributions as in the example given in Figure \ref{fig::Powerl2}. Effect sizes: small ($\Delta \mu = 0.3$ and $ s=0.9$), medium ($\Delta \mu = 0.6$ and $ s=0.8$) and large ($\Delta \mu = 0.9$ and $ s=0.6$), where $\sigma_Y^2 = s \sigma_X^2$.}  \label{fig::Powerl1}
	\end{figure}
	\noindent
	
	\noindent
	\textit{Dependent Components.}\label{bsp::dependence} We developed the averaged Wilcoxon test inspired by the application to genomic microarray datasets, where common (parametric) assumptions  are typically not met.
	For instance, in the data example  discussed in Section \ref{Sec:RealData}, it is unrealistic to assume that the components of the observed data vectors are independent. Our averaged Wilcoxon test does not pose any restrictive assumptions in this direction and is also applicable to dependent components. In this section, we therefore study the empirical level under dependence. \\
	\noindent
	We consider as example for $X_i, Y_i \in \R^{10}$ and $\Sigma = (\frac{1}{1+|i-j|})_{i,j =1..10}$
	\begin{equation*}
		X_i, Y_i \sim \mathcal{N}(0_{10}, \Sigma ),
	\end{equation*}
	where $0_{10}\in\R^{10}$ denotes the 10-dimensional zero vector.
	The oracle value in this example for dimension $10$ and Euclidean metric does not differ much from the independent case given in Table \ref{tab::Normal}. We estimate $\hat{P}_1 = 0.536$ for the oracle test.
	The underestimation of the variance for smaller sample sizes can be observed in this example as well. The estimated levels are given in Figure \ref{fig::Null}. The performance is slightly better than in the independent case.\\
	
	\noindent
	\textit{Uniform distribution.}\label{bsp::Uni} The last example we consider is given in $2$ dimensions for the uniform distribution, i.e.  
	\begin{equation*}
		X_i, Y_i \sim U[0,1]^2,
	\end{equation*}
	where $X_i, Y_i \in \R^{2}$.
	It is well known that parametric tests such as Hotelling's $T^2$ may not perform well if the underlying parametric assumptions are violated. Indeed, Hotelling's test does not uphold the nominal level in this example. The estimated levels of $\Phi_1^d, \Phi_{2/3}^d$,  $\Phi_{\text{oracle}}^d$, $\Phi_{H}$, and $\Phi_C$  are given in Figure \ref{fig::Null}.\\
	\begin{figure}[h]
		\centering
		\begin{minipage}{\textwidth}
			\centering
			\includegraphics[width=1\linewidth]{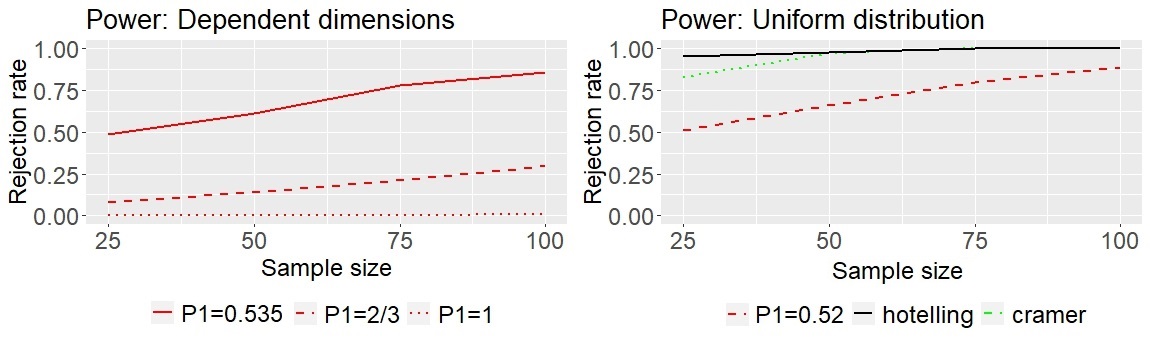}
			
		\end{minipage}%
		\caption{Power of the tests with respect to the Euclidean metric. Left: Normal dependent location problem in $10$ dimensions. Right: Uniform location problem in $2$ dimensions.  } \label{fig::PowerRest}
	\end{figure}
	\subsection{Power}
	\noindent
	In this section, we investigate the empirical power of all tests $\Phi_1^d, \Phi_{2/3}^d$,  $\Phi_{\text{oracle}}^d, \Phi^d_{\text{ind}}, \Phi_{H}$, and $\Phi_C$ in various scenarios. \\
	
	\noindent
	\textit{Power of $\Phi_1^d, \Phi_{2/3}^d, \Phi^d_{\text{ind}}$, and $\Phi_{\text{oracle}}^d$ in location and scale problems.}\label{subsec:power in oracle and scale} We first consider simple Gaussian location and scale problems. The location and scale differences are present in all the dimensions, i.e. for the location problem, we generate data according to
	\begin{equation}\label{eq:loc}
		X_i, Y_i\in\R^{10}, \quad X_i\sim \mathcal{N}(0_{10}, I_{10}),\quad Y_i \sim  \mathcal{N}(\Delta \mathcal{M}_i, I_{10}),
	\end{equation}
	where $\Delta \mathcal{M}_i = (\Delta \mu_i, ..., \Delta \mu_i) \in \R^{10}$, $\Delta \mu_i \in \{ 0.3, 0.6, 0.9\}$. Here and in the following, we denote by $0_{d}$ and $I_d$ the $d$-dimensional zero vector and identity matrix, respectively. For the scale problem, we consider 
	\begin{equation*}
		X_i, Y_i\in\R^{10}, \quad X_i\sim \mathcal{N}(0_{10}, I_{10}),\quad Y_j \sim  \mathcal{N}(0_{10}, sI_{10}),
	\end{equation*}
	where $s \in \{ 0.9, 0.8, 0.6 \}$.  All tests have been performed for Manhattan ($l^1$) and Euclidean ($l^2$) distances, where the corresponding oracle values from Table \ref{tab::Normal} have been used. The results are shown in Figures \ref{fig::Powerl2}-\ref{fig::Powerl1}. As expected, out of the three versions of the averaged Wilcoxon test, $\Phi_{\text{oracle}}^d$ uniformly has the greatest power, while $\Phi_{1}^d$ has the lowest.   $\Phi_{2/3}^d$ possesses non-trivial power in medium and large effect examples, while $\Phi_{1}^d$ possesses non-trivial power in problems with large effect size only. We observe a slight loss of power of tests for the choice of metric $l^1$ over $l^2$.\\
	
	\begin{figure}[h]
		\centering
		\begin{minipage}{\textwidth}
			\centering
			\includegraphics[width=1\linewidth]{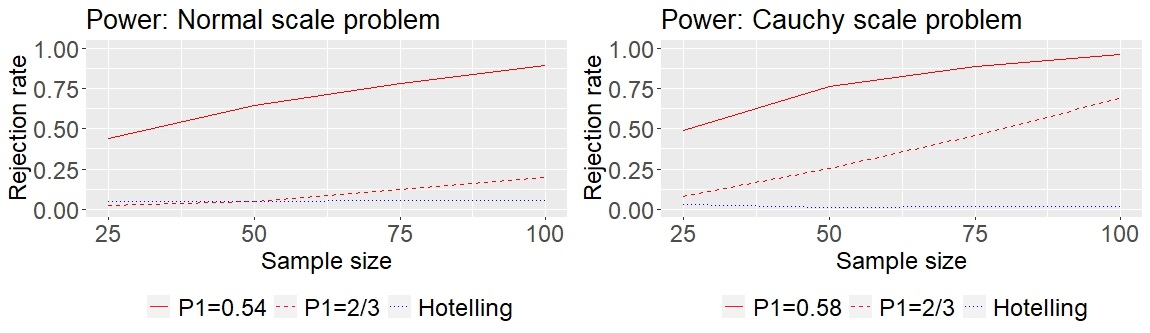}
			
		\end{minipage}%
		\caption{Comparison of averaged Wilcoxon tests and Hotelling's $T^2$. Left: Normal scale problem, small effect size ($s=0.9$, where $\sigma_Y^2 = s \sigma_X^2$). Right: Cauchy scale problem, large effect (scale in the alternative is $s = 0.5$). Both examples are performed over Euclidean metric in $\R^{10}$. } \label{fig::PowerHotelling}
	\end{figure}
	
	\noindent
	\textit{Deterioration of power in high dimensions.} In this section, we consider problems in which a difference in location and scale is only present in one component for growing dimension $D\in\{5, 10, 50, 200\}$. The sample sizes are between $25$ and $100$. To be precise, we consider 
	\begin{equation*}
		X_i, Y_i\in\R^{D}, \quad X_i\sim \mathcal{N}(0_{D}, I_{D}),\quad Y_{50+i} \sim  \mathcal{N}((0.6,0_{D-1}), I_{D}), \quad i=1,\ldots,50,
	\end{equation*}
	and 
	\begin{equation*}
		X_i, Y_i\in\R^{D}, \quad X_i\sim \mathcal{N}(0_{D}, I_{D}),\quad Y_{50+i} \sim  \mathcal{N}(0_{D}, \Sigma_{0.5}^D), , \quad i=1,\ldots,50,
	\end{equation*}
	where the covariance matrix $\Sigma_{0.5}^D$ is given by
	\begin{equation}\label{eqn::Y2}
		\Sigma_{0.5}^D= \begin{pmatrix}
			0.5 & 0   & 0   & \cdots & 0   \\
			0   & 1   & 0   & \cdots & 0   \\
			0   & 0   & 1   & \cdots & 0   \\
			\vdots & \vdots & \vdots & \ddots & \vdots \\
			0   & 0   & 0   & \cdots & 1
		\end{pmatrix},
	\end{equation}
	i.e. the effect is limited to one dimension. In this and related problems, a deterioration of power occurs naturally as the dimensions grow, which is also confirmed by our simulation results. The tests are performed over the Euclidean metric since it has shown to outperform the Manhattan metric in Section \ref{subsec:power in oracle and scale}. Figure \ref{fig::det}  illustrates the loss of power due to increasing dimensionality for the two scenarios.\\
	\begin{figure}[h]
		\centering
		\begin{minipage}{\textwidth}
			\centering
			\includegraphics[width=1\linewidth]{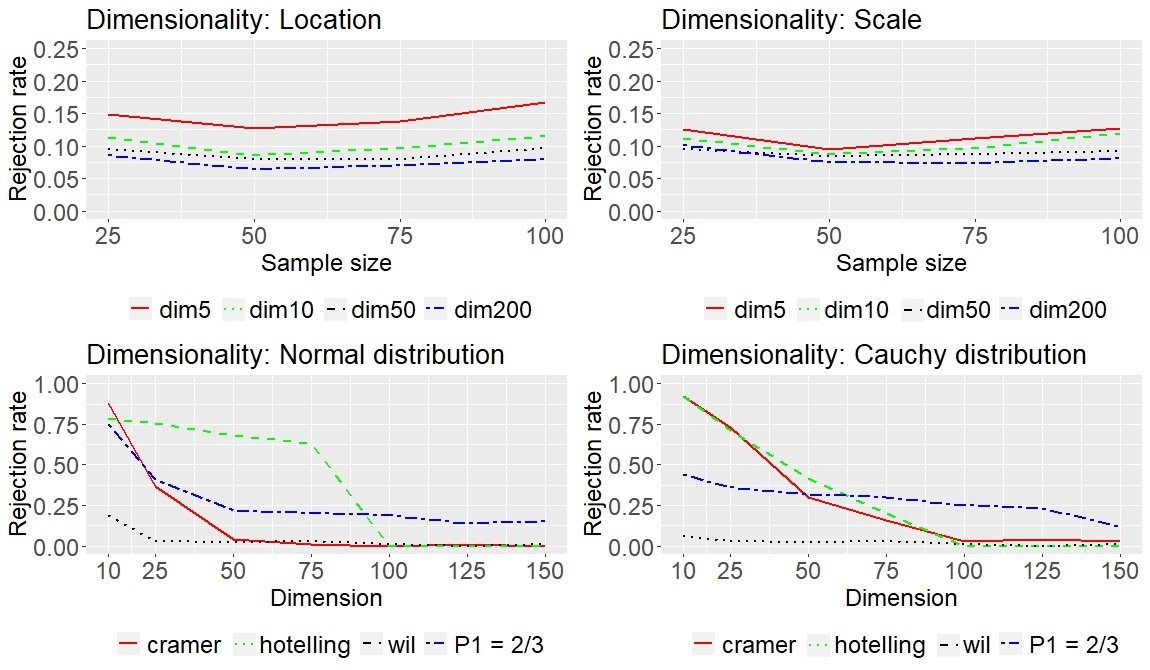}
		\end{minipage}%
		\caption{Deterioration of power of the averaged Wilcoxon test with the dimension. Top: $\Phi_{2/3}^d$ in location (left, $\Delta \mu = 0.6$) and scale (right, $\sigma_Y = 0.6 \sigma_X$) examples with growing dimension $D$, where $D= 5$ (red), $10$ (green), $50$ (black) and $200$ (blue line). Bottom: Comparison of power degradation for Cram\'{e}r's, Hotelling's and independent Wilcoxon's tests. } \label{fig::det}
	\end{figure}
	\noindent
	\begin{figure}[h] 
		\centering
		\begin{minipage}{\textwidth}
			\centering
			\includegraphics[width=1\linewidth]{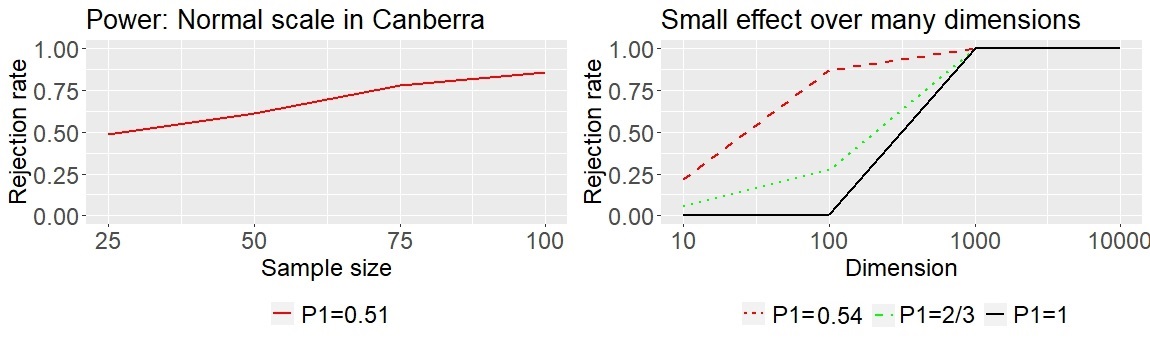}
			
		\end{minipage}%
		\caption{Left: Averaged Wilcoxon test over Canberra distance for scale problem, normal distributions with $\sigma_2 = 0.5\sigma_1$ in $10$ dimensions. Right: Empirical power for small effect size over all dimensions growing with the dimension, for the three values of $P_1$, for the normal distribution, log-scaled.}  \label{fig::Canberra}
	\end{figure}
	\noindent
	The deterioration of power due to the consideration of higher dimensions has been further examined and compared to other tests. The results of this set of simulations are presented in Figure \ref{fig::det}, where greater stability of power over dimensions has been observed for the averaged Wilcoxon test. The test $\Phi_{2/3}$ has been compared with Hotelling's $T^2$, Cram\'{e}r's and independent Wilcoxon tests, where the curves are denoted in blue, green, red and black, respectively. The tests have been compared in the location problem \eqref{eq:loc} and, additionally in a  Cauchy location problem with
	\begin{equation}
		X_i\sim \textnormal{Cauchy}(0, 1)^{D}, \quad Y_{50+i} \sim \textnormal{Cauchy}(1, 1) \times \textnormal{Cauchy}(0, 1)^{D-1}\quad i=1,\ldots,50.
	\end{equation}
	The sample size $n=m=50$ is fixed and the dimensions between $10$ and $150$. The independent Wilcoxon test displays similar stability over dimensions to averaged Wilcoxon, albeit with significantly lower power. Although Cram\'{e}r's and Hotelling's tests have higher power in problems of smaller dimension, these tests lose power rapidly and are less robust to dimensionality than the averaged Wilcoxon test.\\ 
	
	\noindent
	\textit{Dependence.} We consider further the example given in \ref{bsp::dependence} for the alternative
	\begin{equation*}
		Y_i \sim \mathcal{N}(\Delta \mathcal{M}, \Sigma ),
	\end{equation*}
	where $\Delta \mathcal{M} = (0.6,..., 0.6) \in \R^{10}$, i.e. the normal location problem under dependence. The chosen oracle value is $\hat{P}_1 = 0.536$ and the tests are performed with respect to the  Euclidean distance. The results are given in Figure \ref{fig::PowerRest}. We observe uniformly slight loss of power compared to the independent case. \\
	
	%\ignore{\AM{Ich habe dieses Beispiel in $2$ Dimensionen geplottet, um es mir vorstellen zu können, was passiert. Im Beispiel verschiebt man $Y$ genau in die "Richtung" der Verzerrung, s.d. es mehr Überschneidung zwischen $X$ und $Y$ gibt als im unabhängigen Fall (wo Verschiebung in allen Richtungen symmetrisch ist). Hätte man $\Delta\mathcal{M}$ in einer anderen Richtung gewählt (z.B. orthogonal), könnte es weniger Überschneidung geben und deshalb mehr power. Ich weiß nicht, wie man dieses Verständins hier hinzufügen könnte... }
	%}
	
	\noindent
	\textit{Uniform example.} For the uniform distribution considered in Section \ref{bsp::Uni}, we consider the alternative  
	\begin{equation*}
		Y_i \sim U[0.2,1.2]^2
	\end{equation*}
	and perform $\Phi_{\text{oracle}}^d$ with respect to the Euclidean metric. The power of the test is compared to Hotelling's and Cram\'{e}r's tests. The results are given in Figure \ref{fig::PowerRest}.\\
	
	%\ignore{\AM{Obgleich Hotelling and Cram\'{e}r das Level für $U[0,1]$ nicht erhalten, sie verlieren nicht viel power. Ich habe die Ergebnisse trotzdem hier hinzugefügt. Beispiele mit Cauchy, Gamma und bimodaler Normalverteilung fallen ähnlich aus, dh. ich habe es nicht geschafft, Hotelling in power deutlich zu schlagen. Zudem, $P_1 = 2/3$ und $P_1 = 1$ haben in diesem Fall sehr niedrigen Power (da echtes $P_1$ klein), so habe ich sie nicht geplottet.}}
	
	\noindent
	\textit{Power of the test for a non-standard distance.} A potentially interesting distance function for the application to genomic microarray data is the Canberra distance, which proved to be well suited for distinguishing between groups of individuals based on their micro RNA profiles in hierarchical clustering algorithms (cf. \cite{proksch2023personalised}). Defined by $d(X,Y) := \sum_{i=1}^D \frac{|X_i - Y_i|}{|X_i|+|Y_i|}$, this quasi-normed distance equalizes the contribution of individual dimensions to the overall distance between points. Unfortunately, the Canberra distance is not a metric and in particular does not fulfill  condition (D2) of  Maa's theorem. We  therefore do not have a guarantee for the well-posedness of a test problem defined with respect to  Canberra distances. The simulations also indicate that the location test over Canberra distance has almost no power in the examples considered here. However, it seems that it is possible to distinguish  scale differences well, as illustrated in Figure \ref{fig::Canberra}. In this example, we test the scale difference of normal distributions with $\sigma_2 = 0.5\sigma_1$.  Choosing $\hat{P}_1=1$ and $\hat{P}_1=2/3$ for the variance approximation grossly overestimates the variance due to the low $P_1$ oracle value (cf. Table \ref{tab::Normal}). As a result,  the corresponding test has low power. Nonetheless, the oracle value $\hat{P}_1 = 0.5063$ does have nontrivial power, further illustrating the necessity for a better variance approximation. \\
	\noindent
	Lastly, Figure \ref{fig::Canberra} visualises the examination of power of the three versions of the test for a small effect size over a large number of dimensions, i.e.
	\begin{equation*}
		Y_i \sim  \mathcal{N}(\Delta \mathcal{M}, I_{D}),
	\end{equation*}
	where $\Delta \mathcal{M} = (0.3, ..., 0.3) \in \R^D$ with the dimension $d$  in the range from $10$ to $10000$. The graph is given in log scale. The convergence of power to $1$ as the dimension goes to infinity can be observed. However, for the example considered here, the $l^1$ and $l^2$ distance outperform the test based on pair-wise Canberra distances.\\
	\FloatBarrier

	\section{Real Data Example}\label{Sec:RealData}
	\noindent
	Simulated examples indicate the stability of the power of averaged Wilcoxon tests with increasing dimension, making the test a viable choice for high-dimension low-sample-size problems (HDLSS). An example of such settings are genetic studies in which the goal is to detect the genes most differentially expressed between  groups from diverse populations. Contrasting groups can give an indication of underlying mechanisms that cause the group differentiation, e.g. indicate gene-disease associations or genetic characteristics of groups. Genomic microarray datasets usually possess large number of features (dimensions, genes) while at the same time having low sample sizes, especially relative to their dimension. On the one hand, performing such measurements is still very costly and complex. On the other hand, one fundamental limitation causing this predicament is low population size, specifically, low number of people with a particular disease, property or a set of properties, relative to the number of observed variables, which emphasizes the relevance of methods tailored to HDLSS settings. \\
	\noindent
	The following practical example is based on the dataset corresponding to the study \cite{Lung} examining micro RNA (miRNA) levels in blood plasma of patients with lung adenocarcinoma. The dataset consists of raw counts of originally $1509$ miRNAs (features in this example) in $57$ samples across $6$ groups. Groups are labeled "granuloma", "cancer", "exosomes granuloma", "exosomes cancer", "without exosomes granuloma" and "without exosomes cancer". The study aims at differentiating between the patients with lung cancer and misdiagnosed patients with benign conditions.  The samples of the cancer groups and the rest of the samples are, for the purposes of this example, pooled resulting in sample sizes $n = 30$ (pooled cancer) and $m = 27$ (pooled healthy). The measurements are represented in \textit{counts per million} (CPM). CPM is a normalization method for RNA sequencing data that represents the number of reads of particular RNA scaled by the total number of reads in the sample, then multiplied by one million to facilitate comparisons between samples. Ad hoc dimensionality reduction has already been performed by including only the miRNAs which have counts per million larger than $1$ in more than $1$ sample, excluding the miRNAs that have low concentrations and are rarely measured (\cite{anders2013count}, p24). Only the miRNA with CPM larger than $1$ have been included in the dataset. \\

	\begin{table}[h!]
		\centering
		\begin{tabular}{||c c c c c||} 
			\hline
			miRNA       &    log-fold change  & log-CPM   &    p-value   &      BH correction \\ [0.5ex] 
			\hline\hline
			
			hsa-miR-3529-3p &  2.272515 & 9.817300 & 4.820727 $\cdot 10^{-8}$ & 7.274476$\cdot 10^{-5}$ \\
			hsa-miR-7-5p   &   2.131011 &9.013191 &6.358140 $\cdot 10^{-7}$ &4.797217$\cdot 10^{-4}$ \\
			hsa-miR-576-3p  &  1.590984 &9.030031 &1.123073$\cdot 10^{-5}$ &5.649059$\cdot 10^{-3}$ \\
			hsa-miR-1827     & 1.673530 &4.110844 &2.038026$\cdot 10^{-5}$ &6.829125$\cdot 10^{-3}$ \\
			hsa-miR-4306      &1.948530 &8.998243 &2.262798$\cdot 10^{-5}$ &6.829125$\cdot 10^{-3}$ \\
			hsa-miR-9-5p      &2.400299 &7.625194 &5.718436$\cdot 10^{-5}$ &1.351211$\cdot 10^{-2}$ \\
			hsa-miR-501-5p    &3.018604 &4.190806 &7.337386$\cdot 10^{-5}$ &1.351211$\cdot 10^{-2}$\\
			hsa-miR-629-3p    &3.218628 &3.371337 &7.622035$\cdot 10^{-5}$ &1.351211$\cdot 10^{-2}$\\
			hsa-miR-1180-3p   &1.384724 &9.953922 &9.180467$\cdot 10^{-5}$ &1.351211$\cdot 10^{-2}$\\
			hsa-miR-92b-5p    &1.718041 &6.529031 &9.441425$\cdot 10^{-5}$ &1.351211$\cdot 10^{-2}$\\
			hsa-miR-218-5p    &2.803250 &7.846570 &9.849784$\cdot 10^{-5}$ &1.351211$\cdot 10^{-2}$\\
			hsa-miR-3158-3p   &1.702470 &5.964335 &1.371379$\cdot 10^{-4}$ &1.724509$\cdot 10^{-2}$\\
			hsa-miR-181a-2-3p &1.221580 &8.373332 &3.234169$\cdot 10^{-4}$ &3.754124$\cdot 10^{-2}$\\
			hsa-miR-500a-5p   &5.036958 &3.772247 &4.843275$\cdot 10^{-4}$ &5.183284$\cdot 10^{-2}$\\
			hsa-miR-877-5p    &1.638779 &5.765388 &5.152369$\cdot 10^{-4}$ &5.183284$\cdot 10^{-2}$\\ [1ex]
			\hline
		\end{tabular}
		\caption{Differentially expressed micro RNAs with their log fold changes, log-counts per million averages, p-values of the exact test and Benjamini-Hochberg corrected p-values.}
		\label{table::miRNAs}
	\end{table}
	\noindent
	It is important to note that the test results over pooled groups are not necessarily biologically meaningful. However, the example provided is self-consistent in the sense that the results of  standard differential expression tools as used by bioinformaticians (e.g. exact tests over normalized counts with multiplicity correction, see \cite{robinson2010edger})
	have been confirmed by using averaged Wilcoxon-based tests. The example furthermore illustrates well the necessity of testing at least two out of three equalities of distance distributions from Maa's theorem.   \\
	\noindent
	Differential expression analysis has been performed on the pooled dataset using the \texttt{edgeR} package in \texttt{R}. The in the package implemented procedure uses a locally exact test for differences in mean between two groups, which assumes a negative binomial distributions of miRNA counts and adjusts $p$-values according to the Benjamini-Hochberg correction. The first $15$ most differentially expressed miRNAs are given in Table \ref{table::miRNAs}, reported with log-fold changes and log-CPM for completeness. The procedure makes $13$ discoveries based on adjusted $p$-values below $0.05$.\\

	\begin{figure}[h] \label{fig::hist}
		\centering
		\begin{minipage}{\textwidth}
			\centering
			\includegraphics[width=1\linewidth]{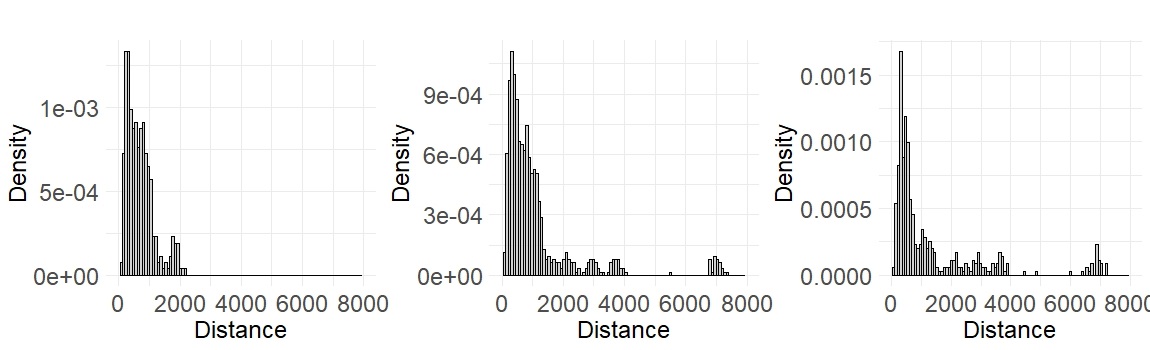}
			
		\end{minipage}%

		\caption{Normalized histograms of $d_{XX}$ (left), $d_{XY}$ (middle) and $d_{YY}$ (right)-example of distance distributions over the panel containing first $6$ differentialy expressed miRNAs (1..6 of Table \ref{table::averaged test}).} \label{fig::Dim}
	\end{figure}

	\begin{table}[h!]
		\centering
		\begin{tabular}{||c c c c c||} 
			\hline
			Dimensions & $W_1$ & p & $W_2$ & p \\ [0.5ex] 
			\hline\hline
			1.2 & 3.13356... &0.00086 & -2.06327... &0.01954 \\ 
			1.3 &  1.49720... & 0.06717 & -2.62111... &0.00438 \\
			1.4 & 1.49987... & 0.06682 & -2.62000... &0.00439 \\
			1.5 & 2.18150... & 0.01457 & -3.14900... & 0.00081\\
			1.6 & 2.15037... & 0.01576 & -3.15471... & 0.00080 \\
			1.7 & 2.13957... & 0.01619 &  -3.15738... & 0.00079 \\
			1.8 & 2.14050... & 0.01615 & -3.15757... & 0.00079\\
			1.9 & 1.26487... & 0.10295 & -2.94173... & 0.00163\\
			1.10 & 1.25779... & 0.10423 &-2.94330 ... & 0.00162 \\ [1ex]
			\hline
		\end{tabular}
		\caption{Values of the test statistics and corresponding p-values for the test problems $\mathcal{H}_{0,1}: d_{XX} \sim d_{XY}$ (columns $2$ and $3$) and $\mathcal{H}_{0,2}: d_{YY} \sim d_{XY}$ (columns $4$ and $5$) over the Euclidean distance and relevant dimensions implied by the sequence of tests given in Table \ref{table::miRNAs}}
		\label{table::averaged test}
	\end{table}
	\noindent
	These results are then compared with the test $\Phi_{2/3}$ over the Euclidean distance for the test problems $\mathcal{H}_{0,1}: d_{XX} \overset{\mathcal{D}}{=} d_{XY}$ and $\mathcal{H}_{0,2}: d_{YY} \overset{\mathcal{D}}{=} d_{XY}$, where $X$ represents the healthy group and $Y$ the sick group. Note that the results of the tests are similar over other $l^p$-induced metrics. The dimensions are added stepwise according to the list obtained by differential expression analysis in Table \ref{table::miRNAs}. The test of the first hypothesis detects the difference in distributions of $d_{XX}$, the distance between the points in the healthy group and $d_{XY}$, the distance between the points in sick and healthy groups,  while the second compares the distances between the sick $d_{YY}$ with $d_{XY}$. The test statistics and corresponding p-values are given in Table \ref{table::averaged test}. The second test confirms the first $10$ discoveries, while the first test does not reject for the combinations of first $3$, $4$, $9$ and $10$ features.\\

	\section{Outlook} \label{Sec:Conclusion}
	
	With this article, we introduced a Wilcoxon-inspired rank-based test for two-sample homogeneity test problems over interpoint distances, which provides a straightforward way for dimensionality reduction. The advantages of the test, which inherits some of its properties from the Wilcoxon test, include weak regularity conditions on the underlying distributions enabling comparisons for location, scale and shape problems, readily available critical values, and robustness against high-dimensionality. A procedure revolving around this test would include the choice of a metric or  dissimilarity function between the points tailored to the structure of the data. We  examined the power of the test in different situations and illustrated its use on  microarray data, all of which underlines the  potential  of the proposed test. While,  for  simplicity of the argument, we  concentrated on  situations for which ties in distance comparisons do not occur, we believe that the results of this paper can be generalized to cases where ties are possible (discrete distributions and corresponding, suitably chosen metrics). A discrete version of  Maa's theorem provides a basis for such an approach. While the dependence structure in the case allowing for ties  changes, the orders of the dependency neighbourhoods do not, as well as the orders of other relevant quantities in Theorem \ref{thm::CLT} needed to ensure a Gaussian limit. Naturally and notably, allowing for ties would necessitate a refined  variance estimation for the implementation of the test.  \\
	We further illustrated the loss of power in low sample size situations due to the choice of an estimator for the probability $P_1$  in the variance approximation. The variance is, on one hand,   underestimated by the use of oracle  values for $P_1$ and, on the other hand, overestimated by the general upper bound provided by Theorem \ref{thm::P2}. The power of the test could further be increased in low sample size situations by improving the variance estimate, foremost by tightening the bound on $P_1$, but also by including higher-order terms in the asymptotic expansion of the variance. The probability $P_1$ is an object interesting on its own merit. While the probability $P_2$ is a universal quantity over all underlying distributions and metric choices, $P_1$ depends on dimension, distribution and metric, as demonstrated by numerical simulations and indicated by two specific examples considered in this article.
	We conjecture that a tighter upper bound, dependent on dimension and metric, can be established that clarifies the interplay of these parameters. However, this seems to be highly non-trivial and will be considered in future research.
	On the other side, the universality of the lower bound on $P_1$ is not clear. By means of Theorem \ref{thm::CLT} we cannot guarantee the convergence of the test statistics for  distributions for which $P_1=\frac{1}{2}$ (given that these exist). The characterization of $P_1$ is an interesting problem that we intend to study further. \\
	Compared to some competitors computing the test statistic is computationally expensive.
	This arises from the necessity of making all  possible comparisons between induced distances, which, given balanced samples, results in a computational cost of order $n^4$.
	%This remains unimprovable.
	For comparison, note that the energy distance based test and Cram\'{e}r test are of the order $n^2\log(n)$. Nonetheless, it is important to note that no further computations are necessary in order to apply the test since, when using the universal bound on the variance, the critical values are readily available and do  not require expensive resampling approaches.
	Despite the advocated use case for the test revolving around the low sample size setting, this inherent cost plays a significant role if the test is to be integrated into a multiple testing framework, as illustrated by the practical differential expression example.\\
	Another interesting question warranting further investigation is the suitability of the distribution of non-standard and pseudo differences, such as the Canberra distance, which is of special interest in applications for the purpose of homogeneity testing. Namely, the theorem of Maa et al. does not guarantee the equivalence of testing for equal distributions over the Canberra distance function, since it does not fulfill the scalability and translational invariance property (D2). The arguments for the lower bound on $P_1$ given in the proof of Theorem \ref{thm::P2}  do not apply to the  Canberra distance function, either. However, it has been indicated that the test based on the  Canberra distance could be sensitive to scale problems. Under which additional conditions on the underlying distributions of $X$ and $Y$  the theorem of Maa et al. remains valid should be considered as a question for future research.\\

	\section*{Acknowledgements} 
	Annika Betken
	gratefully acknowledges financial support from the Dutch Research Council (NWO) through VENI grant 212.164. 
	
	\section{Appendix}\label{Sec:Appendix}
	
	\subsection{Proof Theorem \ref{thm::Tconv}}\label{sec:Tconv}
	$(\# I^{(n,m)})^{-1} T^{(n,m)}$ corresponds up to a multiplicative constant to the second order $U$-statistics with kernel $\I_{\{d(X_{i_1}, X_{i_2} ) \leq d(X_{i_3}, Y_{j_1}) \}}$ of the degree $(3,1)$. By the strong law of large numbers for multisample $U$-statistics (e.g. Theorem 3.2.1, p.97 in \cite{korolyuk2013theory}), we have the convergence.
	
	\subsection{Proof Theorem  \ref{thm::Var}}
	Since, by assumption, ties in the distance comparisons considered here almost surely do not occur,
	the second sum in (\ref{eqn::TT}) is constant almost surely. Therefore, the test statistic $T^{(n,m)}$ and its variance correspond to (\ref{eqn::Tfinal}) and (\ref{eqn::VarT(n,m)}), respectively, and we find
	\begin{equation}\label{eqn::VarT1}
		\Var (T^{(n,m)}) = \frac{1}{4} \# I^{(n,m)} + \frac{1}{4} \sum_{(\mathbf{i}, \mathbf{j}) \in E_{(n,m)}}  \left( 2\Prob (\I_{\mathbf{i}} = \I_{\mathbf{j}}) - 1 \right),
	\end{equation}
	where $ I^{(n,m)} = \{ \mathbf{i}\in \N^4 | i_1, i_2, i_3 \in \left[n\right], i_4 \in \left[ n+m\right] \setminus \left[ n \right], \, i_1 < i_2, i_1 \neq i_3, i_2 \neq i_3 \}$ and
	$E_{(n,m)}=\{\mathbf{i}, \mathbf{j} \in I^{(n,m)}\,|\,\mathbf{i}\cap\mathbf{j}\neq\emptyset , \mathbf{i}\neq\mathbf{j}\}$. The asymptotic order of the first summand, which corresponds to the diagonal elements of the variance, is hence given by $\# I^{(n,m)} \sim n^4$, i.e. $\# I^{(n,m)} / n^4 \rightarrow C < \infty$ for some constant $C>0$. Note that $n \sim m$ by the assumption. In order to determine the (order of the) asymptotically leading term, we will now decompose the index set $E_{(n,m)}$ according to the amount of shared indices: $\# \mathbf{i}\cap\mathbf{j}$. The most frequent occurrence is the case $ \# \mathbf{i}\cap\mathbf{j}=1$, i.e., most pairs $(\mathbf{i}, \mathbf{j}) \in E_{(n,m)}$ share exactly one index. We define the set $E_{(n,m),1}\subset E_{(n,m)}$ such that it contains exactly these indices and set $E_{(n,m),{\rm rest}}:=E_{(n,m)}\backslash E_{(n,m),1}$. We find that the number of the pairs $(\mathbf{i}, \mathbf{j})$ sharing two or more indices is of the order $n^6$: we have order of $\binom{n}{2}$ choices for the two fixed indices and order $\binom{n}{4}$ choices for the rest of the indices. There are less pairs sharing three indices and therefore $\# E_{(n,m),{\rm rest}}\sim n^6$. $E_{(n,m), 1}$ and $E_{(n,m)}$ are hence of the same asymptotic order, namely of order  $n^7$: we have $n$ choices for fixing one index and order of $\binom{n}{6}$ choices for the rest of the indices. The asymptotically dominating part of the sum in (\ref{eqn::VarT1}) is therefore that over pairs  $(\mathbf{i}, \mathbf{j})$ which share exactly one index. \\
	\noindent
	A pair $(\mathbf{i}, \mathbf{j})$ of multiindices sharing exactly one index corresponds to a probability $\Prob (\I_{\mathbf{i}} = \I_{\mathbf{j}})$, e.g., for the case $i_1 = j_1$ it is $\Prob (\I_{\mathbf{i}} = \I_{\mathbf{j}})= \Prob\big(\I_{ \{d(X_1,X_2)\leq d(X_3,X_4)\}}= \I_{\{d(X_1,X_5)\leq d(X_6,X_7)\}}\big) =: P_1 $. Since under the null hypothesis with no ties
	\begin{align*}
		P_1&=\Prob(\I_{ \{d(X_1,X_2)\leq d(X_3,X_4)\}}= \I_{\{d(X_1,X_5)\leq d(X_6,X_7)\}} )\\&= 1 - \Prob(\I_{ \{d(X_1,X_2)\leq d(X_3,X_4)\}}= \I_{\{d(X_5,X_6)\leq d(X_1,X_7)\}} ),
	\end{align*}
	there are only two possible values of the probability $\Prob (\I_{\mathbf{i}} = \I_{\mathbf{j}})$ if the multiindices share exactly one index (namely $P_1$ and $1-P_1$). Therefore, in the following we will rewrite the second term of the variance such that it only contains the terms $2P_1$ and $2P_1-1.$ The final task in this proof is to count the occurrences of each. 
	Notice that we have  $P_1 > \frac{1}{2}$ by assumption, which guarantees that the summands $2P_1 - 1$ in the second sum in (\ref{eqn::VarT(n,m)}) are positive, while $2(1-P_1) -1 = -(2P_1 - 1)$ are negative. In consequence there is a number of cancelling pairs in the sum.  All the possible configurations of such $(\mathbf{i}, \mathbf{j}) \in  I^{(n,m)} \times  I^{(n,m)}$ such that $\#(\mathbf{i}\cap\mathbf{j})=1$ with their probabilities expressed in terms of  $P_1$ are given in Table \ref{tbl::ij}. Note that  Table \ref{tbl::ij} is symmetric both in the sense of the corresponding probabilities as well as regarding the number of instances in each case. \\
	
	\begin{table}[h]

		\begin{center}
			\begin{tabular}{|c|c|c|c|c|}
				\hline
				$ \mathbf{j} \setminus \mathbf{i}$ & $i_1$ & $i_2$ & $i_3$ & $i_4$ \\
				\hline
				$j_1$ &  \cellcolor{green!25}$P_1$ & \cellcolor{green!25}$P_1$ & \cellcolor{red!25} $1-P_1$ & \cellcolor{gray!25} $\emptyset$ \\
				\hline
				$j_2$& \cellcolor{green!25}$P_1$ & \cellcolor{green!25}$P_1$ & \cellcolor{red!25} $1-P_1$ &  \cellcolor{gray!25} $\emptyset$\\
				\hline
				$j_3$& \cellcolor{red!25} $1-P_1$ & \cellcolor{red!25} $1-P_1$  & \cellcolor{green!25}$P_1$ &  \cellcolor{gray!25} $\emptyset$ \\
				\hline
				$j_4$& \cellcolor{gray!25} $\emptyset$ &  \cellcolor{gray!25} $\emptyset$ &  \cellcolor{gray!25} $\emptyset$ & \cellcolor{green!25}$P_1$ \\
				\hline
			\end{tabular}
			
		\end{center}
		\caption{Possible configurations of $\mathbf{i}$ and $\mathbf{j}$ sharing exactly one index and their corresponding probabilities $\Prob(\I_{\mathbf{i}} = \I_{\mathbf{j}})$. Intersection of column $i_p$ with row $j_q$ represents the case $i_p = j_q$. Green fields represent positive summands and the red ones negative.}\label{tbl::ij}
	\end{table}
	\noindent
	\paragraph{Counting the elements in the relevant index sets}~\\
	In the following, we count the number of all pairs sharing exactly one index case by case. 
	
	\paragraph{Case 1: $\mathbf{i_1 = j_1}$.}~\\
	The first case is $i_1 = j_1$, i.e. we will determine the cardinality $\# C_{i_1 = j_1},$ where $$C_{i_1 = j_1} := \{ (\mathbf{i}, \mathbf{j})\in I^{(n,m)}\times I^{(n,m)} | i_1 = j_1 \textnormal{ and } i_p \neq j_q \textnormal{ for } p,q =2,3,4 \}.$$ Let $\mathbf{i}$ be fixed. There are $k_{j_2}$ possible choices for $j_2$, where
	\begin{equation}\label{eqn::choices}
		k_{j_2}= \begin{cases}
			n -i_1 - 1, \textnormal{ if } i_3 < i_1\\
			n -i_1 - 2,  \textnormal{ else for } i_3 > i_1
		\end{cases}
	\end{equation}
	(since $j_2 > i_1$). There are $k_{j_3}=n-4$ choices for $j_3$, since $i_1$, $i_2$, $i_3$ and $j_2$ are excluded, i.e. already chosen. There are $k_{j_4}=m-1$ possibilities for $j_4$, since $i_4$ is excluded. 
	We abbreviate for a fixed $\mathbf{i}$ $$\{ i_1 = j_1\} :=\{\mathbf{j}\in I^{(n,m)}|  i_1 = j_1 \textnormal{ and } i_p \neq j_q \textnormal{ for } p,q =2,3,4\}.$$
	For a fixed $\mathbf{i}$ it is therefore
	\begin{align*}
		\#  \{ i_1 = j_1\} =1\cdot k_{j2}\cdot k_{j_{3}}\cdot k_{j_{4}}
		= (n-4)(m-1)\begin{cases}
			(n -i_1 - 1), \textnormal{ if } i_3 < i_1\\
			(n -i_1 - 2),  \textnormal{ if } i_3 > i_1
		\end{cases}.
	\end{align*}
	We obtain the desired cardinality $\# C_{i_1 = j_1}$ by summation over all $\mathbf{i} \in I^{(n,m)}$:
	\begin{align}\label{eqn::i=j}
		\# C_{i_1 = j_1}=\sum_{\mathbf{i} \in I^{(n,m)}} \# \{ i_1 = j_1\} &= (n-4)(m-1) (\sum_{\mathbf{i} \in I^{(n,m)}, i_1 > i_3}(n- i_1 - 1) + \sum_{\mathbf{i} \in I^{(n,m)}, i_1 < i_3} (n- i_1 - 2)) \notag\\
		&= (n-4)(m-1) \bigg( n \#  I^{(n,m)} - \# \{ \mathbf{i} \in I^{(n,m)}| i_1 > i_3\} \notag \\ 
		&\qquad- 2 \# \{ \mathbf{i} \in I^{(n,m)}| i_1 < i_3\} - \sum_{\mathbf{i} \in  I^{(n,m)} } i_1\bigg).
	\end{align}
	and since it is
	\begin{equation*}
		\# \{ \mathbf{i} \in I_1^{(n,m)}| i_1 < i_3\} +  \# \{ \mathbf{i} \in I^{(n,m)}| i_1 > i_3\} = \# I^{(n,m)}  
	\end{equation*}
	we have
	\begin{equation*}
		\# C_{i_1 = j_1}=(n-4)(m-1) \bigg( (n - 2) \#  I^{(n,m)} +   
		\# \{ \mathbf{i} \in I^{(n,m)}| i_1 > i_3\} - \sum_{\mathbf{i} \in  I^{(n,m)} } i_1\bigg)
	\end{equation*}
	For the terms in \eqref{eqn::i=j}, we find
	\begin{align*}
		\sum_{\mathbf{i} \in  I^{(n,m)} } i_1 &= \sum_{i_1 = 1}^n i_1 \# \{ \mathbf{k} \in I^{(n,m)}| k_1 = i_1\} 
		=  \sum_{i_1 = 1}^n i_1 (n-i_1)(n-2)m \\
		&=\frac{1}{2}n^2(n+1)(n-2)m-\frac{1}{6}n(n+1)(2n+1)m(n-2)\\
		&= \frac{1}{6}nm (n-1)(n-2)(n+1),
	\end{align*}
	as well as
	\begin{align*}
		\# \{ \mathbf{i} \in I^{(n,m)}| i_1 > i_3\} &=  \sum_{i_1 = 1}^n (i_1 - 1)(n-i_1)m 
		=m  \sum_{i_1 = 1}^n (ni_1 + i_1 -i_1^2 - n ) \\
		&=\frac{1}{2}nm(n+1)^2-\frac{1}{6}n(n+1)(2n+1)-mn^2
		= nm\left(\frac{1}{6}n^2 - \frac{1}{2}n +\frac{1}{3}\right),
	\end{align*}
	and $\# I^{(n,m)} = \frac{1}{2}n(n-1)(n-2)m$. Plugging these results into Eq. (\ref{eqn::i=j}), we obtain
	\begin{equation}\label{eqn::Ci1_j1}
		\# C_{i_1 = j_1} = \frac{1}{3}nm(m-1)(n-1)(n-2)(n-3)(n-4).
	\end{equation}

	\noindent The final results of all other cases discussed below will be expressed as multiples of $\# C_{i_1 = j_1}$.\\
	
	\paragraph{Case 2: $\mathbf{i_2 = j_2}$ results in $\mathbf{\# C_{i_2 = j_2} = \# C_{i_1 = j_1}}$.}~
	\smallskip
	
	\noindent Let $C_{i_2 = j_2} := \{ (\mathbf{i}, \mathbf{j})\in I^{(n,m)}\times I^{(n,m)} | i_2 = j_2 \textnormal{ and } i_p \neq j_q \textnormal{ for } p,q =1,3,4 \}$. In this case, for any fixed $i_2 = j_2$, there are $i_2 - 1$ and $i_2 - 2$ possibilities to choose $i_1$ and $j_1$, then $n-3$ and $n-4$ possible choices for $i_3$ and $j_3$ since $3$ and $4$ indices have already been chosen and fixed and there are $m$ and $m-1$ possibilities for $i_4$ and $j_4$. It is therefore
	\begin{align*}
		\#  C_{i_2 = j_2} &= \sum_{i_2 = 2}^n (i_2 - 1)(i_2 - 2)(n-3)(n-4) m (m-1) \\
		&= m(m-1) (n-3) (n-4) \sum_{i_2 = 2}^n (i_2^2 -3i_2 + 2).
	\end{align*}
	Since 
	\begin{align*}
		\sum_{i_2 = 2}^n (i_2^2 -3i_2 + 2) &= \frac{1}{6}n(n+1)(2n+1) - 1 -3(\frac{1}{2}n(n+1) - 1) +2(n-1) \\
		&= \frac{1}{3}n(n-1)(n-2)
	\end{align*}
	we conclude $\# C_{i_2 = j_2} = \# C_{i_1 = j_1}$.\\

	\paragraph{Case 3: $\mathbf{i_2 = j_1}$ results in $\mathbf{ \# C_{i_2 = j_1} = \frac{1}{2}\#  C_{i_1 = j_1} }$.}~\smallskip
	
	\noindent Similarly, for $C_{i_2 = j_1} := \{ (\mathbf{i}, \mathbf{j})\in I^{(n,m)}\times I^{(n,m)} | i_2 = j_1 \textnormal{ and } i_p \neq j_q \textnormal{ for } p =1,3,4, q = 2,3,4 \}$ we have, for a fixed $i_2 = j_1$,
	$i_2 - 1$ possibilities for the choice of $i_1$, $n-i_2$ possibilities for the choice of $j_2$ since $i_2 = j_1 < j_2 $, $n-3$ and $n-4$ for $i_3$ and $j_3$ and $m$ and $m-1$ for $i_4$ and $j_4$ and therefore 
	\begin{align*}
		\#  C_{i_2 = j_1} &= \sum_{i_2 = 2}^n (i_2 - 1)(n-i_2)(n-3)(n-4) m (m-1) \\
		&= m(m-1) (n-3) (n-4) \sum_{i_2 = 2}^n (ni_2 - n -i_2^2 +i_2)
	\end{align*}
	Since the latter sum equals $\frac{1}{6}n(n-1)(n-2)$, we conclude $ \# C_{i_2 = j_1} = \frac{1}{2}\#  C_{i_1 = j_1}$.\\
	
	\paragraph{Cases 4 and 5: $\mathbf{i_3 = j_1}$ and $\mathbf{i_3 = j_2}$ result in $\mathbf{C_{i_3 = j_1} =C_{i_3 = j_2}= \frac{3}{4} \#  C_{i_1 = j_1}}$.}~
	\smallskip
	
	\noindent For a fixed $i_3 = j_1$ there are $\binom{n-1}{2}$ choices for $i_1$ and $i_2$, $k_{j_2}$ choices for $j_2$ where
	\begin{equation*}
		k_{j_2} = \begin{cases}
			n -i_3, \textnormal{ if } i_1 < i_2 < i_3\\
			n -i_3 -1,  \textnormal{ if } i_1 < i_3 < i_2 \\
			n -i_3 -2,  \textnormal{ if } i_3 < i_1 < i_2
		\end{cases}.
	\end{equation*}
	and furthermore $n-4$ choices for $j_3,$ $m$ choices for $i_4$ and $m-1$ choices for $j_4$ and therefore
	\begin{align*}
		\#  C_{i_3 = j_1} &=(n-4)m(m-1)\left(  \frac{1}{2}(n-1)(n-2)\sum_{i_3 = 1}^n (n-i_3) -3\binom{n}{3}\right),
	\end{align*}
	where $\binom{n}{3}$ is the number of different possibilities to pick indices $i_1, i_2$ and $i_3$ between $1$ and $n$ such that $i_1 < i_3 < i_2$. The same for the ordering $i_3 < i_1 < i_2$. This yields
	\begin{align*}
		\#  C_{i_3 = j_1} &=(n-4)m(m-1)\left(  \frac{1}{2}(n-1)(n-2)\bigg(n^2-\frac{n(n+1)}{2}\bigg)-\frac{n(n-1)(n-2)}{2}\right)\\
		&=(n-4)m(m-1)\left(  \frac{1}{4}n(n-1)(n-2)(n-3)\right),
	\end{align*}

	and hence $\#  C_{i_3 = j_1} = \frac{3}{4} \#  C_{i_1 = j_1}$. Since it holds $\#  C_{i_3 = j_1} = \#  C_{i_3 = j_2} $, we also obtain $\#  C_{i_3 = j_2} = \frac{3}{4} \#  C_{i_1 = j_1}$.\\
	
	\paragraph{Case 6: $\mathbf{i_3 = j_3}$ results in $\mathbf{\#  C_{i_3 = j_3} = \frac{3}{4} \#  C_{i_1 = j_1}}$.}~
	\smallskip
	
	\noindent Let $i_3 = j_3$ be fixed. Choosing a $2$ element subset of the rest of $n-1$ points defines $i_1$ and $i_2$, i.e. there are $\binom{n-1}{2}$ possibilities and similarly, there are $\binom{n-3}{2}$ possibilities to chose $j_1$ and $j_2$ from the rest $n-3$ of points, while $i_4$ and $j_4$ can be chosen in $m$ and $m-1$ ways, therefore having
	\begin{align*}
		\#  C_{i_3 = j_3} &= \sum_{i_3=1}^n \frac{1}{2}(n-1)(n-2)\frac{1}{2} (n-3)(n-4) m(m-1) \\
		&= \frac{1}{4} nm(m-1)(n-1)(n-2)(n-3)(n-4)
	\end{align*}
	and we conclude $\#  C_{i_3 = j_3} = \frac{3}{4} \#  C_{i_1 = j_1}$.\\
	
	\paragraph{Case 7: $\mathbf{i_4 = j_4}$ results in $\mathbf{\#  C_{i_4 = j_4} =\frac{3}{4} \frac{n-5}{m-1}   \#  C_{i_1 = j_1}}$.}~
	\smallskip
	
	\noindent For fixed $i_4 = j_4$, there are $\binom{n}{2}$ choices of $i_1$ and $i_2$ and $\binom{n-2}{2}$ choices for $j_1$ and $j_2$, $n-4$ and $n-5$ for $i_3$ and $j_3$ therefore having
	\begin{align*}
		\#  C_{i_4 = j_4} &= \sum_{i_4=1}^m \frac{1}{2}n(n-1)\frac{1}{2} (n-2)(n-3) (n-4)(n-5) \\
		&= \frac{1}{4} nm(n-1)(n-2)(n-3)(n-4)(n-5) \\
		&= \frac{3}{4} \frac{n-5}{m-1}   \#  C_{i_1 = j_1} .
	\end{align*}
	
	\paragraph{Combining the findings of \textbf{Case 1} - \textbf{Case 7}}~
	\smallskip
	\noindent
	For the rest of the cases in Table \ref{tbl::ij} the combinatorial symmetry holds. We therefore have $\# C_{i_2 = j_2} = \# C_{i_1 = j_1}$, $\# C_{i_1 = j_2} = \# C_{i_2 = j_1} = \frac{1}{2} \# C_{i_1 = j_1} $, $\# C_{i_3 = j_1} = \# C_{i_3 = j_2} = \# C_{i_1 = j_3} = \# C_{i_2 = j_3} = \# C_{i_3 = j_3} = \frac{3}{4} \# C_{i_1 = j_1}$ and $\# C_{i_4 = j_4} = \frac{3}{4} \frac{n-5}{m-1}\# C_{i_1 = j_1}$, as summarized in Table \ref{tbl::ijcard} below, where all  cardinalities are listed as multiples of the reference cardinality $\#C_{i_1=j_1}$.
	
	\begin{table}[h]
		\begin{center}
			\begin{tabular}{|c|c|c|c|c|}
				\hline
				$ \mathbf{j} \setminus \mathbf{i}$ & $i_1$ & $i_2$ & $i_3$ & $i_4$ \\
				\hline
				$j_1$ & \cellcolor{green!25}1& \cellcolor{green!25} $\frac{1}{2}$ & \cellcolor{red!25} $\frac{3}{4}$ &  \cellcolor{gray!25}$0$ \\
				\hline
				$j_2$&  \cellcolor{green!25}$\frac{1}{2}$ &\cellcolor{green!25}  $1$ &  \cellcolor{red!25} $\frac{3}{4}$ & \cellcolor{gray!25}$0$\\
				\hline
				$j_3$&  \cellcolor{red!25}$\frac{3}{4}$ & \cellcolor{red!25}  $\frac{3}{4}$  &  \cellcolor{green!25} $\frac{3}{4} $ &   \cellcolor{gray!25}$0$ \\
				\hline
				$j_4$&  \cellcolor{gray!25}$0$ &  \cellcolor{gray!25} $0$ &  \cellcolor{gray!25} $0$ &\cellcolor{green!25} $\frac{3}{4}\frac{n-5}{m-1}$ \\
				\hline
			\end{tabular}
			
		\end{center}
		\caption{Cardinalities of  configurations reported in Table \ref{tbl::ij} in matching colors as multiples of the reference cardinality $\#C_{i_1=j_1}$.}
		\label{tbl::ijcard}
	\end{table}
	
	Since $2P_1-1 = -(2(1-P_1) - 1)$ the corresponding summands in the sum (\ref{eqn::VarT1}) are negative and we conclude
	\begin{align*}
		\frac{1}{4}    \underset{\# \mathbf{i} \cap \mathbf{j} = 1}{\sum_{(\mathbf{i}, \mathbf{j}) \in E_{n,m}}} (2\Prob(\I_{\mathbf{i}}= \I_{\mathbf{j}}) - 1) & =  \frac{1}{4} \# C_{i_1 = j_1} \bigg(2\cdot1+2\cdot\frac{1}{2}+\frac{3}{4} +\frac{3}{4}\frac{n-5}{m-1}-4\cdot\frac{3}{4} \bigg) (2 P_1 -1) \\
		& = \frac{3}{16} \frac{n+m-6}{m-1}   \# C_{i_1 = j_1} (2P_1 - 1).
	\end{align*}
	By \eqref{eqn::VarT1} and the negligibility of the diagonal elements and of the sums over pairs $(\mathbf{i}, \mathbf{j})$ sharing $2$ and $3$ indices, the claim of the theorem follows. 
	
	\hfill$\Box$

	\subsection{Proof Theorem \ref{thm::P2}}
	\noindent
	We first prove that $P_1<\frac{2}{3}$. to this end, let
	$P_2 := \Prob( \I_{\mathbf{i}} = \I_{\mathbf{j}})$ where $\mathbf{i}$ and $\mathbf{j}$ share exactly the first two indices, i.e.
	\begin{equation*}
		P_2 = \Prob (\I_{ \{d(X_1, X_2) \leq d(X_3, X_4) \}} = \I_{ \{d(X_1, X_2) \leq d(X_5, X_6) \}}).
	\end{equation*}
	We now show that $P_2=\frac{2}{3}$ and that $P_1<P_2$.\\
	Since $X_1$, ..., $X_6$ are i.i.d., so are the random variables $A:= d(X_1, X_2)$, $B:=d(X_3, X_4)$ and $C:=d(X_5, X_6)$. We rewrite $P_2$ as 
	$
	P_2=\mathbb{P}(A\leq B\wedge A\leq C).
	$
	It is therefore 
	\begin{align*}
		P_2 &= 1 - \Prob((A\leq B \land A>C) \lor (A\leq C \land A>B) ) \\
		&= 1-\left(\Prob(A\leq B \land A>C)+P(A\leq C \land A>B)\right) \\
		&= 1-2 \Prob(C<A\leq B),
	\end{align*}
	having $\Prob (A \leq B \land A > C) = \Prob (A > B \land A \leq C) = \Prob(C<A\leq B)$. If $\Prob(A=B)= 0$, i.e. ties almost surely do not occur,  $C < A \leq B$  is one of $6$ equally probable orderings of three distances and we conclude 
	\begin{equation*}
		P_2 = \frac{2}{3},
	\end{equation*}
	for any underlying distribution.\\ 
	\noindent
	Recall the definition of $P_1$ under null hypothesis:
	\begin{equation*}
		P_1 = \Prob (\I_{ \{d(X_1, X_2) \leq d(X_3, X_4) \}} = \I_{ \{d(X_1, X_7) \leq d(X_5, X_6) \}}).
	\end{equation*}
	Denoting by $D:= d(X_1, X_7)$ we similarly have 
	\begin{equation*}
		P_1 = 1 - 2\Prob(A \leq B \land C < D)
	\end{equation*}
	where $A$, $B$, $C$ and $D$ are identically distributed but $A$ and $D$ stochastically dependent. Since any permutation of the tupel $(A,B,C,D)$ swapping $A$ and $D$ or $B$ and $C$ retains its distribution  (e.g.\ $(A,B,C,D) \overset{\mathcal{D}}{=} (D,B,C,A)$), we have
	\begin{equation} \label{eqn::probAD1}
		\Prob (A < C < D < B) = \Prob (D < C < A < B)
	\end{equation}
	and 
	\begin{equation} \label{eqn::probAD2}
		\Prob (A < B < C < D) = \Prob (A < C < B < D).
	\end{equation}
	By decompositions
	\begin{equation*}
		\{ C < A \leq B\} = \{ D < C < A \leq B\} \dot\cup \{ C < D < A \leq B\} \dot\cup \{ C < A < D \leq B\} \dot\cup \{ C < A \leq B < D\} 
	\end{equation*}
	and 
	\begin{align*}
		\{ A \leq B \land C < D \} =  &\{ A < B < C < D \} \dot\cup \{A < C < B < D  \} \dot\cup \{ C < A < B < D \} \\ 
		\dot\cup &\{ C < A < D < B \} \dot\cup \{ C< D < A < B  \} \dot\cup \{ A < C < D < B \}
	\end{align*}
	we have
	\begin{align*}
		\Prob(A \leq B \land C < D) - \Prob(C<A\leq B) &= \Prob(A < B < C < D) + \Prob (A < C < B < D) \\
		&+ \Prob (A < C < D < B) - \Prob (D < C < A < B)\\
		&= 2 \Prob(A < B < C < D)
	\end{align*}
	Since $ \Prob(A < B < C < D) > 0$ we conclude that $\Prob(A \leq B \land C < D) >  \Prob(C<A\leq B)$  and therefore 
	\begin{equation*}
		P_1 < P_2=\frac{2}{3},
	\end{equation*}
	again, for any underlying distribution of $X$, which establishes the second inequality of \eqref{eqn::P1bound}. \\
	
	\noindent
	It remains to show the first inequality in (\ref{eqn::P1bound}). Let the first condition of the theorem hold, i.e. let the support $S$ of the density $f$ of the distribution $X$ be unbounded, with $\Prob (\|X\| > t) > 0$ for any $t$, and let $p(x):= \Prob(x \in B_{\|X_1 - X_2\|}(X_3))$, where $X_1$, $X_2$ and $X_3$ be three independent instances of $X$. By Example \ref{bsp::e1} it is then
	\begin{equation*}
		P_1 = 2 \E_X \left[p(X)^2 \right].
	\end{equation*}
	Let further $t$ be such that $\Prob (\|X\| > t) = \frac{1}{12}$. By the triangle and reversed triangle inequalities it is
	\begin{align*}
		p(x) &\leq \Prob( \|x\| \leq \|X_1\|+ \|X_2\| + \|X_3\|) \\
		& \leq \Prob( \{\|x\| \leq \|X_1\| + \|X_2\| + \|X_3\|\} \cap_{i=1}^3 \{ \|X_i\| \leq t \} ) + \Prob( \cup_{i=1}^3 \{ \|X_i\| > t \}) \\
		& \leq \I_{ \{\|x\| \leq 3t \}} + \sum_{i = 1}^3 \Prob(\|X_i\| > t ) \\
		& = \I_{ \{\|x\| \leq 3t \}} +\frac{1}{4}
	\end{align*}
	and we conclude that $p(x) \leq \frac{1}{4}$ if $\|x\| > 3t$. Since $y\mapsto1-2y+2y^2$ is monotonically decreasing with global minimum $\frac{1}{2}$ at $y=\frac{1}{2}$,   for such $x$ it is 
	\begin{equation*}
		1 - 2p(x) + 2p(x)^2 \geq \frac{5}{8}
	\end{equation*}
	and therefore
	\begin{align*}
		\E \left[ 1-2p(X) + 2p(X)^2 \right]  &\geq \E  \left[ (1-2p(X) + 2p(X)^2) (\I_{ \|X\|\leq 3t } + \I_{ \|X\| > 3t }) \right] \\
		& \geq \frac{1}{2} \Prob(\|X\| \leq 3t) + \frac{5}{8}  \Prob(\|X\| > 3t) \\
		& = \frac{1}{2} + \frac{1}{8} \Prob(\|X\|> 3t).
	\end{align*}
	Since $\mathbb{P}(\|X\|>3t)>0$ we conclude that  $P_1 > \frac{1}{2}$. \\
	
	\noindent
	Otherwise, let the second condition hold, i.e. let the distribution of $X$ be such that (\ref{eqn::conc}) holds. Note that $P_1 = \frac{1}{2}$ iff $p(x) = \frac{1}{2}$ almost surely. We have
	\begin{equation*}
		p_r(x) \leq p_r(m)
	\end{equation*}
	for some $m,x\in S_X$ and all $r \in S_R$, where the inequality is strict on a set $S_{0,R}$ with positive mass and therefore
	\begin{align*}
		p(x)  = \int_{\R^{+}} p_r(x) \rm{d} \Prob_R(r) < \int_{\R^{+}} p_r(m) \rm{d} \Prob_R(r) = p(m).  
	\end{align*}
	For the continuous function  $p$ this is a contradiction to $p(x) = \frac{1}{2}$ a.s. Hence we conclude $P_1 > \frac{1}{2}$.

	\hfill$\Box$
	
	\subsection{Central limit theorem for locally dependent variables}
	\noindent
	For the sake of completeness, in the following we formulate the used version of CLT for triangular arrays under local dependency (represented by a dependency graph) as developed in \cite{Austern22}. Suppose that $p\geq 1$ is a real number and $\omega := p + 1 -\lceil p \rceil \in \left[ 0,1 \right] $. $I$ is defined as an infinite index set and $(I_n)_{n \in \N}$ the accompanying sequence of finite subsets with $I_1 \subset I_2 \subset ... \subset I$ with $\# I_n \rightarrow \infty$. For a triangular array of the form $(Z_i^{(n)})_{i \in I_n}$ we define
	\begin{equation*}
		W_n := \sigma_n^{-1} \sum_{i \in I_n} Z_i^{(n)} 
	\end{equation*}
	where $\sigma_n^2 := \Var \left( \sum_{i \in I_n} Z_i^{(n)} \right)$. The triangular array induces the dependency graph $G_n = (I_n,E_n)$ with 
	\begin{equation*}
		(i,j) \in E_n \iff Z_i^{(n)} \textnormal{ and } Z_j^{(n)} \textnormal{ dependent.}
	\end{equation*}
	
	\noindent
	The necessary condition on the dependence structure holding for all $n \in \N$ is defined as the following condition:\\
	
	\noindent
	[LD*] There exist the dependency graph $G_n = (I_n,E_n)$ s.t. for all disjoint subsets $J_1, J_2 \subset I_n$ if there are no edges between $J_1$ and $J_2$, then $\{Z_j^{(n)} | j \in J_1 \}$ is independent of $\{Z_j^{(n)} | j \in J_2 \}$.\\
	
	\noindent
	The set of vertices in the neighborhood of $J \subset I_n$ is defined as 
	\begin{equation*}
		N_n(J) := J \cup \{ i \in I_n | (i,j)\in E_n \textnormal{ for some } j\in J \}
	\end{equation*}
	\noindent
	We can proceed to formulate the statement corresponding to the Theorem 3.3 in (\cite{Austern22}).
	\begin{Theorem} \label{thm::CLT}
		Suppose that $(Z_i^{(n)})$ is a triangular array of mean zero random variables satisfying [LD*], and that the cardinality of index set $N_n(\{ i_1,...,i_{\lceil p \rceil +1} \})$ is upper-bounded by $M_n < \infty$ for any $i_1,...,i_{\lceil p \rceil +1}  \in I_n$. Furthermore, assume that
		\begin{equation}\label{eqn::CLT34}
			M_n^{1+\omega} \sigma_n^{-(\omega + 2)} \sum_{i \in I_n} \E \left[ |Z_i^{(n)}|^{(\omega+2)} \right] \rightarrow 0 \textnormal{  and   } M_n^{1+p} \sigma_n^{-(p + 2)} \sum_{i \in I_n} \E \left[ |Z_i^{(n)}|^{p+2} \right] \rightarrow 0.
		\end{equation}
		Then there is $N$ such that for all $n \geq N$ we have
		\begin{align*}
			&\mathcal{W}_p(\mathcal{L}(W_n), \mathcal{N}(0,1)) \\ 
			&\leq C_p \left( M_n^{1+\omega} \sigma_n^{-(\omega + 2)} \sum_{i \in I_n} \E \left[ |Z_i^{(n)}|^{(\omega+2)} \right] \right)^{1/\omega} + C_p \left( M_n^{1+p} \sigma_n^{-(p + 2)} \sum_{i \in I_n} \E \left[ |Z_i^{(n)}|^{(p+2)} \right] \right)^{1/p}
		\end{align*}
		for some constant $C_p$ that only depends on $p$.
	\end{Theorem}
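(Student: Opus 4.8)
The plan is to derive this bound by Stein's method for the Wasserstein-$p$ distance, so that the statement is the specialization of Theorem 3.3 in \cite{Austern22} to the present notation, with [LD*] playing the role of the dependency hypothesis and $M_n$ the neighborhood bound. The starting point is the characterization of $\mathcal{N}(0,1)$ through the Stein operator $(\mathcal{A}f)(w) = f'(w) - w f(w)$, combined with a smoothing inequality that controls $\mathcal{W}_p(\mathcal{L}(W_n), \mathcal{N}(0,1))$ in terms of $\sup_f |\E[f'(W_n) - W_n f(W_n)]|$, where $f$ ranges over solutions of the Stein equation attached to test functions whose derivatives up to order $\lceil p \rceil$ are bounded and whose top derivative is $\omega$-Hölder continuous. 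The splitting $\omega = p+1-\lceil p \rceil$ is exactly the device that makes this smoothing step produce two contributions, carrying the exponents $1/\omega$ and $1/p$ in the final bound; for the integer case $p=1$ one has $\omega=1$ and the two summands coincide with the classical local-dependence Wasserstein-$1$ estimate.

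The core of the argument exploits the dependency graph. For each $i \in I_n$, I would set $T_i := \sigma_n^{-1} \sum_{j \in N_n(\{i\})} Z_j^{(n)}$, the normalized sum over the neighborhood of $i$. Since there are no edges between $\{i\}$ and $I_n \setminus N_n(\{i\})$, condition [LD*] guarantees that $Z_i^{(n)}$ is independent of $W_n - T_i$. Because $\E Z_i^{(n)} = 0$, this gives $\E[Z_i^{(n)} f(W_n - T_i)] = 0$, and hence, after summation,
\[
\E[W_n f(W_n)] = \sigma_n^{-1} \sum_{i \in I_n} \E\big[Z_i^{(n)}\big(f(W_n) - f(W_n - T_i)\big)\big].
\]
I would then Taylor-expand $f(W_n) - f(W_n - T_i)$ around $W_n - T_i$. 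The first-order term, summed over $i$ and combined with $\sigma_n^2 = \sum_{i,j}\E[Z_i^{(n)} Z_j^{(n)}]$, reproduces $\E[f'(W_n)]$ up to a controllable error, so that $\E[f'(W_n) - W_n f(W_n)]$ collapses to a sum of Taylor remainders of two types: one governed by the $\omega$-Hölder modulus of the top derivative of $f$, the other by the next integer order.

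The remaining work is to bound these remainders. A remainder of the first type contributes quantities of the form $\E\big[|Z_i^{(n)}|\,|T_i|^{1+\omega}\big]$ and of the second type $\E\big[|Z_i^{(n)}|\,|T_i|^{1+p}\big]$. Writing $T_i$ as a sum of at most $M_n$ summands and applying Hölder's inequality turns each such quantity into a sum of products of individual moments $\E[|Z_j^{(n)}|^{q}]$; the number of cross terms is bounded by a power of $M_n$ because every factor must lie in the neighborhood of $i$, and the neighborhood bound is exactly stated for index sets of size $\lceil p \rceil + 1$. Collecting the combinatorial $M_n$-powers and the normalized moment sums reproduces precisely the two bracketed expressions $M_n^{1+\omega}\sigma_n^{-(\omega+2)}\sum_i \E[|Z_i^{(n)}|^{\omega+2}]$ and $M_n^{1+p}\sigma_n^{-(p+2)}\sum_i \E[|Z_i^{(n)}|^{p+2}]$, with the diagonal terms $\E[|Z_i^{(n)}|^{\omega+2}]$ and $\E[|Z_i^{(n)}|^{p+2}]$ dominating after the $M_n$-counting, and the constant $C_p$ absorbs the Stein-solution derivative bounds together with the combinatorial factors.

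The main obstacle is the fractional-smoothness bookkeeping in the Wasserstein-$p$ regime: one must bound the higher derivatives of the Stein solution and transfer the $\omega$-Hölder estimate of its top derivative into the remainder bounds, which is exactly where the generalized Stein-method smoothing inequality enters and where the exponents $1/\omega$ and $1/p$ originate. The threshold $N$ arises here as well, since the estimate is only valid once the right-hand side is small enough for the Stein-solution bounds to apply. This delicate step is carried out in full in \cite{Austern22}, and the present statement follows by matching that theorem's hypotheses with [LD*] and the neighborhood bound $M_n$.
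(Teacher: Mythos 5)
Your proposal is correct and takes essentially the same route as the paper: the paper does not prove this statement at all, but states it verbatim as the "statement corresponding to Theorem 3.3" of \cite{Austern22}, which is exactly the reduction you arrive at after matching [LD*] and the neighborhood bound $M_n$ to that theorem's hypotheses. Your Stein's-method sketch is a plausible outline of the machinery behind the cited result, but since the paper itself offers only the citation, no further argument is needed or checkable against it.
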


	\subsection{Proof of Theorem  \ref{thm::T}}
	\noindent
	Theorem \ref{thm::T} is proved by an application of Theorem \ref{thm::CLT}. Accordingly, we have to show that Asssumption [LD*] holds and that \eqref{eqn::CLT34} is satisfied. To verify  Asssumption [LD*], note that $I^{(n, m)}$ and $E^{(n,m)}$, defined by the relation $(\mathbf{i}, \mathbf{j}) \in E^{(n,m)}$ if and only if $\mathbf{i} \neq \mathbf{j}$ and  $\mathbf{i} \cap \mathbf{j} \neq \emptyset$, 
	induce an undirected  loopless  graph $G^{(n,m)} = (I^{(n,m)}, E^{(n,m)})$.
	Moreover,  $I^{(n_1, m_{n_1})} \subseteq I^{(n_2, m_{n_2})} \subseteq  \N^4$ for $n_2\geq n_1$, since by assumption $m=m_n$ is monotonically increasing in $n$.
	For two disjoint subsets $J_1$, $J_2\subset I^{(n, m)}$ no edges between $J_1$ and $J_2$ means that for any two indices $j_1$ entering $J_1$ and $j_2$ entering $J_2$ it holds that $j_1\neq j_2$. As a result, $\{\mathcal{Z}_{\mathbf{j}}| \mathbf{j}\in J_1\}$ and $\{\mathcal{Z}_{\mathbf{j}}| \mathbf{j}\in J_2\}$ are composed  of functions of  disjoint subsets of $\{X_1, \ldots, X_n, Y_{n+1}, \ldots, Y_{n+m}\}$. Since the latter set consists of mutually independent random variables, the set $\{\mathcal{Z}_{\mathbf{j}}| \mathbf{j}\in J_1\}$ is independent of $\{\mathcal{Z}_{\mathbf{j}}| \mathbf{j}\in J_2\}$.
	Therefore, for the  graph $G^{(n,m)}$ defined in the above manner,
	condition [LD*] of  Theorem \ref{thm::CLT} holds.\\
	\noindent
	For showing that condition \eqref{eqn::CLT34} of  Theorem \ref{thm::CLT} is fulfilled, recall that the statistic we are interested in is defined as follows:
	\begin{equation}\label{variance}
		W_{n,m} = \frac{1}{\sigma_{n,m}} \sum_{\mathbf{i} \in I^{(n,m)}} \mathcal{Z}_{\mathbf{i}}, \
		\text{where} \
		\mathcal{Z}_{\mathbf{i}}:= \I_{\mathbf{i}} - \mathbf{E}_1
	\end{equation}
	with  $\I_{\{\mathbf{i}}=\I_{d(X_{i_1}, X_{i_2}) \leq d(X_{i_3}, Y_{i_4})\}}$  for $\mathbf{i}=(i_1, i_2, i_3, i_4)$,  $\mathbf{E}_1 = \Prob(d(X_1, X_2)\leq d(X_3, Y_1))$, and
	$\sigma_{n,m}^2 = \Var \left(\sum_{\mathbf{i} \in I^{(n,m)} } \mathcal{Z}_{\mathbf{i}} \right)$.
	We show that condition \eqref{eqn::CLT34}  is fulfilled for  the above statistic when $p = \omega = 1$. For this, we first determine  a lower bound for $\sigma_{n, m}$. 
	Note that 
	\begin{align*}
		\sigma_{n, m}^2 =
		\sum_{\mathbf{i} \in I^{(n,m)} }\sum_{\mathbf{j} \in I^{(n,m)} } \E\left(\mathcal{Z}_{\mathbf{i}}, \mathcal{Z}_{\mathbf{j}}\right).
	\end{align*}
	The sum on the right-hand side of the  above equation is asymptotically dominated by summands $ \E \left( \mathcal{Z}_{\mathbf{i}}\mathcal{Z}_{\mathbf{j}}\right)$ for which $\mathbf{i}$ and $\mathbf{j}$ share exactly one index.
	The rest of the summands, i.e. those where $\mathbf{i}$ and $\mathbf{j}$ share two, three, or four indices are asymptotically of lower order (cf. proof of Theorem \ref{thm::Var}), such that 
	\begin{align*}
		\sigma_{n, m}^2  \gtrsim    \underset{\# \mathbf{i} \cap \mathbf{j} = 1}{\sum_{\mathbf{i}, \mathbf{j} \in I^{(n,m)}}} \E\left(\mathcal{Z}_{\mathbf{i}}, \mathcal{Z}_{\mathbf{j}}\right).
	\end{align*}
	By definition of $\mathcal{Z}_{\mathbf{i}}$ and $\mathcal{Z}_{\mathbf{j}}$
	\begin{align*}
		\E \left( \mathcal{Z}_{\mathbf{i}}\mathcal{Z}_{\mathbf{j}}\right) = 
		\E \left( \I_{\mathbf{i}}\I_{\mathbf{j}}\right) -   \E \left( \I_{\mathbf{i}}\right)\E \left( \I_{\mathbf{j}}\right)
		=\Prob \left( \I_{\mathbf{i}}=\I_{\mathbf{j}}=1\right) -\mathbf{E}^2_1.
	\end{align*}
	Our goal is to determine the above expression explicitly. 
	For this, we express  $\E \left( \mathcal{Z}_{\mathbf{i}}\mathcal{Z}_{\mathbf{j}} \right)$
	in terms of the following quantities:
	\begin{align*}
		p(x)&:= \Prob (d(X_1, x)\leq d(X_2, Y_1)), \\
		q(x)&:= \Prob (d(Y_1, x) \leq d(X_1, X_2)), \\
		r(y)&:= \Prob (d(X_1, y) \leq d(X_2, X_3)),
	\end{align*}
	for which, due to the fact that $\Prob (d(X_1, X_2)= d(X_3, Y_1))=0$,
	\begin{equation*}
		\E_Y \left( r(Y) \right) =  \E_X \left(q(X) \right) = 1-\mathbf{E}_1
		\ \text{and} \
		\E_X \left( p(X) \right) = \mathbf{E}_1.
	\end{equation*} 
	\noindent
	
	As in the proof of Theorem \ref{thm::Var}, to determine $\E \left( \mathcal{Z}_{\mathbf{i}}\mathcal{Z}_{\mathbf{j}} \right)$ we make a case distinction with respect to which elements of $\mathbf{i}=(i_1, i_2, i_3, i_4)$ and  $\mathbf{j}=(j_1, j_2, j_3, j_4)$  coincide.
	The argument for the individual cases is analogous to  the proof of Lemma \ref{lemma::example1}.l
	\\
	\\
	\noindent
	\textbf{Case 1: $i_1=j_1$}\\
	Note that 
	\begin{align*}
		\Prob (\I_{\mathbf{i}}=\I_{\mathbf{j}}=1)=\E_X\left( h(X)\right), \ h(x)= \Prob(Z_1(x) =1,  Z_2(x)=1),
	\end{align*}
	where $Z_1(x):= \I_{\{d(x,X_1)\leq d(X_2, Y_1)\}} $ and $Z_2(x):= \I_{\{d(x,X_3)\leq d(X_4, Y_2)\}}$, i.e. $Z_1(x)$ and $Z_2(x)$ are independent, identically distributed random variables with $P(Z_1(x)=1)=P(Z_2(x)=1)=p(x)$.
	It follows that   $\Prob (\I_{\mathbf{i}}=\I_{\mathbf{j}}=1)=\E_X\left(p^2(X)\right)$.
%	\ignore{In a similar manner, we arrive at 
%		\begin{align*}
%%			\Prob (\I_{\mathbf{i}}=\I_{\mathbf{j}}=0)=\E_X h_2(X),
%		\end{align*}
%		where
%		\begin{equation*}
%			h_2(x)= \Prob(Z_1(x) =0,  Z_2(x)=0)=\Prob(Z_1(x) =0)\Prob(Z_2(x)=0)=(1-p(x))^2,
%		\end{equation*}
%		i.e.  $\Prob (\I_{\mathbf{i}}=\I_{\mathbf{j}}=0)=\E_X h_2(X)=\E_X\left[(1-p(X))^2\right]=1+\E_Xp^2(X)-2\E_Xp(X)$.
%		Note that by definition of $p$ and $\mathbf{E}_1$ it holds that $\E_Xp(X)=\mathbf{E}_1$.}
	Therefore, we have
	\begin{align*}
		\E \left[ \mathcal{Z}_{\mathbf{i}}\mathcal{Z}_{\mathbf{j}}\right] 
%		\ignore{     &=\left(1-\mathbf{E}_1\right) \Prob (\I_{\mathbf{i}}=\I_{\mathbf{j}}=1)+\mathbf{E}_1)\Prob (\I_{\mathbf{i}}=\I_{\mathbf{j}}=0)-\mathbf{E}_1(1-\mathbf{E}_1))\\
%			&=\left(1-\mathbf{E}_1\right) \E_X\left(p^2(X)\right)+\mathbf{E}_1\left(1+\E_Xp^2(X)-2\mathbf{E}_1\right)-\mathbf{E}_1(1-\mathbf{E}_1))\\
%			&=  \E_X\left(p^2(X)\right)+\mathbf{E}_1-2\mathbf{E}_1^2-\mathbf{E}_1(1-\mathbf{E}_1)\\}
		&=  \E_X\left(p^2(X)\right)-\mathbf{E}_1^2.
	\end{align*}
	\textbf{Case 2: $i_2=j_2$, Case 3: $i_1=j_2$, Case 4: $i_2=j_1$ }\\
	Due to symmetry of $d$ and interchangeability of $\mathbf{i}$ and $\mathbf{j}$, the same argument as in \textbf{Case 1}
	yields      $\E \left( \mathcal{Z}_{\mathbf{i}}\mathcal{Z}_{\mathbf{j}}\right) = \E_X\left(p^2(X)\right)-\mathbf{E}_1^2$.\\
	\\
	\textbf{Case 5: $i_1=j_3$}\\
	Note that 
	\begin{align*}
		\Prob (\I_{\mathbf{i}}=\I_{\mathbf{j}}=1)=\E_X \left(h(X)\right), \    h(x)= \Prob(Z_1(x) =1,  Z_2(x)=1),
	\end{align*}
	where $Z_1(x):= \I_{\{d(x,X_1)\leq d(X_2, Y_1)\}} $ and $Z_2(x):= \I_{\{d(X_3, X_4)\leq d(x, Y_2)\}}$, i.e. $Z_1(x)$ and $Z_2(x)$ are independent random variables with $P(Z_1(x)=1)=p(x)$ and  $P(Z_2(x)=1)=1-q(x)$.
	It follows that   $\Prob (\I_{\mathbf{i}}=\I_{\mathbf{j}}=1)=\E_X\left(p(X)(1-q(X))\right)$.
	Therefore, we have
	\begin{align*}
		\E \left( \mathcal{Z}_{\mathbf{i}}\mathcal{Z}_{\mathbf{j}} \right)
		\ignore{   &=\left(1-\mathbf{E}_1\right) \Prob (\I_{\mathbf{i}}=\I_{\mathbf{j}}=1)+\mathbf{E}_1\Prob (\I_{\mathbf{i}}=\I_{\mathbf{j}}=0)-\mathbf{E}_1(1-\mathbf{E}_1))\\
			&=\left(1-\mathbf{E}_1\right)(\mathbf{E}_1-\E_X(p(X)q(X)))+\mathbf{E}_1(1-\mathbf{E}_1-\E_X(p(X)q(X)))-\mathbf{E}_1(1-\mathbf{E}_1)\\}
		=\E_X\left(p(X)(1-q(X))\right)- \mathbf{E}^2_1.
		\ignore{       =\mathbf{E}_1(1-\mathbf{E}_1)-\E_X(p(X)q(X)).}
	\end{align*}
	\\
	\textbf{Case 6: $i_2=j_3$, Case 7: $i_3=j_1$, Case 8:  $i_3=j_2$ }\\
	Due to symmetry of $d$ and interchangeability of $\mathbf{i}$ and $\mathbf{j}$, the same argument as in  \textbf{Case 5}  yields $\E \left( \mathcal{Z}_{\mathbf{i}}\mathcal{Z}_{\mathbf{j}}\right)=\E_X\left(p(X)(1-q(X))\right)- \mathbf{E}^2_1$. \\
	\noindent
	\textbf{Case 9: $i_3=j_3$}\\
	Note that 
	\begin{align*}
		\Prob (\I_{\mathbf{i}}=\I_{\mathbf{j}}=1)=\E_X \left(h(X)\right), \    h(x)= \Prob(Z_1(x) =1,  Z_2(x)=1),
	\end{align*}
	where $Z_1(x):= \I_{\{d(X_1,X_2)\leq d(x, Y_1)\}} $ and $Z_2(x):= \I_{\{d(X_3,X_4)\leq d(x, Y_2)\}}$,
	i.e. $Z_1(x)$ and $Z_2(x)$ are independent, identically distributed random variables with $P(Z_1(x)=1)=P(Z_2(x)=1)=1-q(x)$.
	It follows that   $\Prob (\I_{\mathbf{i}}=\I_{\mathbf{j}}=1)= \E_X (1-q(X))^2$.
	\ignore{= 1-2\E_X(q(X))+\E_X\left[(q(X))^2\right]
		= 1-2(1-\mathbf{E}_1)+\E_X\left[(q(X))^2\right]$.}
	\ignore{and similarly 
		\begin{equation*}
			\Prob (\I_{\mathbf{i}}=\I_{\mathbf{j}}=0) =\E_X \Prob(Z_1(X) = Z_2(X) = 0) = \E_X q(X)^2.
	\end{equation*}}
	Therefore, we have
	\begin{align*}
		\E \left( \mathcal{Z}_{\mathbf{i}}\mathcal{Z}_{\mathbf{j}}\right)
		=\E_X\left( (1-q(X))^2\right)-\mathbf{E}^2_1.
		\ignore{ &= \left(1-\mathbf{E}_1\right) \Prob (\I_{\mathbf{i}}=\I_{\mathbf{j}}=1)+\mathbf{E}_1\Prob (\I_{\mathbf{i}}=\I_{\mathbf{j}}=0)-\mathbf{E}_1(1-\mathbf{E}_1))\\
			&=\left(1-\mathbf{E}_1\right)(1 - 2(1 - \mathbf{E}_1) +\E_X (q(X)^2))+\mathbf{E}_1\E_X (q(X)^2)-\mathbf{E}_1(1-\mathbf{E}_1)\\
			&= 1-2(1-\mathbf{E}_1)+\E_X\left[(q(X))^2\right]- (\mathbf{E}_1)^2\\
			&=\E_X (q(X)^2) -(1 - \mathbf{E}_1 )^2}
	\end{align*}
	\\
	\textbf{Case 10: $i_4=j_4$} \\
	Note that 
	\begin{align*}
		\Prob (\I_{\mathbf{i}}=\I_{\mathbf{j}}=1)=\E_Y \left(h(Y)\right), \    h(y)= \Prob(Z_1(y) =1,  Z_2(y)=1),
	\end{align*}
	where
	$Z_1(y):= \I_{\{d(X_1,X_2)\leq d(X_3, y)\}} $ and $Z_2(y):= \I_{\{d(X_4,X_5)\leq d(X_6, y)\}} $, i.e.  $Z_1(y)$ and $Z_2(y)$ are independent, identically distributed random variables with $P(Z_1(y)=1)=  \Prob(Z_2(y) = 1) = 1- r(y)$.
	\ignore{and
		\begin{equation*}
			\Prob (\I_{\mathbf{i}}=\I_{\mathbf{j}}=0) =\E_Y \Prob(Z_1(Y) = Z_2(Y) = 0) = \E_Y\left[r(Y)^2\right]
		\end{equation*}
		and therefore}
	It follows that $ \Prob (\I_{\mathbf{i}}=\I_{\mathbf{j}}=1)=\E_Y\left((1-r(Y))^2\right)$.
	Therefore, we have
	\begin{align*}
		\E \left( \mathcal{Z}_{\mathbf{i}}\mathcal{Z}_{\mathbf{j}}\right)
		\ignore{      &=\left(1-\mathbf{E}_1\right) \Prob (\I_{\mathbf{i}}=\I_{\mathbf{j}}=1)+\mathbf{E}_1\Prob (\I_{\mathbf{i}}=\I_{\mathbf{j}}=0)-\mathbf{E}_1(1-\mathbf{E}_1))\\
			&=\left(1-\mathbf{E}_1\right)(1 - 2(1 - \mathbf{E}_1) +\E_Y (r(Y)^2))+\mathbf{E}_1\E_Y (r(Y)^2)-\mathbf{E}_1(1-\mathbf{E}_1) \\}
		&=  \E_Y \left((1-r(Y))^2\right)  - \mathbf{E}^2_1.
		\ignore{       &=\E_Y (r(Y)^2) -(1 - \mathbf{E}_1 )^2.}
	\end{align*}
	
	\noindent
	The number of summands of each case contributing to the variance reported in \eqref{variance} can be determined analogously as in the proof of Theorem \ref{thm::Var}. Table \ref{tbl::ijcard} represents all the possible configurations of $\mathbf{i}$ and $\mathbf{j}$ sharing exactly one index and the corresponding number of summands as  multiplicities of the number of summands corresponding to \textbf{Case 1} ($i_1 = j_1$), i.e. $\# C_{i_1=j_1}$, where
	$$C_{i_1 = j_1} := \{ (\mathbf{i}, \mathbf{j})\in I^{(n,m)}\times I^{(n,m)} | i_1 = j_1 \textnormal{ and } i_p \neq j_q \textnormal{ for } p,q =2,3,4 \}.$$

	\begin{table}[h]
		\begin{center}
			\begin{tabular}{|c|c|c|c|c|}
				\hline
				$ \mathbf{j} \setminus \mathbf{i}$ & $i_1$ & $i_2$ & $i_3$ & $i_4$ \\
				\hline
				$j_1$ &1&  $\frac{1}{2}$ & $\frac{3}{4}$ & $0$ \\
				\hline
				$j_2$&  $\frac{1}{2}$ & $1$ & $\frac{3}{4}$ &$0$\\
				\hline
				$j_3$& $\frac{3}{4}$ &  $\frac{3}{4}$  & $\frac{3}{4} $ &  $0$ \\
				\hline
				$j_4$& $0$ &  $0$ &  $0$ & $\frac{3}{4}\frac{n-5}{m-1}$ \\
				\hline
			\end{tabular}
		\end{center}
		\caption{Cardinalities of  configurations reported in Table \ref{tbl::ij} as multiples of the reference cardinality $\#C_{i_1=j_1}$.}
		\label{tbl::ijcard}
	\end{table}
	\noindent
	It follows that
	\begin{align*}
		\underset{\# \mathbf{i} \cap \mathbf{j} = 1}{\sum_{(\mathbf{i}, \mathbf{j}) \in E_{n,m}}}  \E \left(\mathcal{Z}_{\mathbf{i}}\mathcal{Z}_{\mathbf{j}} \right) 
		=&  \# C_{i_1 = j_1} \bigg(3\left(\E_X \left( p^2(X)\right) -\mathbf{E}_1^2\right)+ 
		3\left(\E_X\left(p(X)(1-q(X))\right)
		- \mathbf{E}^2_1\right) \\
		&+ \frac{3}{4}\left(\E_X\left( (1-q(X))^2\right)-\mathbf{E}^2_1\right) + \frac{3}{4} \frac{n-5}{m-1}(\E_Y \left((1-r(Y))^2\right)  - \mathbf{E}^2_1) \bigg) \\
		=&  \# C_{i_1 = j_1} \bigg( 3\E_X \left( p^2(X)\right) -3\E_X \left(p(X)q(X) \right) + \frac{3}{4} \E_X \left( q^2(X)\right) \\
		&+  \frac{3}{4} \frac{n-5}{m-1}\left(\E_Y \left(r^2(Y)\right) +2\mathbf{E}_1-\mathbf{E}^2_1-1\right)
		-\frac{27}{4}\mathbf{E}^2_1+\frac{9}{2}\mathbf{E}_1-\frac{3}{4}
		\bigg) \\ 
		\ignore{      =&  \# C_{i_1 = j_1}\bigg( 3\E_X \left( p^2(X)\right) -3\E_X \left(p(X)q(X) \right) + \frac{3}{4} \E_X \left( q^2(X)\right) +\frac{3}{4}  \frac{n-5}{m-1}\E_Y \left(r^2(Y)\right)  \\
			& +3\mathbf{E}_1(1-\mathbf{E}_1) -3 \mathbf{E}_1^2-\frac{3}{4} \left(1-\mathbf{E}_1\right)^2\left(1 + \frac{n-5}{m-1}\right) \bigg),}
	\end{align*}
	since $\E_Y \left( r(Y) \right) =  \E_X \left(q(X) \right) = 1-\mathbf{E}_1$
	and $\E_X \left( p(X) \right) = \mathbf{E}_1$.
	Since $\frac{n-5}{m-1} \underset{n,m \rightarrow \infty}{\rightarrow} \frac{\lambda}{1- \lambda}$ by assumption, it holds that
	\begin{align*}
		\underset{\# \mathbf{i} \cap \mathbf{j} = 1}{\sum_{(\mathbf{i}, \mathbf{j}) \in E_{n,m}}}  \E \left( \mathcal{Z}_{\mathbf{i}}\mathcal{Z}_{\mathbf{j}} \right)
		>\ignore{& \frac{3}{4} C \# C_{i_1 = j_1}\bigg( \E_X\left( (2p(X) - q(X))^2\right) + \frac{\lambda}{1- \lambda} \E_Y \left(r^2(Y)\right)  \\
			& + 4\mathbf{E}_1(1-\mathbf{E}_1) -4 \mathbf{E}_1^2- \frac{1}{1- \lambda}(1-\mathbf{E}_1)^2 \bigg)\\
			=& \frac{3}{4} C \# C_{i_1 = j_1} \frac{1}{1-\lambda} \bigg( (1-\lambda)\E_X\left( (2p(X) - q(X))^2\right)  + \lambda \E_Y \left( r^2(Y)\right) \\ 
			&- (1-\lambda) (4\mathbf{E}_1(1-\mathbf{E}_1) -4\mathbf{E}_1^2 -(1-\mathbf{E}_1)^2) -\lambda (1-\mathbf{E}_1)^2   \bigg) \\}
		& \frac{3}{4} C \# C_{i_1 = j_1} \frac{1}{1-\lambda} \bigg( (1-\lambda)\E_X\left((2p(X) - q(X))^2\right)  + \lambda \E_Y \left( r^2(Y)\right) \\ 
		&-(1-\lambda)(3\mathbf{E}_1 - 1 )^2 - \lambda (1-\mathbf{E}_1)^2 \bigg).
	\end{align*}
	for some positive constant $C$.
	
	Noting that $\E_X\left(2p(X) - q(X)\right) = 3\mathbf{E}_1 - 1 $ and $\E_Y\left( r(Y)\right) = 1-\mathbf{E}_1$, we conclude
	\begin{align}
		\underset{\# \mathbf{i} \cap \mathbf{j} = 1}{\sum_{(\mathbf{i}, \mathbf{j}) \in E_{n,m}}}  \E \left(\mathcal{Z}_{\mathbf{i}}\mathcal{Z}_{\mathbf{j}} \right)  &>  \frac{3}{4} C \# C_{i_1 = j_1} \frac{1}{1-\lambda} \bigg( (1-\lambda) \Var (2p(X) - q(X)) + \lambda \Var(r(Y)) \bigg)  > 0. \label{eqn::sigma_estimate}
	\end{align}
	The right-hand side of the above inequality is positive if $\Var(r(Y))>0$.
	For  $\Var(r(Y))>0$ it suffices that $r(Y)$  is not almost surely constant.
	Overall, it follows that 
	\begin{align}
		\sigma_{n,m}^2     
		>  \# C_{i_1 = j_1} C      \label{eqn::Varbound}
	\end{align}
	where $C$ is a positive constant that does not depend on $n$ or $m$. \\ 
	\noindent
	In order to verify condition (\ref{eqn::CLT34}) of Theorem \ref{thm::CLT}
	we have to establish an upper-bound $M_n$ on the cardinality of  dependency neighbourhoods $N(\{\mathbf{i}, \mathbf{j}\})$  for any two vertices $\mathbf{i}$, $\mathbf{j}$ of the  graph $(I^{n, m}, E^{(n,m)})$.
	We have
	\begin{align}
		\# N_{n,m}(\mathbf{i}, \mathbf{j}) & \leq \# N_{n,m}(\mathbf{i}) + \# N_{n,m}(\mathbf{j}) \notag \\
		& = 2  N_{n,m}, \label{eqn::Nnm}
	\end{align} 
	where 
	$N_{n,m}$ denotes the number of elements of any dependency neighbourhood of order one, since due to definition of the edge set $E^{(n, m)}$ all vertices possess the same number of adjacent vertices (c.f. Combinatorial Quantities \ref{subsec::combinatorial_quantities}).\\
	Combining  \eqref{eqn::Varbound} and \eqref{eqn::Nnm},  for proving validity of (\ref{eqn::CLT34})  we conclude that 
	\begin{align*}
		\frac{(\# N_{n,m}(\mathbf{i}, \mathbf{j}))^2  }{\sigma_{n,m}^3} \sum_{\mathbf{i} \in I^{(n,m)}} \E |\mathcal{Z}_{\mathbf{i}}|^3  &\leq \mathcal{C}  \frac{ N_{n,m}^2 \# I^{(n,m)}}{ (\# C_{i_1 = j_1})^{\frac{3}{2}}},
	\end{align*}
	where $\mathcal{C}$ is a constant not depending on the sample sizes $n$ and $m$. By \eqref{eqn::NoI} and \eqref{eqn::NoN1} in Section \ref{subsec::combinatorial_quantities}, we have 
	\begin{align*}
		\# I^{(n,m)} & = \frac{1}{2} n m (n-1) (n-2),\\
		N_{n,m} & = \frac{n^3}{2} + \frac{9m n^2}{2} - 6 n^2  -\frac{45mn}{2} +\frac{47n}{2} + 30 m  -30,
	\end{align*}
	i.e. $\# I^{(n,m)}   \sim n^3 m$ and $N_{n,m}   \sim n^3 + n^2m$.  By (\ref{eqn::Ci1_j1}) in Section \ref{sec:Tconv}, it holds that 
	\begin{equation*}
		\# C_{i_1 = j_1} = \frac{2}{3} \# I^{(n,m)} (m-1)(n-3)(n-4)
	\end{equation*}
	i.e. $\# C_{i_1 = j_1}  \sim n^5 m^2$ and hence, for $n$ and $m$ large enough and some constant $\mathcal{C}$ we have
	\begin{align*}
		\frac{(\# N_{n,m}(\mathbf{i}, \mathbf{j}))^2  }{\sigma_{n,m}^3} \sum_{\mathbf{i} \in I^{(n,m)}} \E |\mathcal{Z}_{\mathbf{i}}|^3  \leq \mathcal{C}  \frac{(n^3 + n^2m)^2}{n^{9/2} m^2} 
		\leq n^{-1/2} \mathcal{C}  \left(\frac{n+m}{m}\right)^2.
	\end{align*}
	Since $\frac{n+m}{m} \rightarrow \frac{1}{1-\lambda} \in (0, \infty) $ as $n,m \rightarrow \infty$, 
	condition \eqref{eqn::CLT34} of  Theorem \ref{thm::CLT} is fulfilled for $p = \omega = 1$. As a result,  the distribution of $W_{n,m}$ converges to a  normal distribution in Wasserstein-1 distance with convergence rate $n^{-1/2}$. \\
	\noindent
	\ignore{In particular, $W_{n,m}$ converges under the null hypothesis. Specifically, under the null hypothesis it is $\mathbf{E}_1 = \frac{1}{2}$ and $p=q=r$ with $p(x)=  \Prob (X_1 \in \mathcal{B}_{||X_2 - X_3||}(x))$ where $X_1$, $X_2$ and $X_3$ are three independent instances of $X$. Since due to Lemma \ref{lemma::example1} we have $P_1 = \E_X \left[1 - 2p(X) + 2p(X)^2 \right]$ it follows
		\begin{align*}
			\Var (p(X)) &= \frac{1}{2} (P_1 + 2 \E_X p(X) - 2(\E_X p(X))^2  -1) \\ 
			&=\frac{1}{2} (P_1 -\frac{1}{2}).
		\end{align*}
		Conditions of Theorem \ref{thm::P2} guarantee that $P_1 > \frac{1}{2}$ and the estimate (\ref{eqn::sigma_estimate}) holds, i.e. the order of the variance bound under the null hypothesis remains the same as in the proof of Theorem \ref{thm::T}. The Corollary \ref{cor::folgerung1} follows.\\
		In particular, $p(X)$ almost surely constant, i.e. $p(X)= \mathbf{E}_1 $, is equivalent to $q(X)$ and $r(Y)$ almost surely constant with $q(X)=r(Y)= 1-  \mathbf{E}_1$ since 
		\begin{equation*}
			\E_Y \left[ r(Y)^2 \right] = \E_X(1-p(X))^2 = (1- \mathbf{E}_1)^2
		\end{equation*}
		and therefore
		\begin{equation*}
			\E_Y (r(Y) - (1- \mathbf{E}_1))^2 = 0.
		\end{equation*}
		Analogously is shown for $q(X)$. Therefore $p(X)$ almost surely not constant implies, $q(X)$ and $r(Y)$ are almost surely not constant. With the condition, $r(Y)$ a.s. not constant we have the lower bound (\ref{eqn::sigma_estimate}).\\}
	
	\hfill$\Box$
	
	\subsection{Proof of Theorem \ref{thm::consistency}}
	\noindent
	Under a fixed alternative for which it holds $\mathbf{E}_1 \neq \frac{1}{2}$, due to Theorem \ref{thm::Tconv} we have
	\begin{equation*}
		\left| \frac{1}{ \# I^{(n,m)}}T^{(n,m)} -\frac{1}{2} \right| \overset{a.s.}{\underset{n,m \rightarrow \infty}{\rightarrow}} \left| \mathbf{E}_1 - \frac{1}{2} \right| > 0
	\end{equation*}
	The order of $\Var_{H}(T^{(n,m)})$, variance of $T^{(n,m)}$ under null hypothesis, is the same as the order of $\# C_{i_1 = j_1}$ (cf. proof of Theorem \ref{thm::T}) and there exists some constant $C>0$ s.t.
	\begin{equation*}
		\Var_{H}(T^{(n,m)} )< C \# C_{i_1 = j_1}.
	\end{equation*}
	On the other hand, with $\# C_{i_1 = j_1} \sim n^5m^2$ and $\# I^{(n,m)} \sim n^3m$
	\begin{align*}
		\left|\frac{T^{(n,m)} - \frac{1}{2} \# I^{(n,m)} }{\sqrt{\Var_{H}(T^{(n,m)} )}} \right| \geq  n^{\frac{1}{2}} \left| \frac{1}{ \# I^{(n,m)}}T^{(n,m)} -\frac{1}{2} \right| C \overset{a.s.}{ \underset{n,m \rightarrow \infty}{ \rightarrow}} \infty 
	\end{align*}
	with a positive constant $C$ and for the $\alpha$-Level critical value $c_{\frac{\alpha}{2}}$  hence
	\begin{equation*}
		\Prob \left(\left|\frac{T^{(n,m)} - \frac{1}{2} \# I^{(n,m)} }{\sqrt{\Var_{H}(T^{(n,m)} )}} \right| > c_{\frac{\alpha}{2}} \right)\underset{n,m \rightarrow \infty}{ \rightarrow} 1.
	\end{equation*}

	\hfill$\Box$
	
	\subsection{Combinatorial Quantities} \label{subsec::combinatorial_quantities}
	
	In this subsection we count the quantities corresponding to the index sets and the local dependency structure of the triangular arrays in Section \ref{Sec:Motivation}, i.e. find the order of $\# N_{n,m}(\mathbf{i})$, the degree of a vertex $\mathbf{i}$, necessary for the estimations figuring in Theorem \ref{thm::CLT}. These quantities only depend on the sizes of two groups $n$ and $m$, where we assume that $m$ is large enough to perform the independent sampling described in Section \ref{Sec:Motivation}. For the sake of completeness, we also calculate the combinatorial constants in (\ref{eqn::TT}) and the sizes of the related sets.
	
	\noindent
	We firstly count the number of elements in the index sets $I_1^{(n,m)}$ and $I_2^{(n,m)}$. $\binom{n}{2}$ is the number of ways we can pick two points from the point cloud $\mathcal{X}$ and we combine this choice with $(n-2)m$ ways we can combine one of the remaining $\mathcal{X}$-points with one of the $\mathcal{Y}$-points, having 
	\begin{equation}\label{eqn::NoI}
		\#I_1^{(n,m)} =\binom{n}{2} (n-2)m.  
	\end{equation}
	Similarly, in order to choose two between-clouds distances, we can pick two points from  $\mathcal{X}$ and two points from $\mathcal{Y}$ in $\binom{n}{2}$ and $\binom{m}{2}$ and we can connect them in two ways, concluding
	\begin{equation*}
		\#I_2^{(n,m)} = \frac{1}{2}nm(n-1)(m-1). 
	\end{equation*} 
	For  $I^{(n,m)}  = I_1^{(n,m)} \cup I_2^{(n,m)}$ it is therefore
	\begin{equation*}
		\# I^{(n,m)}  = \frac{1}{2} n^3 m + n^2 m^2 - n m^2 - \frac{5}{2}n^2 m + 2nm.
	\end{equation*}
	\noindent
	We further estimate the size of the dependency neighborhoods (degree of the vertices) of $\mathbf{i} \in I_1^{(n,m)}$ and $\mathbf{j} \in I_2^{(n,m)}$ in the dependency graph $G_{n,m}$ generated by the triangular array defined in (\ref{eqn::triangle}). We notice that all the vertices of a type ($I_1^{(n,m)}$ or $I_2^{(n,m)}$) have the same degree. In both cases, two vertices $\mathbf{i}$ and $\mathbf{j}$ are connected iff $\mathbf{i} \cap \mathbf{j} \neq \emptyset$ and multiincides $\mathbf{i}$ and $\mathbf{j}$ can share up to $3$ elements. The highest number of vertices connected to $\mathbf{i}$ shares only one element of $\mathbf{i}$ and the number of connections with a larger intersection is of lesser order than this number.\\ 
	\noindent
	We  count $\# N_{n,m}(\mathbf{i})$ for $\mathbf{i} \in I_1^{(n,m)}$. Firstly we count the number of elements in the complementary set, i.e. for the fixed $\mathbf{i}$ we count the number of from $\mathbf{i}$ independent vertices in $I_1^{(n,m)}$ and in $I_2^{(n,m)}$. In the first case, there are $\binom{n-3}{2} (n-5) (m-1)$ possibilities to pick $\mathbf{j} \in I_1^{(n,m)}$ by excluding indices in $\mathbf{i}$. For $I^{(n,m)}= I_1^{(n,m)}$ it is then
	\begin{equation}\label{eqn::NoN1}
		\# N_{n,m}(\mathbf{i}) = \frac{n^3}{2} + \frac{9m n^2}{2} - 6 n^2  -\frac{45mn}{2} +\frac{47n}{2} + 30 m  -30
	\end{equation}
	We further consider $I^{(n,m)} : = I_1^{(n,m)} \cup I_2^{(n,m)}$. There are $(n-3)(m-1)(n-4)(m-2)$ possibilities to pick $\mathbf{j} \in I_2^{(n,m)}$ with $\mathbf{i} \cap \mathbf{j} = \emptyset$. Then since it is $\# N_{n,m}(\mathbf{i})  = \# I - \binom{n-3}{2} (n-5) (m-1) - (n-3)(m-1)(n-4)(m-2)$ we calculate
	\begin{equation*}
		\# N_{n,m}(\mathbf{i})  = \frac{1}{2}n^3 + \frac{13}{2}n^2 m + 6n m^2 -8 n^2 -12 m^2 - \frac{85}{2}nm + \frac{75}{2} n +66 m -54.
	\end{equation*}
	Similarly, regarding $N_{n,m}(\mathbf{i})$ for $\mathbf{i} \in I_2^{(n,m)}$, we consider the complementary set, the set of all $\mathbf{j}$ with $\mathbf{i} \cap \mathbf{j} = \emptyset$. In the case $\mathbf{j} \in I_1^{(n,m)}$ we have $\binom{n-2}{2} (n-4)(m-2)$ and in the case $\mathbf{j} \in I_2^{(n,m)}$ we have $(n-2)(m-2)(n-3)(m-3)$ possibilities. We conclude that the cardinality of the dependency neighbourhood of $\mathbf{i}\in I_2^{(n,m)}$ is
	\begin{equation}\label{eqn::NoN2}
		\# N_{n,m}(\mathbf{i})  = n^3 + 7 n^2 m + 4nm^2 -15n^2 -36nm - 6m^2 +56n + 42 m -60.
	\end{equation}
	\\
	\noindent
	We can determine the cardinality of $\mathcal{D}^{(n,m)}$ by combining all the choices of $2r$ pairwise connected points from the first group with all the choices of connections between the rest $n-2r$ points of the first group with $m$ points of the second, where $r:= \lfloor \frac{n}{3} \rfloor$. We therefore have
	\begin{equation}
		\#\mathcal{D}^{(n,m)}= \binom{n}{2r} \multiset{2r}{2} \binom{m}{n-2r} (n-2r)! 
	\end{equation}
	where $ \multiset{2r}{2}  := \Pi_{i=0}^{r-1} \binom{2(r-i)}{2}$. This can be further simplified as $\multiset{2r}{2}  = \frac{(2r)!}{2^r}$. 
	
	We proceed to calculate the number of elements in $\mathcal{D}^{(n,m)}$ as
	\begin{equation*}
		\#\mathcal{D}^{(n,m)}=\frac{n! m!}{2^r (n-2r)! (m-n+2r)!}
	\end{equation*}
	By fixing a pair of tupels $\mathbf{i}$ (i.e. by fixing some four vertices $X_{i_1}$, $X_{i_2}$, $X_{i_3}$ and $Y_{i_4}$), we analogously have 
	\begin{equation*}
		\# \mathfrak{d}_1 = \binom{n-3}{2(r-1)} \multiset{2(r-1)}{2} \binom{m-1}{n-2r-1} (n-2r-1)!
	\end{equation*}
	The constant $C_1^{(n,m)}$ simplifies to 
	\begin{equation}\label{eqn::C1}
		C_1^{(n,m)} = \frac{2(n-2r)}{mn(n-1)(n-2)}
	\end{equation}
	Similarly, by fixing two pairs $X_{i_1}$, $Y_{i_2}$, $X_{i_3}$, $Y_{i_4}$, we have
	\begin{equation*}
		\# \mathfrak{d}_2 = \binom{n-2}{2r} \multiset{2r}{2} \binom{m-2}{n-2r-2} (n-2r-2)!
	\end{equation*}
	which reduces the constant $C_2^{(n,m)}$ to 
	\begin{equation}\label{eqn::C2}
		C_2^{(n,m)}= \frac{(n-2r)(n-2r-1)}{nm(n-1) (m-1)}.
	\end{equation}
	
	\subsection{Numerical estimation of $P_1$ over distributions, metrics and dimensions}\label{sec:numerical P1}
	
	\noindent
	The probability (\ref{eqn::P1}) has been estimated by $10^6$ Monte Carlo samplings for concrete distributions, metrics and dimensions as given in Tables \ref{tab::Normal}-\ref{tab::Cauchy}. The exemplified distributions include standard normal, uniform, Gamma with shape parameter $k= 1$ and scale parameter $\theta = 2$ and Cauchy distribution with location $0$ and scale $1$. Particularly the heavy-tailed distributions were of interest, since the concentration of the distribution determines $P_1$. The dimensions are always independent. Note that the estimated $P_1$ values hold true for any scaling or translation of the examined distributions, given the independence of the dimensions, due to (\textbf{D1}), except for the Canberra distance. Canberra refers to Canberra distance function given with $d(X,Y) = \sum_{i=1}^D \frac{|X_i - Y_i|}{|X_i|+|Y_i|}$, which is of special interest.\\
	
	\begin{table}[hbt!]
		\centering
		\begin{tabular}{ |c|c|c|c|c|c| } 
			\hline
			Metric\textbackslash Dimension & 1& 2 & 5 & 10 & 50 \\ 
			\hline
			$l^1$ & 0.528276 & 0.532446 &  0.534103 & 0.534561 & 0.534962  \\ 
			\hline
			$l^2$ & 0.528205 & 0.533552 &  0.536747 & 0.538320 & 0.533520\\ 
			\hline
			$l^5$ & 0.527722 & 0.533157  & 0.535111 & 0.533912 & 0.530244\\ 
			\hline
			$l^{\infty}$ & 0.527916 & 0.531780  & 0.531696 & 0.528262 & 0.517773\\ 
			\hline
			Canberra & 0.532763 & 0.506087 & 0.505927 & 0.506346 & 0.506765 \\ 
			\hline
		\end{tabular}
		\caption{Estimated probability (\ref{eqn::P1}) for $\mathcal{N}(\mathbf{0}_D, \mathbf{I}_D)$}
		\label{tab::Normal}
	\end{table}
	
	\begin{table}[hbt!]
		\centering
		\begin{tabular}{ |c|c|c|c|c|c| } 
			\hline
			Metric\textbackslash Dimension & 1 & 2 & 5 & 10 & 50  \\ 
			\hline
			$l^1$ & 0.510712 & 0.514656 & 0.514466  & 0.515283 & 0.516251  \\ 
			\hline
			$l^2$ & 0.511553 & 0.517824 & 0.521306  & 0.522040 & 0.522902\\ 
			\hline
			$l^5$ & 0.510950& 0.519395 & 0.523481 & 0.524618 & 0.524708 \\ 
			\hline
			$l^{\infty}$ & 0.510413 & 0.518092 & 0.521787 & 0.518908 & 0.510352 \\ 
			\hline
			Canberra & 0.536694 & 0.540395 & 0.540820 & 0.541016 & 0.541079 \\ 
			\hline
		\end{tabular}
		\caption{Estimated probability (\ref{eqn::P1}) for $\mathcal{U}(0, 1)^D$}
		\label{tab:: Uniform}
	\end{table}
	
	\begin{table}[hbt!]
		\centering
		\begin{tabular}{ |c|c|c|c|c|c| } 
			\hline
			Metric\textbackslash Dimension  & 1 & 2 & 5 & 10 & 50  \\ 
			\hline
			$l^1$ & 0.536443 & 0.544270 &  0.549540 & 0.551535 & 0.552521   \\ 
			\hline
			$l^2$ & 0.535988 & 0.545402 &  0.553075  & 0.557083 & 0.562266 \\ 
			\hline
			$l^5$ & 0.536178 & 0.544418 & 0.551290 &  0.555977 &  0.561372\\ 
			\hline
			$l^{\infty}$ & 0.536259 & 0.544070 & 0.550366 & 0.552803 &  0.555821 \\ 
			\hline
			Canberra &0.532183 & 0.531522 & 0.530954 & 0.530633 & 0.530991 \\ 
			\hline
		\end{tabular}
		\caption{Estimated $P_1$ for $\Gamma(1, 2)^D$}
		\label{tab::Gamma}
	\end{table}
	
	\begin{table}[hbt!]
		\centering
		\begin{tabular}{ |c|c|c|c|c|c| } 
			\hline
			Metric\textbackslash Dimension  & 1 & 2 & 5 & 10 & 50  \\ 
			\hline
			$l^1$ & 0.556312 & 0.564395 &  0.570460 & 0.573040 & 0.576032   \\ 
			\hline
			$l^2$ & 0.557003 & 0.564242 &  0.569137  & 0.571805 & 0.572302 \\ 
			\hline
			$l^5$ & 0.555691 & 0.563025 & 0.568439 &  0.570655 &  0.570892\\ 
			\hline
			$l^{\infty}$ & 0.556125 & 0.564030 & 0.567522 & 0.569805 &  0.571181 \\ 
			\hline
			Canberra & 0.532626 & 0.507481 & 0.508951 & 0.508473 & 0.506963 \\ 
			\hline
		\end{tabular}
		\caption{Estimated $P_1$ for Cauchy distribution with location $0$ and scale $1$}
		\label{tab::Cauchy}
	\end{table}
	
	\begin{table}[hbt!]
		\centering
		\begin{tabular}{ |c|c|c|c|c|c| } 
			\hline
			Metric\textbackslash Dimension  & 1 & 2 & 5 & 10 & 50  \\ 
			\hline
			$l^1$ & 0.556344  & 0.563713 & 0.568862  & 0.571261 & 0.574518   \\ 
			\hline
			$l^2$ & 0.556311 & 0.562560 & 0.568479   & 0.570540 & 0.573249 \\ 
			\hline
			$l^5$ & 0.554668 &  0.562374 & 0.567011 &  0.568981 & 0.570871 \\ 
			\hline
			$l^{\infty}$ & 0.557274 & 0.561351 & 0.568047 & 0.568846 & 0.570900  \\ 
			\hline
			Canberra & 0.536604 &  0.540105 & 0.540209 &  0.540613 &   0.542141\\ 
			\hline
		\end{tabular}
		\caption{$P_1$ for Pareto with location $1$ and shape $1$}
		\label{tab::Pareto}
	\end{table}

	\end{document}